\documentclass[11pt,letterpaper]{article}

\usepackage{fullpage}

\usepackage[utf8]{inputenc}
\usepackage[english]{babel}
\usepackage{amsmath}
\usepackage{xspace}
\usepackage{amsthm}
\usepackage{amssymb}
\usepackage{thmtools,complexity}
\usepackage{mathtools,thm-restate}
\usepackage{tabularx}
\usepackage{todonotes}
\usepackage{multirow}
\usepackage{comment}
\usepackage{soul}
\usepackage{lineno}
\usepackage{fancyhdr}

\usepackage[hidelinks]{hyperref}
\usepackage[capitalise]{cleveref}

\usepackage[T1]{fontenc}
\usepackage{tikz}
\usetikzlibrary{arrows,calc,positioning,arrows.meta,backgrounds,fit,decorations.pathmorphing,decorations.pathreplacing, shapes.geometric, calc}

\tikzstyle{intt}=[draw,text centered,minimum size=6em,text width=5.25cm,text height=0.34cm]
\tikzstyle{intl}=[draw,text centered,minimum size=2em,text width=2.75cm,text height=0.34cm]
\tikzstyle{int}=[draw,minimum size=2.5em,text centered,text width=3.5cm]
\tikzstyle{intg}=[draw,minimum size=3em,text centered,text width=6.cm]
\tikzstyle{sum}=[draw,shape=circle,inner sep=2pt,text centered,node distance=3.5cm]
\tikzstyle{summ}=[drawshape=circle,inner sep=4pt,text centered,node distance=3.cm]

\makeatletter

\newcommand{\labeltext}[3][]{%
    \@bsphack%
    \csname phantomsection\endcsname
    \def\tst{#1}%
    \def\labelmarkup{}
    \def\refmarkup{}%
    \ifx\tst\empty\def\@currentlabel{\refmarkup{#2}}{\label{#3}}%
    \else\def\@currentlabel{\refmarkup{#1}}{\label{#3}}\fi%
    \@esphack%
}
\makeatother

\theoremstyle{plain}
\newtheorem{theorem}{Theorem}[section]

\newtheorem{lemma}[theorem]{Lemma}
\newtheorem{observation}[theorem]{Observation}
\newtheorem{cor}[theorem]{Corollary}

\newtheorem{definition}[theorem]{Definition}

\theoremstyle{remark}

\newtheorem{claim}[theorem]{Claim}
\newenvironment{claimproof}[1][\proofname{}]{%
  \begin{proof}[#1]%
}{%
  \end{proof}%
}

\usepackage[numbers]{natbib}

\usepackage{enumitem}
\setlist[enumerate]{label*=(\arabic*),
	topsep=0.25em,
	parsep=0.1em,
	itemsep=0.1em
}
\setlist[itemize]{
	topsep=0.25em,
	parsep=0.1em,
	itemsep=0.1em
}

\usepackage[most]{tcolorbox}
 \title{Solving SAT via Treewidth Backdoors in Linear Time:  \\ 
 Beyond Sparsity}

\author{%
  Jan Dreier\thanks{Hasso Plattner Institute, Potsdam, Germany},
  Robert Ganian\thanks{Algorithms and Complexity Group, TU Wien, Austria},
  Marlene Gr\"undel\footnotemark[2],
  Roohani Sharma\thanks{Discrete Mathematics Group, Institute for Basic Science (IBS), Daejeon, South Korea},
  Simon Wietheger\footnotemark[2]%
}

\date{}

\newcommand{\backdoorextproblemtw}{\textsc{Extension Weak Backdoor to Bounded Treewidth}}

\newcommand{\bdextproblemtw}{\textsc{Ext-WB-TW}\xspace}

\newcommand{\bdextproblemtwbare}{\textsc{Ext-WB-TW}\xspace}

\renewcommand{\AA}{\mathcal{A}}

\renewcommand{\EE}{\mathcal{E}}

\renewcommand{\PP}{\mathcal{P}}
\newcommand{\QQ}{\mathcal{Q}}
\newcommand{\RR}{\mathcal{R}}

\DeclareMathOperator{\tw}{tw}

\newcommand{\paramproblemdef}[4]{
    \begin{tcolorbox}[
            enhanced,
            colback=white,
            colframe=black,
            boxrule = 0.5pt,
            coltitle=black,
            title=#1,
            rounded corners,
            bottom=-0.5mm,
            attach boxed title to top left={yshift=-10pt, xshift=6pt},
            boxed title style={
                    interior style={fill=white},
                    frame hidden
                }
        ]
        \begin{tabularx}{15cm}{ r X p{0.5cm} }
            Input:     & #2        \\
            Parameter: & #3        \\
            Question:  & #4
        \end{tabularx}
    \end{tcolorbox}
}

\newcommand{\var}{\textnormal{\texttt{Var}}}
\newcommand{\clause}{\textnormal{\texttt{Clauses}}}

\definecolor{ultramarine}{rgb}{0.07, 0.04, 0.7}

\begin{document}

\maketitle

\begin{abstract}

    The Boolean Satisfiability (\textsc{SAT}) problem is a remarkably expressive computational problem, serving both as a cornerstone of complexity theory and as a unifying language for a wide range of search and reasoning tasks. 
    
    While \NP-complete in general, 
    \textsc{SAT} is well known to be linear-time solvable on instances with bounded treewidth. This motivates the search for ``hidden'' tree-like structures in formulas---for instance by detecting so-called \emph{weak $t$-treewidth backdoors}. Such backdoors are sets of $k$ variables that can be assigned in such a way that the remaining instance has treewidth at most $t$ and is satisfiable whenever the original instance is. 
    
    Ever since Gaspers and Szeider [ICALP 2012] showed that finding a weak $t$-treewidth backdoor for a \textsc{SAT} formula is $\W[2]$-hard with respect to $k$,
    tractability results have been restricted to structurally {\em sparse} formulas---most notably,
    a linear-time algorithm for $d$-\textsc{SAT} (bounded clause-length SAT) by Fomin, Lokshtanov, Misra, Ramanujan, and Saurabh [SODA 2015]
    and a non-uniform polynomial-time algorithm when the incidence graph of the input formula belongs to a nowhere-dense graph class (or more generally is $K_{d,d}$-free)
    by Lokshtanov, Panolan, and Ramanujan [ICALP 2022].

    In this paper, we push the tractability boundaries of finding weak $t$-treewidth backdoors {\em beyond sparsity} and into the regime of ``well-behaved'' dense formulas. 
    Aiming towards relaxing structural input restrictions
    from nowhere-dense to {\em edge-stable} graph classes, as featured in the theory of beyond sparsity for FO model checking,
    or to {\em bounded VC-dimension} set systems,
    we identify {\em signed semi-ladder-freeness} as the correct dense structural restriction that establishes the boundary between tractability and intractability along this line.

    Algorithmically, we concentrate on formulas whose signed incidence graph is $d$-\emph{semi-ladder-free}, which include not only all $K_{d,d}$-free graphs, but also dense graph classes such as map graphs and more generally all graph classes that can be encoded by existential positive first-order formulas in nowhere-dense classes. 
    As our main technical contribution, we provide a linear-time algorithm that, for all fixed integers $d$, $k$, and $t$, takes as input a signed $d$-semi-ladder-free \textsc{SAT} formula $\varphi$ and either outputs a satisfying assignment for $\varphi$ along with a weak $t$-treewidth backdoor of size at most $k$, or detects that no such backdoor exists. 

    Complementing the above positive result, we provide \W[2]- and \W[1]-hardness results when the input is restricted to unsigned $d$-semi-ladder-freeness and signed $d$-ladder-freeness (edge-stable), respectively.
    This identifies signed $d$-semi-ladder-freeness as the natural tractability frontier for detecting weak $t$-treewidth backdoor sets in dense \textsc{SAT} instances, when trying to approach edge-stable graph classes or bounded VC-dimension set systems.

\end{abstract}

\section{Introduction}
\label{sec:intro}

The Boolean Satisfiability (\textsc{SAT}) problem is undeniably one of the most fundamental problems in computer science as it lies at the heart of the classical theory of $\NP$-completeness as well as of the Exponential Time Hypothesis and its variants. 
Given the theoretical hardness of \textsc{SAT}, the surprising efficiency with which real-world solvers can exploit hidden structural properties to solve \textsc{SAT} instances~\cite{sathandbook} has motivated a long line of research aimed at understanding the more fine-grained boundaries of tractability for the problem. Nowadays, this is often pursued via the framework of parameterized complexity~\cite{CyganFKLMPPS15}. 
A classical result in this domain is that \textsc{SAT} is fixed-parameter tractable (FPT) with respect to the treewidth of the instance, or more specifically, the treewidth of its incidence graph~\cite{SamerS10}\footnote{The incidence graph of a formula $\varphi$ contains a vertex for each variable and clause of $\varphi$, with edges connecting clauses with variables whose (positive or negative) literal occurs in that clause.}. 

Naturally, a \textsc{SAT} instance need not have bounded treewidth---yet it may still possess a ``hidden'' tree-like structure that reveals itself once a small number of variables is assigned. 
This phenomenon is typically viewed through the lens of \emph{backdoors}, introduced by Williams, Gomes, and Selman~\cite{WilliamsGS03}. 
A set $X$ of $k$ variables is a \emph{strong $t$-treewidth backdoor}
if every assignment of $X$ results in an instance of treewidth at most $t$; 
this means that for any fixed $k$ and $t$, we can solve \textsc{SAT} in linear time when provided with such a strong $t$-treewidth backdoor. This motivates the study of computing such backdoors of size at most~$k$.

Due to a classical result of Gaspers and Szeider~\cite{GaspersS13}, it is known that \textsc{SAT} is fixed-parameter tractable w.r.t.\ the size $k$ 
of a strong $t$-treewidth backdoor---in particular, for every fixed~$k$ and $t$, we can in cubic time either compute a strong $t$-treewidth backdoor of size bounded by a function of $k$, or determine that no small strong $t$-treewidth backdoor exists. 

A natural drawback of strong backdoors, as far as using them for solving \textsc{SAT} is concerned,
is that
they require \emph{every} partial assignment of $X$ to yield a bounded-treewidth instance. 
In practice, to obtain a satisfying assignment, we merely need a set $X$ of $k$ variables such that there is \emph{at least one} assignment $\alpha:X\rightarrow \{0,1\}$ under which the given formula $\varphi$ is satisfiable and of treewidth at most~$t$, provided that the unrestricted $\varphi$ is satisfiable. 
Such a set $X$ is called a \emph{weak $t$-treewidth backdoor}. 
Intuitively, while strong backdoors act as witnesses for tractability, weak backdoors can be seen as direct witnesses for satisfiability.

Unfortunately,
in contrast to strong backdoors,
 finding even a weak backdoor to treewidth 1
 for an arbitrary formula $\varphi$ is $\W[2]$-hard with respect to $k$~\cite{DBLP:conf/icalp/GaspersS12}.
A natural response to this hardness is to restrict the structure of $\varphi$; in fact, 
subsequent research has established that finding weak $t$-treewidth backdoors becomes tractable when the input formulas are {\em structurally sparse}. In particular, 
Fomin, Lokshtanov, Misra, Ramanujan, and Saurabh~\cite{DBLP:conf/soda/FominLMRS15} gave the first fixed-parameter algorithm for solving $d$-\textsc{SAT} via weak $t$-treewidth backdoors by showing that such backdoors can be computed in linear time for every fixed choice of $k$, $t$, and $d$. 
More recently, Lokshtanov, Panolan, and Ramanujan~\cite{DBLP:conf/icalp/LokshtanovP022} expanded this frontier via an $f(k,t)\cdot |\varphi|^{\mathcal{O}(d)}$ 
algorithm for computing weak $t$-treewidth backdoors on any $K_{d,d}$-free \textsc{SAT} instance $\varphi$. This highly general result applies
to all previously studied sparse graph classes, including graphs of bounded degree, bounded-genus graphs, and nowhere-dense graph classes.

Given the above results, a natural follow-up question is whether the tractability of \textsc{SAT} via weak $t$-treewidth backdoors can be pushed beyond sparsity into the regime of ``well-behaved'' dense formulas. 
In our work, we answer this question positively by giving a linear-time algorithm for solving \textsc{SAT} via weak $t$-treewidth backdoors on instances that are {\em signed $d$-semi-ladder-free}.
In the next paragraph, we define and motivate the study of these graph classes.

\subparagraph{Beyond Sparsity.}
Our search for tractable structures in the dense regime is inspired by the line of work that lifts first-order (FO) model checking from nowhere dense classes to their dense counterparts (see~\cite{DBLP:journals/corr/abs-2501-04166} for a survey).
A recurring combinatorial notion in this line of work is the {\em ladder}.
A {\em $d$-ladder}, also known as a half-graph, is a graph with vertices $a_1, \ldots, a_d$ on one side and vertices $b_1, \ldots, b_d$ on the other, and there is an edge $\{a_i,b_j\}$ if and only if $i \leq j$.
We refer to $d$ as the height of the ladder.
A graph class is {\em edge-stable} if it avoids some $d$-ladder as a semi-induced subgraph, and much of the recent progress on FO model checking beyond sparsity concerns dense classes whose structure is governed by the absence of ladders in logically defined auxiliary graphs (monadic stability)~\cite{DBLP:conf/focs/DreierEMMPT24}.\footnote{We note that our problem cannot be expressed as an FO model checking problem.}
Excluding ladders in the input graph itself, however, is a purely combinatorial condition that is considerably more general than such logically defined conditions, and it is natural to ask whether it already suffices for the detection of weak $t$-treewidth backdoors.
It does not: we show in Theorem~\ref{thm:lowerbounds-ladderfree} that the problem is \W[1]-hard on $d$-ladder-free, that is, edge-stable, instances.

{\em Semi-ladders} arise as a natural relaxation of the notion of ladders. 
While a ladder prescribes both the forced edges and the forced non-edges between the two ordered sides of a bipartition, a semi-ladder keeps only the forced edges and additionally requires horizontal non-edges;
that is, 
we have the edge $\{a_i,b_j\}$ whenever $i <j$, and
the non-edges $\{\{a_1,b_1\}, \ldots, \{a_d,b_d\}\}$. See Figure~\ref{fig:ladders} for a visualization of both notions.
Thus, when excluding a semi-ladder rather than an induced ladder (half-graph), the ``lower triangle'' is left unconstrained. Notice that since every ladder contains a semi-ladder (up to $\pm 1$ in the height) but the converse does not hold, the class of $d$-ladder-free graphs is strictly more general than the class of $d$-semi-ladder-free graphs, making the latter a promising candidate for delineating the boundary of efficient weak backdoor detection.

The relevance of semi-ladders can also be understood through the {\em Helly property}---see for example~\cite{DBLP:conf/stacs/FabianskiPST19}.  
In a bipartite incidence graph $G=(L,R)$, 
where vertices of $L$ represent possible choices and vertices of~$R$ represent constraints, 
a Helly-type property says that whenever a set of constraints cannot be covered by a single choice from $L$, this failure is already witnessed by a bounded-size subset of constraints. Equivalently, the family of candidate sets $N(b)\cap L$, $b \in R$, has bounded-size certificates for empty intersection. 
The obstruction to such a strong Helly property is precisely a large co-matching~\cite{DBLP:conf/stacs/FabianskiPST19}. 
Thus, since bounded semi-ladder index is equivalent, up to Ramsey bounds, to simultaneously bounding the ladder height and co-matching index~\cite{DBLP:conf/stacs/FabianskiPST19}, 
semi-ladder-freeness can be viewed as the conjunction of stability (excluding order-like half-graphs) and a Helly-type finite-obstruction principle (excluding large co-matchings).
This makes it especially natural for domination, hitting, and \textsc{SAT}-backdoor problems, where the algorithmic task is to cover constraints by a small number of selected vertices or literals. Moreover, fixed-radius neighborhood relations in nowhere-dense graph classes have bounded semi-ladder index, showing that semi-ladder-freeness captures a structural phenomenon already present in sparsity theory while extending it to dense graph classes such as map graphs~\cite{DBLP:conf/stacs/FabianskiPST19} and existential-positive first-order interpretations in nowhere-dense graph classes~\cite{MahlS26}.

Unfortunately, it turns out that simply excluding $d$-semi-ladders in the instance still yields \W[2]-hardness for the problem of detecting weak $t$-treewidth backdoor sets, as we will show in Theorem~\ref{thm:lowerbounds-unsigned}.
The reason here, though, is subtle.
The hardness occurs because the incidence graph only captures the complexity of a ``monotone'' \textsc{SAT} instance. That is, the edges of the incidence graph do not account for the polarity in which literals occur in clauses.

One way to circumvent this issue, which is
in line with past research on algorithms for dense \textsc{SAT}~\cite{FischerMR08,SlivovskyS13,GanianHO13,PaulusmaSS16,WallonM20}, is to consider the {\em signed incidence graph} of the input formula. Here, every edge between a vertex that corresponds to a clause and a vertex that corresponds to a variable carries a positive or negative sign depending on how this variable appears in the clause (cf.\ Figure~\ref{fig:signed_unsigned}).
Then, the notion of a signed $d$-semi-ladder-free instance means that a $d$-semi-ladder is not present in the incidence graph restricted to positive edges and in the incidence graph restricted to negative edges (cf.\ Figure~\ref{fig:ladders}).
It is not difficult to show that this is, in fact, equivalent (up to a factor of $2$ and an additive constant) to excluding the $d$-semi-ladder in the edge-unlabeled graph, where each variable is represented by two vertices: one for its positive and the other for its negative occurrences.
This notion of signed
$d$-semi-ladder-freeness turns out to be exactly the right condition for making the detection of weak $t$-treewidth backdoors tractable. Proving this constitutes the main technical contribution of our paper (cf.\ Theorem~\ref{thm:main}).

On a different note,
one can also observe that the (unsigned) incidence graph of a \textsc{SAT} instance 
is precisely the incidence graph of the set system defined via the occurrence of variables in clauses.
This viewpoint makes it natural to turn to structural parameters that have been widely studied for set systems and try to exploit them for solving \textsc{SAT}.
One such parameter that has been employed in the study of set systems is the \emph{VC-dimension}~\cite{vC71,DBLP:journals/jct/Sauer72,Shelah72}; it is well known (and also fairly easy to observe) that bounding the height of the largest (semi-)ladder in the incidence graph also bounds the system's VC-dimension. 
In this sense, semi-ladders are a natural next step from both the graph-theoretic and set-theoretic viewpoints on \textsc{SAT}.

\subparagraph{Our Contribution.} 
As elaborated earlier, in an attempt to relax structural input restrictions
 from nowhere-dense graph classes to edge-stable graph classes or
 to bounded VC-dimension set systems, 
    we identify {\em signed $d$-semi-ladder-freeness} as the correct dense structural restriction that delineates the boundary of tractability.
We establish the following theorem.

\begin{restatable}{theorem}{findingBackdoor}
\label{thm:main}
For all non-negative integers $d,k$, and $t$, there is a linear-time algorithm that takes as input a signed $d$-semi-ladder-free \textsc{SAT} instance $\varphi$ and either:
\begin{itemize}
\item outputs a satisfying assignment for $\varphi$ along with a weak $t$-treewidth backdoor of size at most~$k$, or
\item detects that $\varphi$ contains no weak $t$-treewidth backdoor of size at most $k$.
\end{itemize}

\end{restatable}
As highlighted in the lower part of Figure~\ref{fig:classes} and discussed in the next paragraph, Theorem~\ref{thm:main} generalizes the linear-time fixed-parameter algorithm for \textsc{SAT} via weak $t$-treewidth backdoors on $d$-\textsc{SAT}~\cite{DBLP:conf/soda/FominLMRS15} to a more general class of instances. It also improves upon the more recent fixed-parameter algorithm via weak $t$-treewidth backdoors on $K_{d,d}$-free \textsc{SAT}, both by applying to an even broader class of instances and by avoiding the superlinear factor of $n^{\mathcal{O}(d)}$ in the running time. 

We remark that a weak $0$-treewidth backdoor corresponds to a partial assignment that satisfies all clauses. As a special case, we thus obtain the following corollary.
\begin{cor}\label{cor:zero-tw-backdoor}
    For all non-negative integers $d$ and $k$, there is a linear-time algorithm which takes as input a signed $d$-semi-ladder-free \textsc{SAT} instance $\varphi$ and either:
    \begin{itemize}
    \item outputs a partial assignment of at most $k$ variables satisfying all clauses in $\varphi$, or
    \item reports that no such partial assignment exists.
    \end{itemize}
\end{cor}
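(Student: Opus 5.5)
The corollary is the case $t=0$ of Theorem~\ref{thm:main}, so the plan is a reduction to that theorem. The only work is to show that a weak $0$-treewidth backdoor of size at most $k$ is the same thing as a partial assignment of at most $k$ variables that satisfies every clause.

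First I would fix the convention for applying an assignment $\alpha\colon X\to\{0,1\}$ to $\varphi$. Satisfied clauses are deleted, falsified literals are removed from the remaining clauses, and the variables of $X$ disappear.

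\emph{Forward direction.} Let $X$ be a weak $0$-treewidth backdoor with witnessing assignment $\alpha$, so $\varphi[\alpha]$ is satisfiable and its incidence graph has treewidth $0$, i.e.\ no edges.
\begin{itemize}
\item Every remaining clause is therefore empty, since a nonempty clause would contribute a clause--variable edge.
\item Satisfiability of $\varphi[\alpha]$ rules out empty clauses.
\item Hence $\varphi[\alpha]$ has no clauses at all, which means $\alpha$ satisfies every clause of $\varphi$ using at most $k$ variables.
\end{itemize}

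\emph{Converse.} If a partial assignment $\alpha$ on a set $X$ with $|X|\le k$ satisfies all clauses, then $\varphi[\alpha]$ is the empty formula. It is trivially satisfiable, and its incidence graph consists only of isolated variable vertices, so its treewidth is $0$. Thus $X$ is a weak $0$-treewidth backdoor of size at most $k$.

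\emph{The algorithm.} Run the algorithm of Theorem~\ref{thm:main} with parameters $d$, $k$ and $t=0$; it runs in linear time for fixed $d$ and $k$.
\begin{itemize}
\item If it reports that no weak $0$-treewidth backdoor of size at most $k$ exists, then by the converse direction no partial assignment of at most $k$ variables satisfies all clauses, and we report this.
\item Otherwise it returns a satisfying assignment $\tau$ and a backdoor $X$ with $|X|\le k$. We output $\tau|_X$.
\end{itemize}

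The one step I expect to need care is this last one. The theorem guarantees that $X$ is a backdoor, but not that $\tau$ agrees on $X$ with the witnessing assignment $\alpha$. So I would directly check whether $\tau|_X$ satisfies all clauses, which takes linear time.

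If the check fails, I would instead recover $\alpha$ by brute force over all $2^{|X|}\le 2^k$ assignments of $X$. Each candidate is tested in linear time, so the total cost is $2^k\cdot O(|\varphi|)$, still linear for fixed $k$. The forward direction guarantees that some candidate satisfies all clauses, and we output it.
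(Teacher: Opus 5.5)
Your proposal is correct and matches the paper, which obtains the corollary as the special case $t=0$ of Theorem~\ref{thm:main} via the same observation that a weak $0$-treewidth backdoor is a partial assignment satisfying all clauses. Your check of $\tau|_X$ with a brute-force fallback over the $2^{|X|}$ assignments is sound but unnecessary, since the paper treats a backdoor as a set of literals encoding its witnessing assignment, and the satisfying assignment it outputs extends that literal set.
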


On a related note, the tractability established by Corollary~\ref{cor:zero-tw-backdoor} contrasts with a problem known as {\sc Minimum Weight SAT}. 
In {\sc Minimum Weight SAT}, given a \textsc{SAT} formula and a positive integer $k$, the goal is to find a satisfying assignment that sets at most $k$ variables to true, or correctly determine that no such assignment exists. Interestingly, one can show that this problem is \W[1]-hard even on $K_{2,2}$-free instances, and hence when excluding (signed) semi-ladders of constant height (see Lemma~\ref{lem:min-wt-sat} in the Appendix). 
\begin{figure}[!htb]
    \centering
    \scalebox{0.9}{\begin{tikzpicture}[
    every node/.style={align=center, font=\small},
    tractfill/.style={fill=green!18, draw=green!55!black, thick},
    hardline/.style={draw=red!65!black, thick},
    leader/.style={->, thin, gray!55!black, shorten >=2pt, shorten <=2pt},
  ]
  \fill[red!18] (0,4.2) ellipse [x radius=4.3cm, y radius=4.3cm];
  % --- green nest (slim), drawn over the red ---
  \draw[tractfill] (0,2.9) ellipse [x radius=2.5cm, y radius=2.9cm]; % c3
  \draw[tractfill] (0,2.0) ellipse [x radius=2.0cm, y radius=2.0cm]; % c2
  \draw[tractfill] (0,1.2) ellipse [x radius=1.55cm, y radius=1.2cm]; % c1
  % --- red outlines on top ---
  \draw[hardline, rotate around={ 20:(-0.5,3.4)}] (-0.5,3.4) ellipse [x radius=3cm, y radius=3.5cm]; % E
  \draw[hardline, rotate around={-20:(0.5,3.4)}] (0.5,3.4) ellipse [x radius=3cm, y radius=3.5cm]; % S
  \draw[hardline] (0,4.2) ellipse [x radius=4.3cm, y radius=4.3cm];
  % --- inner labels ---
  \node[text width=2.8cm] at (0,1.2)
    {bounded \\ clause size\\(\mbox{$d$-SAT}, \cite{DBLP:conf/soda/FominLMRS15})};
  \node[text width=3.2cm] at (0,3.1)
    {bounded \\ biclique size\\($K_{d,d}$-free, \cite{DBLP:conf/icalp/LokshtanovP022})};
  \node[text width=3.8cm] at (0,4.7)
    {{bounded signed semi-ladder height}\\\textbf{(Thm.~\ref{thm:main})}};
  \node[text width=3.2cm, anchor=east] (lblE) at (-5,5.4)
    {{bounded unsigned semi-ladder height}\\\textbf{(Thm.~\ref{thm:lowerbounds-unsigned})}};
  \draw (lblE.east) -- (-2.4,5.4);
  % --- S label: outside right, leader into the right lobe ---
  \node[text width=3.2cm, anchor=west] (lblS) at (5,5.4)
    {{bounded signed ladder height}\\\textbf{(Thm.~\ref{thm:lowerbounds-ladderfree})}};
  \draw (lblS.west) -- (2.4,5.4);
  % --- top label ---
  \node[text width=3.2cm] at (0, 7.6) {bounded VC-\\dimension};
\end{tikzpicture}}
      \caption{The complexity of finding weak $t$-treewidth backdoors in increasingly more general formulas. Classes that are fixed-parameter tractable with respect to the size of the backdoor are highlighted in green, intractable classes in red. Most of the depicted inclusions are well-known. Additionally, Lemma~\ref{lem:unsigned_signed_correspondence} proves that every formula with bounded signed semi-ladder height also has bounded unsigned semi-ladder height.}
    \label{fig:classes}
\end{figure}

\begin{figure}[!htb]
    \centering
   \scalebox{0.8}{ \begin{tikzpicture}[
    every node/.style={circle,draw,fill=white,inner sep=2pt},
    yscale=1.1, xscale=1.6
  ]
  %Semi Ladder
  % vertices on the left (a) and right (b)
  \foreach \i in {1,2,3,4}{
    \node (a\i) at (0,5-\i) {};
    \node (b\i) at (2,5-\i) {};
  }
  % labels
  \foreach \i in {1,2,3,4}{
    \node[draw=none,fill=none,left=4pt of a\i] {$a_{\i}$};
    \node[draw=none,fill=none,right=4pt of b\i] {$b_{\i}$};
  }
  % dashed horizontal guide lines a_i -- b_i
  \foreach \i in {1,2,3,4}{
    \draw[dashed,gray] (a\i) -- (b\i);
  }
  % solid edges
  \draw (a1) -- (b2);
  \draw (a1) -- (b3);
  \draw (a1) -- (b4);
  \draw (a2) -- (b3);
  \draw (a2) -- (b4);
  \draw (a3) -- (b4);

    %Ladder
% vertices on the left (a) and right (b)
  \foreach \i in {1,2,3,4}{
    \node (a\i) at (0+4,5-\i) {};
    \node (b\i) at (2+4,5-\i) {};
  }
  % labels
  \foreach \i in {1,2,3,4}{
    \node[draw=none,fill=none,left=4pt of a\i] {$a_{\i}$};
    \node[draw=none,fill=none,right=4pt of b\i] {$b_{\i}$};
  }

    \draw[dashed,gray] (a2) -- (b1);
    \draw[dashed,gray] (a3) -- (b1);
    \draw[dashed,gray] (a3) -- (b2);
    \draw[dashed,gray] (a4) -- (b1);
    \draw[dashed,gray] (a4) -- (b2);
    \draw[dashed,gray] (a4) -- (b3);
  
  % solid edges
  \draw (a1) -- (b1);
  \draw (a1) -- (b2);
  \draw (a1) -- (b3);
  \draw (a1) -- (b4);
  \draw (a2) -- (b2);
  \draw (a2) -- (b3);
  \draw (a2) -- (b4);
  \draw (a3) -- (b3);
  \draw (a3) -- (b4);
  \draw (a4) -- (b4);
\end{tikzpicture}}
      \caption{A $4$-semi-ladder (left) and a $4$-ladder (right). Here, solid lines indicate edges and dashed lines represent non-edges. Vertices that are not connected by a solid or dashed line may be either adjacent or non-adjacent.
      }
      \label{fig:ladders}
      \vspace{0.5cm}
      \scalebox{0.8}{\begin{tikzpicture}[
    every node/.style={circle,draw,fill=white,inner sep=2pt},
    yscale=1.1, xscale=1.6
  ]
  %Semi Ladder
  % vertices on the left (a) and right (b)
  \foreach \i in {1,2,3,4,5,6}{
    \node (a\i) at (\i,0) {};
    \node (b\i) at (\i,2) {};
  }
  % labels
  \foreach \i in {1,2,3,4,5,6}{
    \node[draw=none,fill=none,below=4pt of a\i] {$a_{\i}$};
    \node[draw=none,fill=none,above=4pt of b\i] {$b_{\i}$};
  }
  % dashed horizontal guide lines a_i -- b_i
  \foreach \i in {1,2,3,4,5,6}{
    \draw[dashed,gray] (a\i) -- (b\i);
  }
  % solid edges
  \foreach \i in {2,3}{
    \draw (a1) -- (b\i);
  }
   \draw (a2) -- (b3);

  \foreach \i in {5,6}{
    \draw (a4) -- (b\i);
  }
\draw (a5) -- (b6);

 \foreach \i in {4,5,6}{
    \draw[line width=1.2pt, draw=red] (a1) -- (b\i);
  }
 \foreach \i in {4,5,6}{
    \draw[line width=1.2pt, draw=red] (a2) -- (b\i);
  }
\foreach \i in {4,5,6}{
    \draw[line width=1.2pt, draw=red] (a3) -- (b\i);
}

\end{tikzpicture}}
      \caption{A $6$-semi-ladder contains a $K_{3,3}$ as a subgraph.}
    \label{fig:ladders_and_kdd}
    \vspace{0.5cm}
    \begin{tikzpicture}[
    every node/.style={font=\small},
    var/.style={                 % variable / literal vertex
        draw,
        circle,
        minimum size=0.6cm,
        thick,
        fill=violet!15,
        inner sep=0.5pt
    },
    cls/.style={                 % clause vertex
        draw,
        rectangle,
        minimum size=0.6cm,
        thick,
        fill=teal!18,
        inner sep=2pt
    },
    edg/.style={thick},                      % opposite-literal edge
    posi/.style={thick, green!55!black},              % positive (signed) edge
    neg/.style={thick, red!70!black},                % negative (signed) edge
    poslbl/.style={font=\scriptsize, green!45!black, inner sep=1pt},
    neglbl/.style={font=\scriptsize, red!70!black, inner sep=1pt},
    title/.style={font=\normalsize, align=center, anchor=north}
]

    % ============ panel 1: UNSIGNED ============
    \begin{scope}[shift={(0,0)}]
        \node[var] (x1) at (0.0,0) {$x_1$};
        \node[var] (x2) at (1.4,0) {$x_2$};
        \node[var] (x3) at (2.8,0) {$x_3$};
        \node[cls] (C1) at (0.5,1.8) {$C_1$};
        \node[cls] (C2) at (2.0,1.8) {$C_2$};

        \draw[edg] (C1) -- (x1);
        \draw[edg] (C1) -- (x2);
        \draw[edg] (C2) -- (x1);
        \draw[edg] (C2) -- (x2);
        \draw[edg] (C2) -- (x3);

        \node[title] at (1.4,-0.8) {unsigned: $G_\varphi^{\mathrm{unsign}}$};
    \end{scope}

    % ============ panel 2: SIGNED, edge-labeled ============
    \begin{scope}[shift={(4.5,0)}]
        \node[var] (sx1) at (0.0,0) {$x_1$};
        \node[var] (sx2) at (1.4,0) {$x_2$};
        \node[var] (sx3) at (2.8,0) {$x_3$};
        \node[cls] (sC1) at (0.5,1.8) {$C_1$};
        \node[cls] (sC2) at (2.0,1.8) {$C_2$};

        \draw[posi] (sC1) -- node[poslbl, left]     {$+$} (sx1);
        \draw[posi] (sC1) -- node[poslbl, pos=0.35, left] {$+$} (sx2);
        \draw[neg] (sC2) -- node[neglbl, below=1mm, pos=0.3]  {$-$} (sx1);
        \draw[neg] (sC2) -- node[neglbl, right]    {$-$} (sx2);
        \draw[posi] (sC2) -- node[poslbl, right]    {$+$} (sx3);

        \node[title] at (1.4,-0.8) {signed, edge-labeled: $G_\varphi^{\mathrm{signE}}$};
    \end{scope}

    % ============ panel 3: SIGNED, two literal vertices per variable ============
    \begin{scope}[shift={(9.0,0)}]
        \node[var] (lx1)  at (0.0,0) {$x_1$};
        \node[var] (lnx1) at (0.8,0) {$\overline{x_1}$};
        \node[var] (lx2)  at (1.7,0) {$x_2$};
        \node[var] (lnx2) at (2.5,0) {$\overline{x_2}$};
        \node[var] (lx3)  at (3.4,0) {$x_3$};
        \node[var] (lnx3) at (4.2,0) {$\overline{x_3}$};
        \node[cls] (lC1) at (0.85,1.8) {$C_1$};
        \node[cls] (lC2) at (2.17,1.8) {$C_2$};

        % opposite-literal edges (perfect matching)
        \draw[edg] (lx1) -- (lnx1);
        \draw[edg] (lx2) -- (lnx2);
        \draw[edg] (lx3) -- (lnx3);

        % incidence edges (polarity encoded by which literal vertex)
        \draw[edg] (lC1) -- (lx1);
        \draw[edg] (lC1) -- (lx2);
        \draw[edg] (lC2) -- (lnx1);
        \draw[edg] (lC2) -- (lnx2);
        \draw[edg] (lC2) -- (lx3);

        \node[title] at (2.05,-0.8) {signed, literal vertices: $G_\varphi^{\mathrm{signV}}$};
    \end{scope}
\end{tikzpicture}
    \caption{The distinct incidence graphs of the formula $\varphi=(x_1\vee x_2)\,\wedge\,(\overline{x_1}\vee\overline{x_2}\vee x_3)$. }
    \label{fig:signed_unsigned}
\end{figure}

Crucially, we complement Theorem~\ref{thm:main} by showing that signed semi-ladder-freeness marks the tractability barrier of finding weak $t$-treewidth backdoors---see the top part of Figure~\ref{fig:classes}. More precisely, we first show in Theorem~\ref{thm:lowerbounds-unsigned} that weak $t$-treewidth backdoor detection is \W[2]-hard if we merely exclude $d$-semi-ladders in the incidence graph---i.e., without taking the edge signs into account. Second, in Theorem~\ref{thm:lowerbounds-ladderfree} we prove that weak $t$-treewidth backdoor detection is \W[1]-hard on \textsc{SAT} instances that are signed ladder-free (as opposed to semi-ladder-free). In fact, both lower bounds hold even if $d$ is a small constant and apply to every fixed choice of $t\geq 0$.

\subparagraph{Related Work on Backdoors.}
The study of backdoor sets has developed into an active line of research for understanding hidden structures in satisfiability problems that can be exploited to obtain tractable algorithms.
The general backdoor paradigm was introduced by Williams, Gomes, and Selman~\cite{WilliamsGS03}.
Building on this idea of identifying a small set of variables whose assignment reduces the instance to a tractable class, the literature has considered backdoors to both syntactic classes (such as Horn- and $2$-\textsc{SAT}~\cite{DBLP:conf/sat/NishimuraRS04}, affine~\cite{DBLP:journals/jcss/GaspersMOSZ17}, nested~\cite{DBLP:conf/sat/GaspersS12}, and $q$-Horn formulas~\cite{DBLP:journals/algorithmica/GaspersORSS16}) and to structural classes (such as acyclic formulas~\cite{DBLP:conf/icalp/GaspersS12} and more generally formulas of bounded incidence treewidth~\cite{GaspersS13}). 
Surveys, including a very recent one, by Gaspers and Szeider~\cite{DBLP:journals/csr/GaspersS26, DBLP:conf/birthday/GaspersS12} emphasize that the parameterized complexity of detecting such backdoors has become a broad research program, including variants for strong and weak backdoors, structural backdoors, and extensions beyond classical \textsc{SAT}. 

This activity is ongoing: recent work on quantified Boolean formulas (\textsc{QBF}s) studied backdoors to Horn, 2-\textsc{SAT}, and affine \textsc{QBF}s, showing that the \textsc{QBF} setting is substantially more delicate than \textsc{SAT}---\textsc{QBF} remains \PSPACE-hard even for formulas with constant-size standard backdoors---while introducing enhanced backdoors that recover \textsc{FPT} algorithms under additional structural control such as bounded quantifier depth and bounded incidence treewidth~\cite{DBLP:journals/corr/abs-2604-15927}.

The problem of detecting weak $t$-treewidth backdoor sets for \textsc{SAT}, the one considered in this work, 
 is shown not to admit a polynomial kernel parameterized by $k$, even for $d$-\textsc{SAT} when $d \geq 4$ and $t$ is fixed, under reasonable complexity assumptions~\cite{DBLP:conf/soda/FominLMRS15}.

\subparagraph{Related Work on $d$-semi-ladder-free Graphs.}
Fabianski et al.~\cite{DBLP:conf/stacs/FabianskiPST19}
introduced semi-ladder-freeness in the context of their
progressive exploration framework, obtaining FPT
algorithms for domination problems on several graph classes.
Subsequently, Guillemot~\cite{DBLP:journals/toct/Guillemot25}
established FPT and kernelization
results for \textsc{Set Cover} on $d$-semi-ladder-free hypergraphs. 
A certain part of the algorithm that underlies our main result is inspired by the latter work; we indicate the respective part of the algorithm below.

\section{Technical Overview}\label{sec:tech-overview}

We now present the key ideas and challenges towards proving Theorem~\ref{thm:main}.

\subparagraph{The General Procedure.}
The algorithm that Theorem~\ref{thm:main} postulates is a branching algorithm that orchestrates three major components in an iterative fashion.
We call these components \texttt{reducer}, \texttt{extractor}, and \texttt{progressor}.
 The pipeline that comprises all three components is illustrated in Figure~\ref{fig:one_iteration}.
In broad strokes, the \texttt{extractor}
samples a set $V'$ of variables and clauses 
of bounded size
such that every weak backdoor~$X$, if there is any, intersects $V'$ or satisfies a clause in $V'$ via a partial assignment witnessing its weak backdoor property.
The \texttt{extractor} may not be able to achieve the above on general instances, which is why the instances are pre-processed by the \texttt{reducer} component, which ensures that an instance exhibits a certain structure that guarantees the success of the \texttt{extractor}.
Finally, the \texttt{progressor} takes as input a variable or a clause that is guaranteed to be in the weak backdoor or be satisfied by it, respectively, and eliminates it from the instance, thus ensuring that some progress towards finding a weak backdoor is made. 

To motivate the choice of these modules and their interplay, we first sketch the general procedure for finding weak $t$-treewidth backdoors on instances of $d$-\textsc{SAT} and $K_{d,d}$-free \textsc{SAT} as proposed in~\cite{FominLMRS15_uib} and~\cite{DBLP:conf/icalp/LokshtanovP022}, respectively. 
These works operate on the \emph{unsigned} incidence graph of the respective \textsc{SAT} instances. 

Thus, let $G$ be the unsigned incidence graph of a \textsc{SAT} formula $\varphi$ and let~$X$ be a hypothetical weak $t$-treewidth backdoor of $\varphi$ of size at most $k$. Then, in particular, the treewidth of the residual graph~$G-N_G[X]$ is at most $t$, where $N_G[X]$ denotes the closed neighborhood of~$X$ in~$G$.\footnote{Equivalently, one can say that $N_G[X]$ is a modulator to treewidth $t$ in $G$.} Observe that if the size of $N_G[X]$ were bounded by some function of $k$, then the original instance would have a treewidth of size $f(k,t)$ (where $f$ is some computable function) and we could immediately apply Courcelle's Theorem~\cite{Cou90}. 

Unfortunately, the size of $N_G[X]$ may be large; in fact, this may be the case even if $\varphi$ is a formula of bounded clause size, since the variables in $X$ may appear in many clauses. 
Hence, instead of relying on the size of $N_G[X]$ to be bounded, we further observe that if a ``good'' fraction of the edges of $G$ is not incident on $N_G[X]$, then there are large subgraphs in $G-N_G[X]$, which have treewidth at most~$t$ and a ``boundary'' of size $\mathcal{O}(t)$ to the rest of the graph.
Such subgraphs are called $\mathcal{O}(t)$-protrusions in the literature.
Essentially, the task of our \texttt{reducer} is to compress large $\mathcal{O}(t)$-protrusions in the graph $G$, while preserving the satisfiability status of the instance.

After the instance has been processed by the \texttt{reducer}, a random uniform sampling of an edge of the graph and then a random selection of an end-point of this edge gives a vertex in $N_G[X]$ with high probability. This random sampling---in fact, a deterministic version of it---is the task of the \texttt{extractor}.

Finally, the \texttt{progressor} gets a vertex that is guaranteed to be in $N_G[X]$ with ``high probability'' (this is simulated by branching in the case where the \texttt{extractor} is a deterministic algorithm).
If this vertex corresponds to a variable then the \texttt{progressor} chooses to pick it in the hypothetical weak backdoor~$X$,
guesses an assignment to it, which is compatible with a satisfying assignment of the formula, and then removes all the clauses that are satisfied by this assignment of the chosen vertex. In this case, the \texttt{progressor} made progress by dropping the measure $k$ by $1$.

The interesting case is when the \texttt{extractor} produces a vertex which happens to be a clause vertex in $N_G[X]$. 
In this case, we know that the hypothetical weak backdoor contains a literal that is one of the literals of this clause.
If the input formula was a $d$-\textsc{SAT} formula, then the length of each clause is bounded by $d$, and therefore one can branch on which literal in this clause is in $X$ and then proceed to delete it, delete all clauses it satisfies and decrease $k$ by $1$. 
This is roughly how the algorithm of Fomin et al.~\cite{FominLMRS15_uib} for $d$-\textsc{SAT} operates.
The non-trivial scenario arises when the clauses have unbounded length, which prevents this final branching of guessing which literal of the clause satisfies it in the weak backdoor $X$.

The key obstacle to make this procedure work on instances beyond $d$-\textsc{SAT}, 
is having a notion of a bounded measure that can still drop while eliminating a sampled clause from $N_G[X]$. We elaborate on this main obstacle below while highlighting the impact of this in the interplay between our three components \texttt{reducer}, \texttt{extractor}, and \texttt{progressor}.
Also note that the \texttt{reducer}
and the \texttt{extractor} function irrespectively of the structure of the input formula.
It is the \texttt{progressor} that needs additional structure on the input, in order to make ``progress''.

The earlier work of Lokshtanov et al.~\cite{DBLP:conf/icalp/LokshtanovP022} for $K_{d,d}$-free \textsc{SAT} also encounters this challenge and manages to resolve it, in a way, by waiting for several rounds of the ``\texttt{extractor}-\texttt{progressor} communication'' (still bounded by a function of $k,t,d$) before really dropping the measure of the branching algorithm. 
This is not quite the formal treatment; instead the authors make use of the notion of unbreakability of instances to overcome the challenge.
The implementation of this, in the work of Lokshtanov et al.\ costs them super-linear time---in fact the time taken is {\XP} w.r.t.\ $d$ (that is, $|\varphi|^{\mathcal{O}(d)}$). 
Our work, on the other hand, overcomes this issue in {\em linear time} on a even more general class of instances.

\subparagraph{The \texttt{progressor}.}
Our \texttt{progressor}, is inspired by the 
FPT algorithm for {\sc Set Cover} on 
on $d$-semi-ladder-free hypergraphs by Guillemot~\cite{DBLP:journals/toct/Guillemot25}.
On receiving a clause $C$
that is guaranteed to be satisfied by the assignment $\alpha$ of the weak backdoor $X$, 
records that some literal of this clause must be selected into the weak backdoor. 
This information is kept by imagining $X$ as an ordered set of $k$ literals and maintaining $k$ (not necessarily disjoint) sets of literals, called {\em candidate sets} $P_1, \ldots, P_k$, that keep the choices for the first, second, ..., $k$-th literal of the backdoor, respectively.
In other words, in this annotated problem, the $i$-th backdoor variable is forced to come from the set $P_i$. Initially, each $P_i$ contains all literals.
When the \texttt{progressor} receives a clause $C$ that is guaranteed to be satisfied by the weak backdoor, the \texttt{progressor} guesses the least index $i$ such that $C$ will be satisfied by the $i$-th literal of the weak backdoor.
Under the guess, it deletes the $i$-th candidate set $P_i$ to only contain the literals of $C$ that were already in $P_i$, and delete everything else.
This way, the new $P_i$ ensures that as long as, for each $i$, some literal of $P_i$ is selected into the final weak backdoor, the clause $C$ will always be satisfied.

To guarantee termination of the algorithm, we will show that some measure decreases every time the \texttt{progressor} is called. To this end, the algorithm cleans up the instance after every branching by deleting all clauses whose neighborhood contains some $P_i$. 
This deletion is correct since,
as long as any literal from each set $P_i$ is chosen to be satisfied by the final weak backdoor,
the deleted clauses will all be satisfied.
The clean-up ensures that after each branching step followed by a clean-up,
the size of a maximum semi-ladder that can be constructed using the literal vertices of $P_i$ (and arbitrary clauses) in the signed incidence graph,  
strictly decreases. Thus, the total measure of all branching steps of the \texttt{progressor} is bounded by $s:= (d-1)k$. 
This bounds the number of calls to the \texttt{progressor} in the overall algorithm. See Figure~\ref{fig:fig_highlevel}.

\subparagraph{The \texttt{reducer}.}
Finally, since the \texttt{progressor} makes progress by reducing a bounded measure at the cost of adding many annotations to the instance, it correspondingly increases the burden on the \texttt{reducer} routine.

Recall that the \texttt{reducer} is a pre-processing routine that outputs an equivalent instance that is free from large $\mathcal{O}(t)$-protrusions. Because of the steps outlined above, the \texttt{reducer} now needs to deal with annotated instances. 
The problem, however, is that our annotations, that is, the candidate sets $P_1, \ldots, P_k$ and a measure $s$ associated with the instance, are not encoded in the graph structure.
Consequently, even if a graph contains an $\mathcal{O}(t)$-protrusion---so, in particular, a subgraph with an $\mathcal{O}(t)$-sized boundary---this boundary is not the protrusion's ``real'' one, as information flows between this protrusion and the remaining graph through the annotations, which the graph structure does not capture.
This makes handling protrusions non-trivial.
We circumvent the issue by designing a \texttt{reducer} that efficiently compresses sufficiently large $\mathcal{O}(t)$-protrusions to ones of size $f(k,t,d)$---this differs from the $f(t)$ bound one would obtain when working with the ``genuine'' boundary of $t$-protrusions, but is still sufficient for our purposes.

\begin{figure}[!htb]
    \centering
    \begin{tikzpicture}[
    every node/.style={font=\small},
    every text node part/.style={align=center},
    box/.style={
        draw,
        rounded corners=2pt,
        minimum width=2.1cm,
        minimum height=1cm,
        align=center,
        thick,
        fill=green!12,
    },
    groupbox/.style={rounded corners=4pt, draw=gray!80, dashed, inner sep=4pt},
    instance/.style={
        draw,
        circle,
        minimum size=0.95cm,
        thick,
        fill=blue!8,
        inner sep=1pt
    },
    info/.style={
        draw,
        rounded corners=10pt,
        minimum width=1.6cm,
        minimum height=0.8cm,
        align=center,
        thick,
        fill=orange!15,
    },
    smallclause/.style={
        draw,
        rounded corners=4pt,
        minimum size=0.5cm,
        thick,
        fill=orange!15,
        inner sep=1pt,
        font=\scriptsize
    },
    arr/.style={->, thick, shorten >=2pt, shorten <=2pt},
    edgelbl/.style={font=\scriptsize, inner sep=2pt},
    moduledesc/.style={font=\scriptsize\itshape, align=left, text width=3cm, color=black!70}
]
    % Initial instance
    \node[instance] (I0) at (0, 0) {$I$};
    \node[box] (reducer)   at (2.3, 0) {Reducer\\{\scriptsize Lem.~\ref{lem:reducer}}};      
    \node[box] (extractor)  at (6.9, 0)   {Extractor\\{\scriptsize Lem.~\ref{lem:extractor}}};
    \node[box] (progressor) at (11.9, 0)  {Progressor\\{\scriptsize Lem.~\ref{lem:branchingReduction}}};      
    % Placeholder descriptions next to each module
    \node[moduledesc, anchor=north, text width=4cm] (desRed) at ($(reducer.south)+(0,0)$)
        {Computes equivalent instance where the neighborhood of a solution is incident to many edges};
    \node[moduledesc, anchor=north, text width=3.5cm] (desExt) at ($(extractor.south)+(0,0)$)
        {Extracts a set of vertices, one of them lies in a hypothetical solution or its neighborhood};
    \node[moduledesc, anchor=north] (desPro) at ($(progressor.south)+(0,0)$)
        {Computes equivalent instances where $s$ or $k$ is reduced by $1$};

    % Intermediate artifacts on the cycle
    \node[instance] (Iprime) at (4.65, 0)  {$I'$};
    \node[instance] (IprimeC) at (9.2, 0) {$I'$};
    \node[smallclause, anchor=west] (clauseCpp) at ($(IprimeC.east)+(-0.05,-0.2)$) {};
    \node[smallclause, anchor=west] (clauseCp) at ($(IprimeC.east)+(-0.05,-0.1)$) {};
    \node[smallclause, anchor=west] (clauseC) at ($(IprimeC.east)+(-0.05,0)$) {$v$};
    \node[instance] (Inextpp)  at (14.2, -0.15) {};
    \node[instance] (Inextp)  at (14.35, 0) {};
    \node[instance] (Inext)  at (14.5, 0.15) {$I_{\text{next}}$};

    \node[moduledesc, anchor=north] (desPro) at ($(Inextpp.south)$)
        {at most\\$f(d,k,t)$\\instances};
 
    % Connections
    \draw[arr, very thick] (I0) -- (reducer);
    \draw[arr, very thick] (reducer) -- (Iprime);
    \draw[arr, very thick] (Iprime) -- (extractor);
    \draw[arr, very thick] (extractor) -- (IprimeC);
    \draw[arr, very thick] (clauseC) -- (progressor);
    \draw[arr, very thick, shorten >=3mm] (progressor) -- (Inextp);
    % invisible spacer creating headroom for the theorem label above the top row
    \node (headroom) at ($(I0.north)+(0,0.7)$) {};
    \begin{scope}[on background layer]
    \node[groupbox, fill=gray!5, fit=(I0)(desRed)(Inext)(headroom)] (bg) {};
\end{scope}
    \node[anchor=north west, font=\large, inner sep=6pt]
         at (bg.north west) {Thm.~\ref{thm:mainTechnicalTheorem}};
\end{tikzpicture}
    \caption{A schematic depiction of the interplay between our algorithmic modules---the \texttt{reducer}, \texttt{extractor}, and \texttt{progressor}---in the proof of Theorem~\ref{thm:mainTechnicalTheorem}.}
    \label{fig:one_iteration}
\end{figure}

\begin{figure}[!htb]
    \centering
    \begin{tikzpicture}[
    every node/.style={font=\small},
    every text node part/.style={align=center},
    box/.style={
        draw,
        rounded corners=2pt,
        minimum width=2.4cm,
        minimum height=1cm,
        align=center,
        thick,
        fill=green!12,
        font=\large,
    },
    instance/.style={
        draw,
        circle,
        minimum size=0.95cm,
        thick,
        fill=blue!8,
        inner sep=1pt
    },
    output/.style={
        draw,
        rounded corners=2pt,
        align=center,
        very thick,
        fill=white,
        font=\small,
        inner sep=2pt,
        text width=3.7cm,
        minimum height=1.5cm
    },
    arr/.style={->, thick, shorten >=2pt, shorten <=2pt},
    edgelbl/.style={font=\small, inner sep=2pt},
    moduledesc/.style={font=\small\itshape, align=left, text width=4.5cm, color=black!70}
]
    % Input formula and parameters
    \node[output,font=\large,text width=1.7cm] (phi) at (0, 0) {$\varphi, d,k,t$};
    % Initial instance of the extension problem
    \node[instance] (Iinit) at (2.3, 0) {$I_{dk}$};
    % The whole reducer/extractor/progressor pipeline, abstracted as one
    % application of the main technical theorem.
    \node[box, minimum width=3.4cm, minimum height=1.5cm] (mainthm) at (5.2, 0)
        {Thm.~\ref{thm:mainTechnicalTheorem}};
    % The two possible outputs (the solver is folded into the box above)
    \node[output] (outSAT)     at (12.9, 1.1)  {($k,t$)-weak backdoor $X$\\for $\varphi$ and witnessing\\ assignment $\alpha$};
    \node[output] (outNoWBphi) at (12.9, -1.1) {$\varphi$ admits no ($k,t$)-weak backdoor};

    % Straight-through connections
    \draw[arr, very thick] (phi) -- (Iinit);
    \draw[arr, very thick] (Iinit) -- (mainthm);
    % Loop: each application feeds a set of instances back into the next iteration
    \draw[arr, very thick, rounded corners=10pt, color=ultramarine] (mainthm.north east)
        -- (6.9,1.6) -- (3.5,1.6) -- (mainthm.north west);
    % Arrow from the box to the set of output instances
    \draw[arr, very thick] (mainthm.east) -- (7.6,0);

    % Set of instances carried by the loop (front-most labeled I_s)
    \node[instance] (Ilooppp) at (5.05, 1.45) {};
    \node[instance] (Iloopp)  at (5.2, 1.6) {};
    \node[instance] (Iloop)   at (5.35, 1.75) {$I_{p-1}$};
    \node[edgelbl, anchor=south, align=center] at (5.2, 2.35) {$\le f(d,k,t)$ \\ instances};
    \node[edgelbl, anchor=east, align=center] at (3.35, 1.6) {\textcolor{ultramarine}{$< dk$}\\\textcolor{ultramarine}{iterations}};
  %  \node[edgelbl, anchor=south, align=center] at (8.4, 2) {$ < (f(d,k,t))^{dk}$ \\ instances};
    % Set of output instances (front-most labeled I_0)
    \node[instance] (Ioutpp) at (8, -0.3) {};
    \node[instance] (Ioutpp) at (8.15, -0.15) {};
    \node[instance] (Ioutp)  at (8.3, 0) {};
    \node[instance] (Iout)   at (8.45, 0.15) {$I_1$};

    \node[moduledesc, anchor=north] (desThm) at ($(mainthm.south)+(0,0)$)
        {Computes a set of instances where $p=s+k$ is decremented.};

    % Branch to the two outcomes
    \draw[arr, very thick] (Iout.east) -- node[above=4mm, pos=.35, edgelbl, align=center] {at least one\\ yes-instance} (outSAT.west);
    \draw[arr, very thick] (Iout.east) -- node[below=2mm, pos=.4, edgelbl] {else} (outNoWBphi.west);
\end{tikzpicture}
    \caption{High-level overview of the proof of our main result. The input formula $\varphi$ is translated into an annotated instance with a SL-pattern size bound of $s = (d-1)k$ and $p=s+k = dk$. Theorem~\ref{thm:mainTechnicalTheorem} is then applied to this instance to obtain a set of instances in which $k$ or $s$ is reduced by~$1$. The process is iterated on all output instances.
    After at most $dk-1$ iterations we obtain a set of instances with $k=0$ or $s=0$. These can be decided easily, and a solution (or its non-existence) can be lifted back to the original question for the formula~$\varphi$.}
    \label{fig:fig_highlevel}
\end{figure}

\subparagraph{Solving Base Cases}
To conclude the outline of the algorithm, notice that 
once the \texttt{progressor} reduced our ``measure of progress'' sufficiently,
we are guaranteed to remain with a set of ``simple'' instances: for each of them, we have either $k=0$ (asking for the existence of an empty weak backdoor), or $s$, a measure of the ``maximal semi-ladder height of an annotated instance'', is zero. We link the latter to the instance not having any clauses. 
Based on these descriptions, it should be evident that both cases are easy to handle. 
In fact, if $k=0$, it suffices to check whether the instance already has treewidth at most $t$ and, if it does, to determine whether a satisfying assignment exists.
In the second case, that is, if no clauses remain, note that removing any set of variables yields treewidth $0$, so we only have to decide whether such a set exists that respects the annotated information of our instance (in particular, that satisfies the clauses which we removed earlier). We solve this via a simple branching algorithm. 

We have thus presented all ingredients of the algorithm that Theorem~\ref{thm:main} postulates. The overall procedure is depicted in Figure~\ref{fig:fig_highlevel}.
 Below, we highlight several further aspects that make finding weak backdoors in a dense regime more challenging than in the sparse settings studied in~\cite{FominLMRS15_uib} and in~\cite{DBLP:conf/icalp/LokshtanovP022}, respectively.

\subparagraph{The (deterministic) \texttt{extractor}: Structure-oblivious deterministic sampling.}
Earlier works rely on a random sampling argument for the \texttt{extractor} routine. In principle, the randomized argument works for all instances of \textsc{SAT}, as long as they have been pre-processed by the \texttt{reducer}. 
In earlier works, this random step is derandomized by accounting for the sparse structure of the instance. 
In our work, we derandomize without relying on any structure of the \textsc{SAT} formula. In fact, this makes both our \texttt{reducer} and our (deterministic) \texttt{extractor} oblivious to the structure of the input formula.

Below, we briefly sketch the idea of our deterministic \texttt{extractor}.
We first make a fairly simple observation: since a $K_{t+2,t+2}$ is an obstruction to treewidth at most $t$, any $K_{t+2,t+2}$ in the incidence graph $G$ must either be hit by the weak backdoor $X$ or contain a clause vertex that will be satisfied by the witnessing assignment for $X$.
Therefore, if we find a $K_{t+2,t+2}$ in $G$, 
the \texttt{extractor} can output the literals corresponding to all the variables, and the clauses, contained in this $K_{t+2,t+2}$.

The challenge is to find such a $K_{t+2,t+2}$, if it exists, in linear time.
Since finding a biclique is \W[1]-hard parameterized by the desired size~\cite{DBLP:journals/jacm/Lin18}, a linear-time algorithm is out of reach---though finding such a biclique (or reporting that none exist) in $|\varphi|^{\mathcal{O}(t)}$ time is trivial.

We overcome this challenge by exploiting the fact that, due to the pre-processing in the \texttt{reducer} module, any weak backdoor $X$ is such that $N_G(X)$ is a $\rho$-cover of the incidence graph $G$, where $\rho= 1/ f(k,t,d)$ and $f$ is a function that depends only on $k,t,d$. A $\rho$-cover in a graph is a set of vertices that has a $\rho$-fraction of edges incident to it.
Since $N_G(X)$ is a $\rho$-cover, either there is a clause vertex in the neighborhood of $X$ that has at least a $\rho/2dk$-fraction of the edges incident to it, or $X$ contains a literal vertex whose clause neighborhood covers at least a $\rho/k$-fraction of all edges.
In the first case, the number of (clause) vertices which have a $\rho/2dk$-fraction of the edges incident on them individually is bounded by $2dk/\rho$.
In the second case, we observe that, if the number of literal vertices $Y$ whose clause neighborhoods have a 
$\rho/k$-fraction of the edges incident on them is large enough, then there is a $K_{t+2,t+2}$ whose literal endpoints reside in any sufficiently large (but bounded) subset of $Y$. This allows us to find a $K_{t+2,t+2}$ in linear time when $Y$ is large.

\subparagraph{Linear versus almost-linear time in the dense regime.}
Apart from the challenges discussed above, 
yet another issue arises from entering the dense regime:
The \texttt{reducer} is an $ \mathcal{O}(n+m)$-time procedure that reduces the number of vertices in the incidence graph by a fraction of the original number of vertices (if the instance is not already ``ready'' for the \texttt{extractor}). 
Here, $n$ and $m$ are the number of vertices and edges in the incidence graph of the \textsc{SAT} formula.
But one round of this pre-processing may not be enough for the \texttt{extractor} to work, so this step needs to be exhaustively performed until we are in a scenario that the \texttt{extractor} accepts. 
In the sparse regime, this is not a problem if we want to design a linear-time algorithm, since in each time step, the number of vertices reduces by a fraction
and the number of edges is in any case bounded linearly in the number of vertices. 
Therefore, the total running time of this exhaustive application is $\mathcal{O}(n+m) + \mathcal{O}((n+m) /\alpha) + \mathcal{O}((n+m)/\alpha^2) + \ldots$ , for some constant $\alpha > 1$, which is still $\mathcal{O}(n+m)$.

On dense instances, however, a reduction in the number of vertices by a fraction does not guarantee that the number of edges also drops by that proportion. 
Thus, a naive exhaustive application of a reduction rule that eliminates a fraction of the vertices will give an algorithm with running time $\mathcal{O}(|\varphi| \log |\varphi|)$.
Designing a truly linear-time algorithm with these repetitive procedures requires making careful observations about the structure of the problem and the process.

\section{Preliminaries}
\label{sec:prelims}
For any variables $n, \alpha, \beta, \ldots$ we define $\mathcal{O}_{\alpha, \beta, \ldots}(f(n))$ for any function $f$ depending only on $n$ to capture all functions bounded from above by $\mathcal{O}(g(\alpha, \beta, \ldots)\cdot f(n))$ for some computable function $g$. In particular, $\mathcal{O}_{\alpha, \beta, \ldots}(f(n)) = \mathcal{O}(f(n))$ when $\alpha, \beta,\ldots \in \mathcal{O}(1)$.

\paragraph{Graphs.} 
For a graph $G$ we denote its vertex and edge set by $V(G)$ and $E(G)$, respectively. The \emph{size} of a graph is measured by $|G| = |V(G)|+|E(G)|$.
For some $v\in V(G)$, we let $N_G(v) = \{u \mid \{u,v\}\in E(G)\}$ and $N_G[v]= \{v\} \cup N_G(v)$. Moreover, we let $\deg_G(v)= |N_G(v)|$ denote the degree of $v$ in $G$. For a vertex set $S \subseteq V(G)$, we let $N_G(S)=\bigcup_{v\in S} N_G(v) \setminus S$ and $N_G[S]=N_G(S)\cup S$, and we let $G - S$ denote the graph that is obtained from $G$ by deleting all vertices in $S$ together with all edges incident to~$S$.
Throughout the paper, we assume a graph encoding that supports the following operations in $\mathcal{O}(1)$ time: adding/deleting an isolated vertex, adding/deleting a given edge, and testing adjacency between two vertices; moreover, we assume that all edges incident to a vertex can be listed in time linear in the number of such edges.\footnote{Adding and deleting vertices and edges in this way is supported e.g.\ by doubly-linked adjacency lists with cross-pointers between the two occurrences of each edge; note that deleting a non-isolated vertex $v$ then costs $\mathcal{O}(\deg_G(v))$, as one first deletes its incident edges. Edge deletion is possible in $\mathcal{O}(1)$ time given a pointer to one of the two occurrences in the adjacency lists.
To additionally support constant-time adjacency tests, each vertex can maintain a hash set of its neighbors---the complexity of the query, add, and remove operations on this set depends on the machine model, but in general they work in expected $\mathcal{O}(1)$ time (though not necessarily in the worst case).}

The \emph{treewidth} $\tw(G)$ of a graph~$G$ is a well-established measure of how ``tree-like'' it is. For a definition of treewidth and (nice) tree decompositions we refer to the literature~\citep[Subsection 7.2]{CyganFKLMPPS15}.

\begin{definition}[$\rho$-cover]
    Let $G$ be a graph and let $0 < \rho <1$. A set $X \subseteq V(G)$ is a \emph{$\rho$-cover} of $G$ if $\sum_{v \in X} \deg_G(v) \geq \rho \cdot \sum_{v \in V(G)} \deg_G(v)$.
\end{definition}

\paragraph{CNF formulas.}
For a set of Boolean variables $\var$, let $L(\var)=\{x,\bar{x}\mid x\in \var\}$ denote the corresponding set of literals. For simplicity of notation, we define $\bar{\bar{x}}=x$ for all $x\in \var$.
A \emph{clause} over $\var$ is a set of literals from $L(\var)$, and it is  \emph{satisfied} via any of the literals in that set.
A \emph{CNF formula} is a pair $\varphi=(\var,\clause)$, where $\clause$ is a multiset of clauses over $\var$.
Given a CNF formula $\varphi=(\var,\clause)$ and a partial assignment $\alpha\colon X\to \{0,1\}$ for some $X\subseteq \var$, we define the \emph{residual} of $\varphi$ under $\alpha$ to be $\varphi[\alpha]=(\var_\alpha, \clause_\alpha)$ with $\var_\alpha = \var \setminus X$ and $\clause_\alpha = \{C\setminus L(X)\mid C\in \clause \text{ and $C$ is not already satisfied by a literal described by $\alpha$}\}$.
The length of a clause is the number of literals it contains and the \emph{size} of a formula $\varphi$, denoted by $|\varphi|$, is the total length of all clauses in~$\varphi$. 
Below, we define several options for representing  CNF formulas via graphs.

\begin{definition}[Signed Incidence Graphs]
    Let $\varphi=(\var, \clause)$ be a CNF formula. We define the following variants of signed incidence graphs for $\varphi$.
    \begin{itemize}
        \item The \emph{signed vertex incidence graph} $G_\varphi^{\mathrm{signV}}=(\clause \cup L(\var), E)$ contains for each $\ell\in L(\var)$ and $C \in \clause$, the edge $\{\ell, C\}$ if and only if $\ell \in C$, and additionally the edge $\{x,\bar{x}\}$ for each $x\in \var$ that connects two opposing literals.
        \item The \emph{signed edge incidence graph} $G^{\mathrm{signE}}_\varphi=(\clause \cup \var, E)$ contains for each $x\in \var$ and $C \in \clause$, the edge $\{x, C\}$ of sign $+$ if and only if $x \in C$ and of sign $-$ if and only if $\bar x \in C$.
    \end{itemize}
\end{definition}
For $s\in\{+,-\}$, we write $G^{\mathrm{signE},s}_\varphi$ for the subgraph of $G^{\mathrm{signE}}_\varphi$ obtained by keeping all vertices and exactly the edges of sign $s$.

\paragraph{Semi-Ladders.}
A \emph{semi-ladder of height $h$} in a graph $G$ is a sequence of vertices $a_1, \ldots, a_h,$ $b_1, \ldots, b_h$ such that $\{a_i,b_i\}\notin E(G)$ for all $i\in [h]$, and $\{a_i,b_j\}\in E(G)$ for all $i,j\in [h]$ with $i<j$.
Note that the vertices $b_1, \ldots, b_h$ are pairwise distinct (for $i<j$, $b_j$ is a neighbor of $a_i$ while $b_i$ is not) and so are $a_1, \ldots, a_h$. Moreover, restricting a semi-ladder to any subset of its positions, that is, to $a_{i_1}, \ldots, a_{i_m}, b_{i_1}, \ldots, b_{i_m}$ for indices $i_1 < \cdots < i_m$, yields again a semi-ladder.
For $d\in \mathbb{N}$, we say that $G$ is \emph{$SL_d$-free} if it contains no semi-ladder of height at least $d$.

Consider a graph $G$ in which every vertex is marked either as a \emph{literal vertex} or as a \emph{clause vertex}; for instance, in $G_\varphi^{\mathrm{signV}}$ the literal vertices are $L(\var)$ and the clause vertices are $\clause$. We call a semi-ladder $a_1, \ldots, a_h, b_1, \ldots, b_h$ in such a graph \emph{proper} if every $a_i$ is a literal vertex and every~$b_i$ is a clause vertex. Similarly, we call a semi-ladder in $G_\varphi^{\mathrm{signE},s}$ \emph{proper} if every $a_i$ is a variable vertex and every~$b_i$ is a clause vertex.

We have thus defined several notions of semi-ladders in the incidence graphs of a formula; for instance, semi-ladders in $G_\varphi^{\mathrm{signV}}$ may a priori mix literal and clause vertices on both sides, whereas proper ones may not. The following lemma shows that all these notions are functionally equivalent: their maximum heights agree up to a factor of two and an additive constant. Given this correspondence, we restrict our attention to proper semi-ladders in signed vertex incidence graphs in the following and, for simplicity, refer to these graphs as \emph{signed incidence graphs} $G_\varphi$.

\begin{lemma}\label{lem:semi-ladder-notions}
    Let $\varphi$ be a CNF formula. Let $h$ be the maximum height of a semi-ladder in $G_\varphi^{\mathrm{signV}}$, let $h_{V}$ be the maximum height of a proper semi-ladder in $G_\varphi^{\mathrm{signV}}$, and let $h_{E}$ be the maximum height of a proper semi-ladder in $G_\varphi^{\mathrm{signE},+}$ or $G_\varphi^{\mathrm{signE},-}$.
    Then
    $$h_V \leq h \leq h_V + 4 \qquad\text{and}\qquad h_E \leq h_V \leq 2 h_E.$$
\end{lemma}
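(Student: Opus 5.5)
The four inequalities split into two trivial ones, one that follows by restricting to the literals of one sign, and one ($h\le h_V+4$) that needs a case analysis on vertex types; the last is where the real work is. Throughout, $G=G_\varphi^{\mathrm{signV}}$. I will use three facts.
\begin{enumerate}
\item Restricting a semi-ladder to a subset of its positions gives a semi-ladder, as noted before the lemma.
\item Reversal: if $a_1,\dots,a_h,b_1,\dots,b_h$ is a semi-ladder, then so is $a'_k=b_{h+1-k}$, $b'_k=a_{h+1-k}$. Indeed, $\{a_i,b_j\}\in E$ for $i<j$ translates exactly to $\{a'_k,b'_m\}\in E$ for $k<m$, and the horizontal non-edges are preserved.
\item Structure of $G$: clause vertices are pairwise non-adjacent, and every literal has exactly one literal neighbour, namely its complement.
\end{enumerate}

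\textbf{The easy inequalities.} The bound $h_V\le h$ is immediate, since a proper semi-ladder is a semi-ladder.

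For $h_E\le h_V$, take a proper semi-ladder $x_1,\dots,x_{h_E},C_1,\dots,C_{h_E}$ in $G_\varphi^{\mathrm{signE},+}$. Then $x_i\in C_j$ for $i<j$ and $x_i\notin C_i$. Hence the positive literals $x_i$ together with the clauses $C_i$ form a proper semi-ladder in $G$. For sign $-$, use the literals $\bar x_i$ instead.

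For $h_V\le 2h_E$, take a proper semi-ladder $\ell_1,\dots,\ell_{h_V},C_1,\dots,C_{h_V}$ in $G$. At least $\lceil h_V/2\rceil$ of the literals $\ell_i$ have the same polarity, say positive. Restrict to those positions by fact (1). The underlying variables are then distinct (because the $\ell_i$ are), they are $+$-adjacent to $C_j$ for $i<j$, and they are not $+$-adjacent to $C_i$. This is a proper semi-ladder in $G_\varphi^{\mathrm{signE},+}$, so $h_V\le 2h_E$.

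\textbf{The main step: $h\le h_V+4$.} Fix a semi-ladder of height $h$ in $G$ and classify each index $i$ by the types of $(a_i,b_i)$.
\begin{itemize}
\item Let $P$ be the indices of type (literal, clause).
\item Let $Q$ be the indices of type (clause, literal).
\item Let $R_1$ be the indices of type (literal, literal).
\item Let $R_2$ be the indices of type (clause, clause).
\end{itemize}

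Bounding the mixed types:
\begin{itemize}
\item $|R_2|\le 1$. For $i<j$ in $R_2$, the edge $a_ib_j$ would join two clauses, which is impossible.
\item $|R_1|\le 2$. For $i<j<j'$ in $R_1$, the literal $a_i$ would be adjacent to the two distinct literals $b_j$ and $b_{j'}$, contradicting fact (3).
\item Restricting to $P$ gives a proper semi-ladder, so $|P|\le h_V$.
\item Restricting to $Q$ and then reversing by fact (2) also gives a proper semi-ladder, so $|Q|\le h_V$.
\end{itemize}

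Interaction between $P$ and $Q$:
\begin{itemize}
\item No $j\in Q$ precedes any $i\in P$. Otherwise the edge $a_jb_i$ would join two clauses.
\item No $i\in P$ precedes two indices $j<j'$ of $Q$. Otherwise the literal $a_i$ would be adjacent to the distinct literals $b_j$ and $b_{j'}$.
\end{itemize}

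Hence, if $P\neq\emptyset$, then $|Q|\le 1$ and
$$h\le |P|+1+2+1\le h_V+4.$$
If $P=\emptyset$, then
$$h\le |Q|+3\le h_V+3.$$
The obstacle I expect is exactly this bookkeeping: mixed-type positions cannot simply be discarded, and the two arguments that control them are the reversal symmetry and the fact that literals have literal-degree one.
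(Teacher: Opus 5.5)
Your proof is correct and follows the paper's argument essentially step by step: the same four-way classification of positions, the same bounds $|R_2|\le 1$ and $|R_1|\le 2$, the same ordering-plus-literal-degree argument between proper and co-proper positions, and the same polarity pigeonhole for $h_V\le 2h_E$. The one small difference is that you split on whether $P$ is empty, so you need only the case ``$p\ge 1$, $q\ge 2$ is impossible'', whereas the paper rules out both symmetric cases to obtain $\min(p,q)\le 1$.
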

\begin{proof}
    Since every proper semi-ladder is in particular a semi-ladder, we have $h_V \leq h$.

    We next show $h \leq h_V + 4$. Let $a_1, \ldots, a_h, b_1, \ldots, b_h$ be a semi-ladder in $G_\varphi^{\mathrm{signV}}$ of height $h$. Recall that in $G_\varphi^{\mathrm{signV}}$ no two clause vertices are adjacent, and every literal vertex is adjacent to exactly one other literal vertex, namely the opposing literal. We will moreover use \emph{mirroring}: if $a_1, \ldots, a_h, b_1, \ldots, b_h$ is a semi-ladder, then so is $b_h, \ldots, b_1, a_h, \ldots, a_1$, obtained by exchanging the two sides and reversing their order.
We call a position $i \in [h]$ \emph{proper} if $a_i$ is a literal vertex and $b_i$ is a clause vertex, and \emph{co-proper} if $a_i$ is a clause vertex and $b_i$ is a literal vertex. Let $p$ and $q$ denote the number of proper and co-proper positions, respectively. Restricting the semi-ladder to its proper positions yields a proper semi-ladder of height $p$, and restricting it to its co-proper positions and mirroring yields one of height $q$, so $h_V \geq \max(p,q)$. It therefore suffices to show $\max(p,q) \geq h-4$, which we do by bounding the number of positions of all remaining types.
First, at most one position consists of two clause vertices: given two such positions $i<j$, the clause vertices $a_i$ and $b_j$ would be adjacent.
Second, at most two positions consist of two literal vertices: for two such positions $i<j$, the literal vertices $a_i$ and $b_j$ are adjacent and hence opposing literals, so three such positions $i<j<j'$ would yield $b_j = \overline{a_i} = b_{j'}$, contradicting the distinctness of $b_j$ and $b_{j'}$.
Finally, we claim that $\min(p,q)\leq 1$. Note that every proper position precedes every co-proper position, as otherwise two clause vertices $a_i$ and $b_j$ with $i < j$ would be adjacent. So if $p\geq 2$ and $q\geq 1$, there are proper positions $i<i'$ and a co-proper position $j$ with $i'<j$, and the literal vertex $b_j$ is adjacent to the two distinct literal vertices $a_i$ and $a_{i'}$, a contradiction. Symmetrically, if $p\geq 1$ and $q\geq 2$, there is a proper position $i$ and co-proper positions $j<j'$ with $i<j$, and the literal vertex $a_i$ is adjacent to the two distinct literal vertices $b_j$ and $b_{j'}$, again a contradiction.
In total, $h \leq p + q + 3 \leq \max(p,q) + 4$, as required.

    It remains to show $h_E \leq h_V \leq 2h_E$. Fix a sign $s\in\{+,-\}$ and, for a variable $x\in\var$, let $\ell^s_x = x$ if $s=+$ and $\ell^s_x = \bar{x}$ if $s=-$. By definition of the two graphs, $\{x, C\}$ is an edge of $G_\varphi^{\mathrm{signE},s}$ if and only if $\{\ell^s_x, C\}$ is an edge of $G_\varphi^{\mathrm{signV}}$.

    This correspondence immediately yields $h_V \geq h_E$: if $x_1, \ldots, x_k, C_1, \ldots, C_k$ is a proper semi-ladder in $G_\varphi^{\mathrm{signE},s}$, then $\ell^s_{x_1}, \ldots, \ell^s_{x_k}, C_1, \ldots, C_k$ is a proper semi-ladder in $G_\varphi^{\mathrm{signV}}$.

    Conversely, let $\ell_1, \ldots, \ell_k, C_1, \ldots, C_k$ be a proper semi-ladder in $G_\varphi^{\mathrm{signV}}$ of height $k = h_V$. At least $\lceil k/2 \rceil$ of the literals $\ell_i$ have the same polarity; let $s$ be the corresponding sign and let $i_1 < \cdots < i_m$ with $m \geq \lceil k/2\rceil$ be the corresponding positions. By the same correspondence, the variables of $\ell_{i_1}, \ldots, \ell_{i_m}$ together with $C_{i_1}, \ldots, C_{i_m}$ form a proper semi-ladder in $G_\varphi^{\mathrm{signE},s}$. Hence we have $h_E \geq \lceil h_V/2 \rceil$ and thus $h_V \leq 2 h_E$.
\end{proof}

Another natural graph representation of a CNF formula is the \emph{unsigned incidence graph}, a graph that contains a vertex for each clause and each variable, but does not distinguish between positive and negative occurrences of variables in clauses. Formally, we  define such graphs as follows.
\begin{definition}[Unsigned Incidence Graphs]
    Let $\varphi=(\var, \clause)$ be a CNF formula. The \emph{unsigned incidence graph} $G^{\mathrm{unsign}}_\varphi=(\clause \cup \var, E)$ contains for each $x\in \var$ and $C \in \clause$, the edge $\{x, C\}$ if and only if $x \in C$ or $\bar x \in C$.
\end{definition}

The maximum height of any semi-ladder in the unsigned incidence graph of a formula $\varphi$ gives us insight into the maximum height of any semi-ladder in its signed incidence graph. The relation is not as tight as the correspondence established above---in fact, it is exponential---but still suffices to conclude that every graph whose maximum \emph{signed} semi-ladder height is bounded by a constant also has a constantly bounded \emph{unsigned} semi-ladder height (see Figure~\ref{fig:classes}).

\begin{lemma}\label{lem:unsigned_signed_correspondence}
    Let $\varphi$ be a CNF formula. Let $h_u$ be the maximum height of a semi-ladder in $G_\varphi^{\mathrm{unsign}}$ and $h_s$ be the maximum height of a semi-ladder in $G_\varphi^{\mathrm{signV}}$. Then, $h_u \leq 4^{h_s}+2$.
\end{lemma}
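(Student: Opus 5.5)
The plan is to take a longest semi-ladder in $G_\varphi^{\mathrm{unsign}}$ and first make it proper, losing at most two positions. Then we use Ramsey's theorem on the signs of the forced edges to extract a proper semi-ladder in $G_\varphi^{\mathrm{signV}}$ whose height is logarithmic in the unsigned height.

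\emph{Step 1 (reduction to proper semi-ladders).} Let $a_1,\ldots,a_h,b_1,\ldots,b_h$ be a semi-ladder in $G_\varphi^{\mathrm{unsign}}$ of height $h=h_u$. We use the position types of the proof of Lemma~\ref{lem:semi-ladder-notions}, with variables in place of literals. The argument is simpler here because $G_\varphi^{\mathrm{unsign}}$ is bipartite, with variables on one side and clauses on the other.
\begin{itemize}
\item Two positions $i<j$ both consisting of clause vertices would force the clause vertices $a_i$ and $b_j$ to be adjacent, which is impossible. The same holds for two positions consisting of variable vertices. So there is at most one position of each of these two kinds.
\item A co-proper position $i$ cannot precede a proper position $j$, since then the clause vertices $a_i$ and $b_j$ would be adjacent.
\item A proper position $i$ cannot precede a co-proper position $j$, since then the variable vertices $a_i$ and $b_j$ would be adjacent.
\end{itemize}
Hence proper and co-proper positions cannot coexist, and $h\le \max(p,q)+2$. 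Restricting to the surviving positions, and mirroring if they are co-proper, gives a proper semi-ladder $x_1,\ldots,x_m,C_1,\ldots,C_m$ in $G_\varphi^{\mathrm{unsign}}$ with $m\ge h-2$.

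\emph{Step 2 (Ramsey on signs).} In this proper semi-ladder, for each $i$ the clause $C_i$ contains neither $x_i$ nor $\bar x_i$. For each $i<j$, the clause $C_j$ contains $x_i$ or $\bar x_i$. Colour the pair $\{i,j\}$ with $i<j$ by $+$ if $x_i\in C_j$, and by $-$ otherwise, in which case $\bar x_i\in C_j$. This is a $2$-colouring of the edges of $K_m$.

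Suppose there is a monochromatic clique $i_1<\cdots<i_r$ of colour $s$. Let $\ell_i=x_i$ if $s=+$ and $\ell_i=\bar x_i$ if $s=-$. Then $\ell_{i_1},\ldots,\ell_{i_r},C_{i_1},\ldots,C_{i_r}$ is a proper semi-ladder in $G_\varphi^{\mathrm{signV}}$:
\begin{itemize}
\item $\ell_{i_a}\notin C_{i_a}$, because the variable does not occur in $C_{i_a}$ at all;
\item $\ell_{i_a}\in C_{i_b}$ for $a<b$, by the colour.
\end{itemize}
Hence $r\le h_s$.

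\emph{Step 3 (the bound).} By the Erd\H{o}s--Szekeres bound, $R(r,r)\le\binom{2r-2}{r-1}$. Taking $r=h_s+1$ gives $R(h_s+1,h_s+1)\le\binom{2h_s}{h_s}\le 4^{h_s}$. Since no monochromatic clique of size $h_s+1$ exists, $m<R(h_s+1,h_s+1)\le 4^{h_s}$. Combining with Step 1, $h_u\le m+2\le 4^{h_s}+2$.

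I expect the main care to be needed in Step 1's case analysis. There one must make sure that mixed position types cannot coexist, so that only an additive $2$ is lost. One must also note that restricting a semi-ladder to a subset of positions, or mirroring it, preserves the semi-ladder property, as remarked in the preliminaries. The Ramsey step itself is routine.
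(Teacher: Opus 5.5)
Your proof is correct and takes the same route as the paper. You make the unsigned semi-ladder proper at a cost of at most two positions, $2$-colour the forced pairs by polarity, and apply Ramsey to extract a same-polarity proper semi-ladder in $G_\varphi^{\mathrm{signV}}$. Your Step~1 case analysis and the explicit bound $R(h_s+1,h_s+1)\le\binom{2h_s}{h_s}\le 4^{h_s}$ spell out details that the paper only states briefly; the paper phrases the Ramsey step as finding a monochromatic set of size at least $\frac12\log_2 h_u^p$.
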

\begin{proof}
    Consider a semi-ladder of maximum height $h_u$ in $G_\varphi^{\mathrm{unsign}}$. If $h_u \leq 3$, then the statement holds trivially since $h_s \geq 0$.
    Moreover, note that in $G_\varphi^{\mathrm{unsign}}$, no literal vertices (\emph{here variable vertices}) are adjacent to each other and neither are clause vertices.
    Therefore, if $h_u\geq 3$, then $G_\varphi^{\mathrm{unsign}}$
    contains a proper semi-ladder $a_1, \dots, a_{h_u^p}, b_1, \dots, b_{h_u^p}$ of height $h_{u}^p$ where $h_u-2\leq h_u^p \leq h_u$.
    Now, for each pair $i<j$
let $p(i,j)\in\{+,-\}$ be the polarity with which variable $a_i$ occurs in clause $b_j$ (pick either one if both occur). This induces a two-coloring of the pairs of $\{1,\dots,h_u^p\}$. By Ramsey's theorem, there exists a subset of indices $I=\{i_1<\dots<i_m\}$ with $m \geq \frac{1}{2} \log_2 h_u^p$ and a color $c \in \{+,-\}$, such that $p(i_s,i_t)=c$ for all $s < t$. From this, we can build a signed semi-ladder $a'_1, \dots, a'_{m}, b_1', \dots, b_m'$ of height $m$ as follows: For each $t \in [m]$, let $a'_t$ be the literal of $a_{i_t}$ of polarity $c$ and let $b'_t=b_{i_t}$. This sequence is indeed a semi-ladder because (1) if $\{a_{i_t},b_{i_t}\}$ is a non-edge in the unsigned graph this means that neither literal of $a_{i_t}$ appears in the clause $b_{i_t}$, hence there is no edge between $a'_t$ and $b'_t$; and (2) for $s < t$, we have that $a'_s$ is adjacent to $b'_t$ because $a_{i_s}$ occurs in $b_{i_t}$ with polarity $c=p(a'_s)$.

We have thus shown that every proper unsigned semi-ladder of height $h_u^p$ yields a signed one of height $m \geq \frac{1}{2} \log_2 h_u^p$. Consequently, recalling that $\varphi$ has maximum signed semi-ladder height $h_s$ and that $h_u-2\leq h_u^p$, we can bound the maximum unsigned semi-ladder height by $h_u \leq h_u^p+2 \leq 4^{h_s}+2$.
\end{proof}

In Section~\ref{sec:lowerbounds}, we complement the tractability of finding weak backdoors on semi-ladder-free graphs by a lower bound that rules out a similar result for ladder-free graphs. There we will make use of the following notion.
\paragraph{Ladders.}
    A \emph{ladder of height $h$} in a graph $G$ is a sequence of vertices $a_1, \ldots, a_h, b_1, \ldots, b_h$ such that for all $i,j\in [h]$ we have $\{a_i,b_j\}\in E(G)$ if and only if $j \ge i$.
For $d\in \mathbb{N}$, we say that $G$ is \emph{$L_d$-free} if it contains no ladder of height at least $d$.

\paragraph{Backdoors.}
We now formally define the notion of weak backdoors for SAT.
\begin{definition}[Weak Backdoor to Treewidth $t$]
    Let $\varphi=(\var,\clause)$ be a CNF formula.
    A set~$X \subseteq \var$ is a \emph{weak backdoor of $\varphi$ to treewidth $t$} if there exists an assignment $\alpha\colon X \to \{0,1\}$ such that $\varphi[\alpha]$ is
    satisfiable and the unsigned incidence graph
    of $\varphi[\alpha]$ has treewidth at most $t$. Note that only satisfiable formulas admit weak backdoors.
    Since weak backdoors come with a specific assignment, we will also refer to them as sets of literals, where each variable in $X$ is set in the polarity that is satisfied by $\alpha$. The \emph{size} of a weak backdoor is the number of variables (equivalently, literals) it contains. A \emph{weak $(k,t)$-backdoor} is a weak backdoor of size at most $k$ to treewidth at most $t$.
\end{definition}

In Section~\ref{sec:main-tech-section}, we will mostly work on the following auxiliary problem, which allows us to make progress towards finding backdoors. We here state it in terms of a decision problem on colored graphs, See Figure~\ref{fig:ext_example}. Its last condition is based on the following notion, which measures, for each of $k$ given vertex sets, the largest proper semi-ladder supported by that set.

\begin{figure}[!htb]
    \centering
    \begin{tikzpicture}[
    every node/.style={font=\small},
    lit/.style={                 % literal vertex
        draw,
        circle,
        minimum size=0.6cm,
        thick,
        fill=violet!15,
        inner sep=0.5pt,
        font=\small
    },
    cls/.style={                 % clause vertex
        draw,
        rectangle,
        minimum size=0.6cm,
        thick,
        fill=teal!18,
        inner sep=2pt,
        font=\small
    },
    inc/.style={thick},                              % incidence edge
    opp/.style={ultra thick, dashed},             % opposite-literal edge
    leg/.style={font=\small, anchor=west}
]
    % ---- formula on the left ----
    \node[anchor=east, align=center, font=\large]
        (phi) at (0, -1.3)
        {$\varphi=(x_1 \vee \overline{x_2} \vee x_3)\wedge\,(\overline{x_2} \vee \overline{x_3})$};

    % ---- literal vertices in a row, grouped in opposing pairs ----
    \node[lit] (x1)  at (0.00,0) {$x_1$};
    \node[lit] (nx1) at (1,0) {$\overline{x_1}$};
    \node[lit] (x2)  at (2.3,0) {$x_2$};
    \node[lit] (nx2) at (3.3,0) {$\overline{x_2}$};
    \node[lit] (x3)  at (4.7,0) {$x_3$};
    \node[lit] (nx3) at (5.7,0) {$\overline{x_3}$};

    % ---- clause vertices, one above and one below the row ----
    \node[cls] (C1) at (2.6, -1.3) {$C_1$};
    \node[cls] (C2) at (5.0,-1.3) {$C_2$};

    % ---- opposite-literal edges (perfect matching) ----
    \draw[opp] (x1) -- (nx1);
    \draw[opp] (x2) -- (nx2);
    \draw[opp] (x3) -- (nx3);

    % ---- incidence edges ----
    \draw[inc] (C1) -- (x1);
    \draw[inc] (C1) -- (nx2);
    \draw[inc] (C1) -- (x3);
    \draw[inc] (C2) -- (nx2);
    \draw[inc] (C2) -- (nx3);

    % ---- legend ----
    \begin{scope}[shift={(7.2,0)}]
        \node[lit, scale=0.7] (lg1) at (0,0) {};
        \node[leg] at (0.4,0) {literal vertex};
        \node[cls, scale=0.7] (lg2) at (0,-0.55) {};
        \node[leg] at (0.4,-0.55) {clause vertex};
        \draw[inc] (-0.22,-1.1) -- (0.22,-1.1);
        \node[leg] at (0.4,-1.1) {incidence edge};
        \draw[opp] (-0.22,-1.65) -- (0.22,-1.65);
        \node[leg, text width=2.5cm] at (0.4,-1.65) {opposite-literal edge};
    \end{scope}
\end{tikzpicture}
    \caption{A CNF formula $\varphi$ and the graph $G$ in a corresponding instance $(k,s,t,G)$ of \bdextproblemtw. Initially, every literal vertex lies in each of the candidate sets $P_1,\dots,P_k$ (and holds the respective colors).}
    \label{fig:ext_example}
\end{figure}

\begin{definition}[$SL$-pattern]
    Let $G$ be a graph in which every vertex is marked as a literal or a clause vertex, and let $P_1,\ldots, P_k \subseteq V(G)$. For each $i\in [k]$, let $s_i$ be the maximum height of a proper semi-ladder in $G$ whose literal vertices all lie in $P_i$. Then $(s_1, \ldots, s_k)$ is the \emph{$SL$-pattern} of $G$ with respect to $P_1, \ldots, P_k$, and we refer to $\sum_{i\in [k]} s_i$ as the \emph{$SL$-pattern size}.
\end{definition}

\begin{definition}[\backdoorextproblemtw\ (\bdextproblemtw)]\label{def:bdextproblemtw}
    For~$k,s,t\in\mathbb{N}$ and a vertex- and edge-colored graph $G$, the tuple $I=(k,s,t,G)$ is a \emph{well-formed instance} of \bdextproblemtw\ (and $G$ is \emph{well-formed}) if
    \begin{enumerate}
        \item[(W1)] every vertex of $G$ is colored as a \emph{literal vertex} or as a \emph{clause vertex}, but not both;
        \item[(W2)] every edge of $G$ is colored either as an \emph{incidence edge}, joining a literal vertex to a clause vertex, or as an \emph{opposite-literal edge}, joining two literal vertices;
        \item[(W3)] the opposite-literal edges form a perfect matching on the literal vertices---intuitively each pair represents the two opposing literals of some variable; 
        \item[(W4)] literal vertices may carry at most $k$ more colors indicating membership in some \emph{(backdoor) candidate sets} $P_1, \ldots, P_k$; 
        for every $i\in[k]$ the set $P_i$ is non-empty and $P_i\not\subseteq N_G(C)$ for every clause vertex $C$, that is, some literal vertex of $P_i$ is not an incidence-neighbor of $C$; and 
       \item[(W5)] the $SL$-pattern size of $G$ with respect to $P_1,\dots,P_k$ is at most $s$.
    \end{enumerate}
    
    A well-formed instance $I=(k,s,t,G)$ of {\bdextproblemtw} is a \emph{yes-instance} if there is an ordered set $X = \{\ell_1,\dots,\ell_k\}$ of literal vertices (a \emph{solution}) with $\ell_i\in P_i$ for each $i\in[k]$ that satisfies three properties:
    \begin{itemize}
     \item non-conflicting: there is no opposite-literal edge joining two solution vertices $\ell_i, \ell_j$;
     \item bounded treewidth: deleting $N_G[X]$ from $G$ and then contracting all opposite-literal edges yields a graph of treewidth at most $t$; and
     \item domination: there is a set of literal vertices in $G-N_G[X]$ that pairwise do not share any opposite-literal edges and which dominate all clause vertices in $G-N_G[X]$.
    \end{itemize}
    Non-well-formed instances are never yes-instances by any $X$, even if $X$ satisfies the above three criteria.
\end{definition}

The bounded treewidth property intuitively corresponds to having a set of literal vertices such that the residual \textsc{SAT} formula for this partial assignment has small treewidth. The domination property expresses that this partial assignment can be extended to a full satisfying assignment. The combination of these yields the property of being a small weak backdoor to bounded treewidth.

Crucially, we show that finding a weak $(k,t)$-backdoor for a CNF formula $\varphi$
is equivalent to solving a certain {\bdextproblemtw} instance where all sets $P_i$ contain all literal vertices.

The following observation relates the target treewidth $t$ in the \emph{bounded treewidth} property of yes-instances to the treewidth of $G-N_G[X]$ (without edge contraction).

\begin{observation}\label{obs:treewidth-translation}
    Let $X$ be a solution to a yes-instance $(k,s,t,G)$ of \bdextproblemtw. Then $G-N_G[X]$ and $G-N_G(X)$ have treewidth at most $2t+1$ each.
\end{observation}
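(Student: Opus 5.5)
Let $H := G - N_G[X]$ and let $H'$ denote the graph obtained from $H$ by contracting all opposite-literal edges. By the bounded treewidth property, $H'$ has a tree decomposition $(T,\{B_u\})$ of width at most $t$. The plan is to \emph{uncontract} this decomposition. Every vertex of $H'$ is one of three kinds: a clause vertex of $H$, a literal vertex of $H$ whose opposite literal was deleted (it lies in $N_G[X]$), or a contracted vertex $w_x$ arising from a pair $x,\bar x$ that both survive in $H$. Define $B'_u$ by replacing each contracted vertex $w_x \in B_u$ with the two vertices $x,\bar x$, and keeping all other vertices unchanged. Then $|B'_u| \le 2|B_u| \le 2(t+1)$, so the width is at most $2t+1$.

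Next I verify the tree decomposition axioms for $(T,\{B'_u\})$ on $H$. Vertex coverage is immediate. For connectivity, the set of bags containing a vertex $v$ of $H$ equals the set of bags containing its image in $H'$, which is connected. For edge coverage, an incidence edge $\{\ell, C\}$ of $H$ maps to an edge of $H'$ between the image of $\ell$ and $C$. Some bag contains both images, and hence both $\ell$ and $C$ after uncontraction. An opposite-literal edge $\{x,\bar x\}$ inside $H$ is exactly a contracted pair, so both endpoints lie together in every bag containing $w_x$. Hence $\tw(H) \le 2t+1$.

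For $G - N_G(X)$, I note that it equals $H$ plus the vertices of $X$, which are now isolated. Indeed, every neighbor of a vertex of $X$ lies in $N_G(X)$, including its opposite literal, which cannot be in $X$ by the non-conflicting property. Adding isolated vertices, each as its own singleton bag attached anywhere in $T$, does not increase treewidth, so $\tw(G - N_G(X)) \le 2t+1$ as well.

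The only delicate point is the bookkeeping of the three vertex types under contraction. In particular, a literal whose partner lies in $N_G[X]$ is not contracted, so it is mapped to itself. This is routine rather than a real obstacle.
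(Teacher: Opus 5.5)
Your proof is correct and matches the paper's argument: you uncontract a width-$t$ decomposition of the contracted residual graph by replacing each contracted vertex with its two literals, and then add the vertices of $X$ as isolated vertices, which the non-conflicting property guarantees. One small remark: your second vertex type never occurs. The literals in $N_G[X]$ are exactly $X \cup \{\bar x : x \in X\}$, a set closed under taking opposites, so every literal that survives in $G - N_G[X]$ also keeps its partner.
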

\begin{proof}
    By definition, the contracted residual graph obtained by contracting edges of opposing literals in $G-N_G[X]$ has treewidth at most $t$. Consider a witness tree decomposition of width $t$ and replace in every bag each contracted vertex by the two endpoints of its matching edge. This is a valid tree decomposition of $G-N_G[X]$: every vertex and every edge is covered (a matching edge $\{v,\bar v\}$ by any bag holding its contracted vertex, and every other edge because the contracted vertices of its endpoints share a bag), and the bags containing any fixed vertex still form a subtree. 
    Previously, all bags had size at most $t+1$, so now they have size at most $2(t+1) = 2t+2$. Thus the treewidth of $G-N_G[X]$ is at most $2t+1$.
    Since $G-N_G(X)$ is obtained from $G-N_G[X]$ by adding the now isolated vertices of $X$, $G-N_G(X)$ has treewidth at most $2t+1$ as well (note that vertices in $X$ are pairwise non-adjacent as they are non-conflicting).
\end{proof}

\section{Weak Backdoors to Treewidth for Semi-Ladder-free SAT}\label{sec:main-tech-section}
The goal of this section is to establish our main algorithmic result Theorem~\ref{thm:main}: for every \textsc{SAT} instance whose signed incidence graph is $SL_d$-free, we can decide whether a weak $(k,t)$-backdoor exists---and, in the positive case, exploit it to solve \textsc{SAT}---in linear time.

At the core of our algorithm lies a recursive procedure formalized in the following theorem:

\begin{restatable}[Main Technical Theorem]{theorem}{mainTechnicalTheorem}\label{thm:mainTechnicalTheorem}
    There is an algorithm $\AA$ that takes as input a well-formed instance 
    $I=(k,s,t,G)$ of {\bdextproblemtw}, 
    where $k\ge 1$ and $G$ has $n$ vertices and $m$ edges. $\AA$ runs in time $\mathcal{O}_{k,s,t}(n+m)$ and outputs $q \le q(k,s,t)$ 
    many well-formed instances $I_1, \ldots, I_q$ of \bdextproblemtw\
    such that 
    \begin{itemize}
        \item $I_i = (k, s-1, t, G_i)$ or $I_i= (k-1, s, t, G_i)$ for some labeled graph $G_i$ for all $i\in [q]$; 
        \item $I$ is a yes-instance if and only if there is at least one $i\in [q]$ such that
        $I_i$ is a yes-instance; and 
        \item given a solution to $I_i$ for some $i \in [q]$, one can derive a solution to $I$ in $\mathcal{O}_{k,s,t}(n)$ time.
    \end{itemize}
\end{restatable}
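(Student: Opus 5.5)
The plan is to compose the three modules of the technical overview, \texttt{reducer}, \texttt{extractor} and \texttt{progressor}, into one round. The round either finishes the instance outright or produces boundedly many successor instances in which $k$ or $s$ has dropped by one. It has three steps.

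\textbf{Step 1 (reducer).} I first establish a reducer lemma. Given a well-formed $I=(k,s,t,G)$, it computes in $\mathcal{O}_{k,s,t}(n+m)$ time an equivalent well-formed instance $I'=(k,s,t,G')$ with the same parameters. In $I'$ the closed neighborhood $N_{G'}[X]$ of every solution $X$ is a $\rho$-cover of $G'$, for $\rho=1/f(k,s,t)$. Solutions of $I'$ must lift back to solutions of $I$ in linear time.

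The reduction works on $\mathcal{O}(t)$-protrusions. If $N_G[X]$ is not a $\rho$-cover, then by Observation~\ref{obs:treewidth-translation} most edges lie in $G-N_G[X]$, which has treewidth at most $2t+1$. A standard separator argument then yields a large protrusion with small boundary. The protrusion can talk to the rest of the instance only through its graph boundary and the annotations $P_1,\dots,P_k$. So I replace it by a representative of size $f(k,s,t)$ that has the same boundary and the same behaviour on the following data: the colour classes, for which index sets $J\subseteq[k]$ solution literals are placed inside it, and which boundary states can be dominated. I compute these representatives by dynamic programming over a tree decomposition of the protrusion.

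The representative must not increase the $SL$-pattern and must preserve (W4). To ensure this, I build it as a subgraph of the original protrusion, keeping a bounded set of witness vertices for each relevant state. Taking a subgraph cannot create new semi-ladders, because the relevant semi-ladder conditions are monotone under vertex deletion when clauses are removed together with their witnesses.

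To make the running time genuinely linear in the dense setting, one round must remove a constant fraction of the edges, not only of the vertices. I handle this by charging each removal to the edges inside the protrusions, which dominate because the cover condition fails. The rounds then form a geometric series in $n+m$.

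\textbf{Step 2 (extractor).} I then prove that from $I'$ one can compute in linear time a set $V'$ of size $g(k,s,t)$ such that every solution $X$ either contains a literal of $V'$ or has a clause of $V'$ in $N(X)$. I follow the overview's argument.

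Since $N[X]$ is a $\rho$-cover, one of two cases holds. In the first, some clause in $N(X)$ has degree at least a $\rho/(2k\cdot\text{const})$-fraction of all edges; there are only boundedly many such heavy clauses, and I put all of them into $V'$. In the second, some literal of $X$ has a clause-neighbourhood covering a $\rho/k$-fraction of the edges. I collect heavy literals up to a bounded threshold $T$. If fewer than $T$ exist, I add all of them to $V'$. Otherwise, a counting (Kővári–Sós–Turán style) argument applied to $T$ heavy literals yields $t'+2$ of them with $t'+2$ common clause neighbours, where $t'=2t+1$. I find this $K_{t'+2,t'+2}$ in linear time by scanning the edges of these $T$ literals. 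Every solution must hit this biclique, since otherwise $G-N[X]$ would have treewidth above $2t+1$. I therefore output the biclique's vertices as $V'$.

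\textbf{Step 3 (progressor).} Finally, I branch on each $v\in V'$ and on how it relates to the solution.

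If $v$ is a literal, I guess its index $i$, so that $\ell_i=v$. I set $P_i=\{v\}$ and delete $v$, $\bar v$ and the clauses satisfied by $v$. I then renormalise the instance so that index $i$ disappears, which gives $k-1$. Non-conflict with $\bar v$ is handled by removing $\bar v$ from all other $P_j$.

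If $v$ is a clause $C$, I guess the least index $i$ such that $\ell_i\in C$ and set $P_i:=P_i\cap N(C)$. I then clean up by deleting every clause $C'$ with $P_j\subseteq N(C')$ for some $j$; such clauses are automatically satisfied. This restores (W4).

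The key claim is that $s_i$ strictly decreases. Take any proper semi-ladder whose literals lie in the new $P_i$. Appending $C$ at the bottom of its clause side, paired with a literal of the old $P_i$ outside $N(C)$, extends it, and such a literal exists by (W4). The resulting ladder lies in the old $P_i$, so the old $s_i$ was strictly larger. Hence the total pattern size drops and I may set $s-1$; the guess of the least index is not even needed for this. Correctness of the cleanup, and of lifting a solution back, is immediate.

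The number of branches is $|V'|\cdot 2k$, which gives $q(k,s,t)$. I expect Step 1 to be the main obstacle, specifically reconciling protrusion replacement with the global annotations while keeping the time linear on dense graphs.
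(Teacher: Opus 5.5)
Your three-module pipeline and your clause branch (appending $(\ell,C)$ to show that $s_i$ drops) match the paper. The reducer as you design it, however, fails.

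A bounded-size representative cannot in general be a subgraph of the protrusion, because the bounded-treewidth condition depends on how the protrusion connects its boundary inside the residual graph. A long path of two-literal clauses between boundary vertices $u$ and $v$ acts, after contracting opposite-literal edges, like a subdivided edge $uv$. Take a context whose contracted residual has treewidth $t$ but reaches $t+1$ once $u$ and $v$ are joined, with all $P_i$ on a separate gadget so that no solution touches the path. This context separates the path from each of its bounded-size subgraphs, none of which joins $u$ and $v$. Your list of DP states omits exactly this treewidth information, and (W4) is not even monotone under deleting literal vertices.

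The missing idea is that, for fixed $k$ and $s$, well-formedness is itself first-order expressible. A proper semi-ladder of bounded height is a finite pattern, and $\sum_i s_i\le s$ amounts to forbidding the finitely many height tuples with sum $s+1$. So you build well-formedness into the equivalence: two boundaried graphs are equivalent if every gluing with a common context is well-formed for both or for neither, and in the former case is a yes-instance for both or for neither. Equal $\mathrm{MSO}_2$ types imply this. Hence the standard finite-state replacement by an arbitrary small graph of the same type (not a subgraph) preserves (W1)--(W5) for free, and solutions lift back via a constructive version of Courcelle's theorem.

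Your running-time plan also breaks. Protrusion interiors carry only $\mathcal{O}_t(n)$ edges, so on a dense instance no round can delete a constant fraction of the $m$ edges, and the reducer cannot condition on the unknown cover condition failing. The paper instead shows that one round costs $\mathcal{O}_{k,s,t}(n_i)$ independently of $m_i$: candidate protrusions never contain vertices of degree above a constant, so these vertices are skipped, and the geometric series becomes one in $n$. Alternatively, stop as soon as $m_i\ge\frac{2t+1}{1-2\rho}n_i$. Any solution $X$ for which $N(X)$ is not a $\rho$-cover leaves more than $(1-2\rho)m_i$ edges in $G_i-N(X)$, which has treewidth at most $2t+1$ and hence at most $(2t+1)n_i$ edges.

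Steps 2 and 3 need repairs too.

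\emph{Extractor.} Your dichotomy forgets that $N(X)$ also contains the opposite literals $\overline{X}$, which can carry the whole cover mass (a backdoor literal $x$ in few clauses whose negation $\bar x$ occurs in almost all clauses). You must therefore also output every literal of degree at least $(\rho/k)m$; there are at most $2k/\rho$ of them. Such a hit is $\bar x\in N(X)$ rather than $x\in X$. Likewise, a solution may destroy your biclique $(A,B)$ only through $\bar a\in X$ for some $a\in A$. Hence the right invariant is $N[X]\cap V'\neq\emptyset$, and the literal branch must try both $\ell_i=v$ and $\ell_i=\bar v$. Moreover, the $T$ heavy literals fed to the biclique argument must be taken among those without a heavy clause neighbour. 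This restriction is legitimate, since otherwise your first case applies. It is also necessary, since a single clause of degree $m/2$ makes arbitrarily many literals heavy without creating any large biclique.

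\emph{Literal branch.} You must not delete $v$. A solution is a $k$-tuple that may repeat a literal at several indices (this is how backdoors of size less than $k$ are encoded), and deleting $v$ removes it from every other $P_j$. For $\varphi=(v)$ with $k=2$ the only solution is $(v,v)$, yet every literal branch on $v$ or $\bar v$ dies. Instead:
\begin{itemize}
\item keep $v$ and $\bar v$ as vertices;
\item delete only the clauses in $N(v)$ and the incidence edges of $\bar v$;
\item remove $\bar v$ from the other candidate sets;
\item discard the branch if some $P_j$ becomes empty;
\item then apply the same cleanup as in your clause branch, since shrinking the $P_j$ can violate (W4).
\end{itemize}
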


Towards this, we first establish the three modules (reducer, extractor, and progressor) and combine them to prove Theorem~\ref{thm:mainTechnicalTheorem}, which is then used to establish our main Theorem~\ref{thm:main}.

\begin{restatable}[Reducer]{lemma}{reducerLemma}\label{lem:reducer}
    There exist $\rho = \rho(k,s,t) \in (0,1)$ and an algorithm $\RR$ which takes as input a well-formed instance $I = (k,s,t,G)$ of \bdextproblemtw, where $G$ has $n$ vertices and $m$ edges. $\RR$ runs in time $\mathcal{O}_{k,s,t}(n+m)$ and outputs a well-formed instance $I' = (k,s,t,G')$ of \bdextproblemtw\ such that
    \begin{itemize}
        \item $I$ is a yes-instance if and only if $I'$ is a yes-instance;
        \item given a solution to $I'$, one can compute a solution to $I$ in time $\mathcal{O}_{k,s,t}(n)$; and
        \item $I'$ is \emph{ready for extraction}: 
        for every solution $X$ to $I'$ (if there is any), $N_{G'}(X)$ is a $\rho$-cover in~$G'$.
    \end{itemize}
\end{restatable}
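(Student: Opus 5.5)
We will realize the \texttt{reducer} as a protrusion-replacement procedure run in rounds. The underlying observation is this: if some solution $X$ has $N_{G}(X)$ not a $\rho$-cover, then more than a $(1-\rho)$-fraction of the total degree lies in $G-N_G(X)$. By Observation~\ref{obs:treewidth-translation}, that graph has treewidth at most $2t+1$. Its separation from the rest passes only through $N_G(X)$, but $N_G(X)$ can be large. We therefore use the standard argument for linear-time protrusion replacement (as in Fomin et al.) on the graph $G - X$ with modulator $N_G(X)$. This argument shows that $G$ contains $\Omega(\rho^{-1})$-many edge-disjoint, or at least a large total mass of, $\mathcal{O}(t)$-protrusions. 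More precisely, we would show that if $N_G(X)$ is not a $\rho$-cover, then a constant fraction $\epsilon(k,s,t)$ of all edges lies inside $c$-protrusions, with $c=\mathcal{O}(t)$, each of size at least some threshold $\tau(k,s,t)$. We find such protrusions in linear time by the standard greedy/tree-decomposition-based protrusion finder: compute a tree decomposition of width $\mathcal{O}(t)$ of the low-treewidth part via a bounded-treewidth approximation on a carefully selected subgraph, then cut the tree into pieces with boundary $\mathcal{O}(t)$.

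The key step is to replace each large protrusion $Q$ by a representative of size $f(k,s,t)$ while preserving yes/no status and well-formedness. The difficulty, emphasized in the overview, is that the annotations $P_1,\dots,P_k$ and the $SL$-pattern bound $s$ carry information across the boundary. We handle this by enriching the boundary.

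We define the \emph{type} of $Q$ relative to a boundary $B$ of size $\mathcal{O}(t)$ as a finite-index equivalence. It records the following data:
\begin{itemize}
\item which subset of candidate indices $J\subseteq[k]$ the solution places inside $Q$, together with the polarity/conflict pattern on $B$;
\item which boundary vertices are dominated or deleted;
\item the resulting partial treewidth state of $Q-N[X\cap Q]$ glued to $B$;
\item whether the domination condition is satisfiable inside $Q$;
\item which clause vertices of $Q$ are in $N(X\setminus Q)$, which is only possible via the boundary.
\end{itemize}
Since $X$ has at most $k$ vertices, all of this is expressible in MSO with $k$ free variables plus boundary constants, so Courcelle-style finite index gives a bound $f(k,t)$ on representatives. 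Well-formedness (W4) and (W5) must survive the replacement. For (W5), we choose a representative among graphs with the same type whose $SL$-pattern contribution is no larger. To achieve this, we include in the type the maximal proper semi-ladder heights realisable within $Q\cup B$ restricted to each $P_i$; these are bounded by $s$, so the index stays finite. For (W4), we include nonemptiness of $P_i\cap Q$ and the non-containment witnesses. Solutions of the reduced instance lift back by running the MSO-optimization on the original protrusion in time linear in its size, giving the $\mathcal{O}_{k,s,t}(n)$ lifting.

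For the running time, each round costs $\mathcal{O}_{k,s,t}(n+m)$. If the instance is not ready for extraction, each round removes an $\epsilon$-fraction of the edges, since the protrusions are large and the replacements are small. We therefore measure progress in edges rather than vertices, which avoids the $\log$ factor discussed in the overview. Summing the geometric series gives linear time overall. We stop once the protrusion finder reports no large-protrusion mass; the contrapositive of the structural claim then yields that every solution neighbourhood is a $\rho$-cover.

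The main obstacle is this structural claim together with a linear-time protrusion finder in dense graphs. Here $G$ itself may have huge treewidth, and we cannot guess $X$. The fallback plan is the approach of Fomin et al.: exploit that $G-N_G(X)$ has bounded treewidth and low degree mass on $N_G(X)$, and find protrusions via a modulator-free procedure. This procedure is a bounded-width tree decomposition of a subgraph obtained by deleting high-degree vertices, using the fact that vertices of degree above $\rho m/k$ number $\mathcal{O}(k/\rho)$.
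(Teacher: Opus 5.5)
The overall architecture matches the paper: rounds of protrusion replacement, with an MSO finite-index replacer and Courcelle-style solution lifting. However, there is a genuine gap in your progress and termination argument, which you yourself flag as the main obstacle. You need that non-readiness forces a constant fraction of the \emph{edges} into large $\mathcal{O}(t)$-protrusions, and that one round then deletes an $\epsilon$-fraction of the edges. This is never proved. The tool you fall back on, the linear-time replacement of \cite{FominLMRS15_uib} (their Theorem~3 with Lemmas~6, 7 and~16), only guarantees that the number of \emph{vertices} shrinks by a constant factor. In a dense instance, a round can remove a constant fraction of the vertices (e.g.\ of a long sparse part hanging off a dense core) while removing almost no edges. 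So you face a dilemma. With a vertex-based stopping rule and $\mathcal{O}(n_i+m_i)$ per round, you only get $\mathcal{O}(n+m\log n)$. With an edge-based stopping rule, readiness at termination rests on the unproven edge version of the structural claim.

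Your fallback finder is also flawed. Deleting the few vertices of degree above $\rho m/k$ need not leave a graph of bounded treewidth, since $N_G(X)$ may consist of many low-degree vertices.

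The idea you are missing is to remove the dependence on $m$ from each round, rather than to change the measure. The paper re-examines Lemma~16 of \cite{FominLMRS15_uib}:
\begin{itemize}
\item the enumerated connected sets $S$ with $|S|\le 6s'$ and $|N(S)|\le r$ never contain vertices of degree above $6s'+r$;
\item grouping by neighbourhood is done by bucket sort;
\item the bounded-degree conflict graph $H$ is built by local enumeration.
\end{itemize}
This shows that one round costs $\mathcal{O}_{k,s,t}(n_i)$ (Lemma~\ref{lem:linear-protrusion-replacement}), so the geometric decrease of $n_i$ alone gives linear time. Readiness at the fixed point then comes from Lemma~6 of \cite{FominLMRS15_uib}, applied to the treewidth-$(2t+1)$ modulator $N_{G'}(X)$ (Observation~\ref{obs:treewidth-translation}). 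Either $N_{G'}(X)$ is a $\rho$-cover, or $|N_{G'}[N_{G'}(X)]|\le n'/(244a)$ for a suitable $a=a(k,s,t)$. In the latter case, Lemma~7 yields an $(\alpha,\beta)$-protrusion decomposition under which another round would have shrunk $G'$, a contradiction.

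An edge-based version of your plan would need at least two further ingredients. First, the observation that a non-$\rho$-cover $N_G(X)$ forces $m<(2t+1)n/(1-2\rho)$, because all edges avoiding $N_G(X)$ lie in a graph of treewidth at most $2t+1$. Second, a careful conversion of vertex shrinkage into edge shrinkage. Neither is in the proposal.

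A secondary problem is your handling of (W5) in the replacement step. Proper semi-ladders can straddle the boundary of a protrusion $Q$, because horizontal non-edges between interior and exterior vertices come for free. So the maximal semi-ladder heights realisable inside $Q\cup B$ do not determine the $SL$-pattern of the glued graph. Choosing a representative with no larger local contribution therefore does not ensure that (W5) survives. The same applies to the global condition $P_i\not\subseteq N_G(C)$ in (W4).

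The paper needs no bespoke type for this. The $P_i$ are simply unary predicates. For fixed $k,s$, the $SL$-pattern bound is first-order, by forbidding every tuple of heights summing to $s+1$. Hence both well-formedness and being a yes-instance are $\mathrm{MSO}_2$ sentences. Equal $q$-types plus Feferman--Vaught compositionality then give canonical equivalence directly (Lemma~\ref{lem:finite-index}).
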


\begin{restatable}[Extractor]{lemma}{extractorLemma}\label{lem:extractor}
    There is an algorithm $\EE$ which takes as input a well-formed instance $I=(k,s,t,G)$ of {\bdextproblemtw} with $k\ge 1$, where $G$ has $n$ vertices and $m$ edges, and some $0<\rho<1$ such that for every solution $X$ to $I$, the set $N_G(X)$ is a $\rho$-cover in $G$.
    $\EE$ runs in time $(\frac{2k}{\rho})^{\mathcal{O}(t^2)} \cdot |G|$ and outputs a set $V'\subseteq V(G)$ such that
    \begin{itemize}
        \item if $I$ is a yes-instance witnessed by a solution $X$ 
        then $N_G[X] \cap V' \neq \emptyset$, and
        \item $|V'| \le (\frac{2k}{\rho})^{\mathcal{O}(t)}$.
    \end{itemize}
\end{restatable}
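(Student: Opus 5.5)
Throughout, fix a hypothetical solution $X=\{\ell_1,\dots,\ell_k\}$ and note that $N_G(X)$ consists of the $k$ opposite literals $\bar\ell_1,\dots,\bar\ell_k$ together with the clause vertices incident to $X$. The plan is to compute, deterministically and in the claimed time, three small sets whose union is $V'$. Two are sets of heavy vertices, and the third is either a small set of heavy literals or the vertex set of a biclique. We then argue that at least one of them must meet $N_G[X]$.

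\emph{Step 1: heavy vertices.} Let $M=\sum_v\deg_G(v)=2m$. Fix thresholds $\delta_1=\rho/(4k)$ and a much smaller $\delta_2=\delta_2(k,\rho,t)$, chosen in Step 3. Put $H_1$ as all vertices of degree at least $\delta_1M$, and $H_2$ as all clause vertices of degree at least $\delta_2M$. Then $|H_1|\le 1/\delta_1$ and $|H_2|\le 1/\delta_2$, and both are found in linear time. If some $\bar\ell_i$ lies in $H_1$, or some clause of $N_G(X)$ lies in $H_2$, we are done. Otherwise the opposite literals contribute less than $k\delta_1M=\rho M/4$ to the cover. Hence the light clause neighbours of $X$ (those not in $H_2$) carry degree sum at least $3\rho M/4$. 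So some $\ell_i\in X$ has light-clause weight
\[w(\ell):=\sum_{C\in N_G(\ell)\setminus H_2}\deg_G(C)\ \ge\ \tfrac{\rho M}{2k}.\]
We compute $w(\ell)$ for all literal vertices in linear time, by summing over incidence edges. Let $Y$ be the set of literals with $w(\ell)\ge \rho M/(2k)$.

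\emph{Step 2: small $Y$.} If $|Y|<2r$ for a bound $r=(2k/\rho)^{\mathcal O(t)}$ fixed below, output $V'=H_1\cup H_2\cup Y$. By the above, either a heavy set meets $N_G(X)$ or $X\cap Y\neq\emptyset$.

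\emph{Step 3: large $Y$ yields a biclique.} Otherwise choose $Y'\subseteq Y$ of size $r$ containing no opposite-literal pair, which is possible since at most half of $Y$ is lost. Count weighted incidences $\sum_{C\notin H_2}\deg(C)\,|N(C)\cap Y'|\ge r\rho M/(2k)$. Normalise the weights $\deg(C)/M$, whose total is at most $1$. By Jensen/convexity,
\[\sum_{C\notin H_2}\deg(C)\binom{|N(C)\cap Y'|}{t+2}\ \ge\ \tfrac{M}{2}\binom{r\rho/(2k)}{t+2}.\]
Hence some $(t+2)$-subset $S\subseteq Y'$ has light common neighbourhood of total degree at least $\tfrac M2\binom{r\rho/2k}{t+2}/\binom{r}{t+2}\approx M(\rho/4k)^{t+2}$. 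Choose $\delta_2=(\rho/4k)^{t+2}/(4(t+2))$. Every light clause has degree below $\delta_2M$, so $S$ has at least $t+2$ common clause neighbours, i.e.\ there is a $K_{t+2,t+2}$. Choosing $r$ large enough in terms of $k,\rho,t$ makes the binomial estimate valid. We find such an $S$ by enumerating all $\binom{r}{t+2}=(2k/\rho)^{\mathcal O(t^2)}$ subsets and computing each common neighbourhood in $\mathcal O(|G|)$ time. Then output $V'=H_1\cup H_2\cup S\cup T$, where $T$ is a set of $t+2$ common clause neighbours of $S$.

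Correctness of Step 3: suppose $N_G[X]$ misses $S\cup T$. The biclique survives in $G-N_G[X]$. Since $S$ has no opposite pairs and clause vertices are untouched by contraction, contracting opposite-literal edges preserves a $K_{t+2,t+2}$ minor, giving treewidth at least $t+1$. This contradicts the bounded-treewidth property. The size bound $|V'|\le(2k/\rho)^{\mathcal O(t)}$ is immediate.

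The main obstacle is Step 3: making the averaging argument produce not just a heavy common neighbourhood but $t+2$ distinct common clauses. This is exactly why heavy clauses must be split off into $H_2$ beforehand, with $\delta_2$ tuned to the Jensen bound. I expect the delicate part to be the bookkeeping of constants so that $r$, and hence the running time, stay within $(2k/\rho)^{\mathcal O(t)}$ and $(2k/\rho)^{\mathcal O(t^2)}$ respectively.
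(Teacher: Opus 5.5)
Your argument is correct. Its skeleton is essentially the paper's: heavy literals, heavy clauses, literals of large light-clause weight, and a brute-forced biclique when that last set is large. The routes differ in how the biclique is obtained and used.

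The paper keeps the heavy-clause threshold mild, $\beta=\rho/(2(2t+1)k)$, so there are only $\mathcal{O}((t+1)k/\rho)$ heavy clauses. It proves that a $K_{2t+2,2t+2}$ exists by sampling $2t+2$ light clauses proportionally to degree without replacement. Each candidate literal survives with probability $\gamma\ge(\rho/2k)^{2t+2}$. This forces a literal pool of size $(2k/\rho)^{\mathcal{O}(t)}$, hence the $(2k/\rho)^{\mathcal{O}(t^2)}$ enumeration. It also needs a separate case when fewer than $2t+2$ light clauses exist. Since its literal side may contain opposite pairs, it argues in the uncontracted residual via Observation~\ref{obs:treewidth-translation}, which is why it needs $2t+2$ rather than $t+2$.

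You instead make the clause threshold $\delta_2$ exponentially small in $t$. Then $|H_2|$ grows to $(2k/\rho)^{\mathcal{O}(t)}$, still within the output bound. In exchange, a K\H{o}v\'ari--S\'os--Tur\'an-type convexity count works with a pool of only $r=\mathcal{O}((t+1)k/\rho)$ literals. For instance, $r\ge 4k(t+1)/\rho$ makes every factor of $\binom{r\rho/2k}{t+2}/\binom{r}{t+2}$ at least $\rho/4k$, which yields more than $2(t+2)$ common light clauses with your $\delta_2$. Choosing the pool free of opposite pairs lets you contradict the contracted treewidth bound directly with a $K_{t+2,t+2}$. Your route avoids the case split and in fact gives running time $(k/\rho)^{\mathcal{O}(t)}\,t^{\mathcal{O}(t)}\cdot|G|$, better than the $(2k/\rho)^{\mathcal{O}(t^2)}\cdot|G|$ you state.

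Two small points to tighten:
\begin{itemize}
\item For the Jensen step, use the convex extension of $x\mapsto\binom{x}{t+2}$ that vanishes below $t+1$. Place the missing weight at $x=0$, since the clause degrees sum to at most $m=M/2$.
\item Let $w(\ell)$ range over clause neighbours only. Counting $\bar\ell$ would in fact be harmless, because a literal vertex has at most one neighbour in $Y'$ and so never enters the $(t+2)$-subset count.
\end{itemize}
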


\begin{restatable}[Progressor]{lemma}{progressorLemma}\label{lem:branchingReduction}
    There is an algorithm $\PP$ which takes as input a well-formed instance $I=(k,s,t,G)$ of \bdextproblemtw, where $G$ has $n$ vertices and $m$ edges, and a vertex $v$ of $G$. $\PP$ runs in time $\mathcal{O}(k^2(n+m))$ and outputs a set $\mathcal{I}=\{I_1,\ldots, I_q\}$ of $q\le 2k$ well-formed instances of \bdextproblemtw such that
    \begin{itemize}
        \item $I_{i'}= (k-1, s, t, G_{i'})$ or $I_{i'}= (k, s-1, t, G_{i'})$ for some labeled graph $G_{i'}$ for all $i'\in [q]$;
        \item $I$ is a yes-instance via a solution $X$ with $v \in N_G[X]$ if and only if $\mathcal{I}$ contains a yes-instance;
        \item given a solution $X$ to an instance $I'\in \mathcal{I}$ one can derive a solution to $I$ in $\mathcal{O}(n)$ time.
    \end{itemize}
\end{restatable}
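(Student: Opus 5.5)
We split according to whether $v$ is a literal vertex or a clause vertex. In each case we branch over at most $2k$ ways in which $v$ can lie in $N_G[X]$ for a solution $X=\{\ell_1,\dots,\ell_k\}$. Every branch is followed by the same \emph{clean-up}: delete every clause vertex $C$ with $P_j\subseteq N_G(C)$ for some $j$, and discard the branch if some candidate set has become empty. The clean-up is what restores (W4).

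Two facts make the clean-up safe.
\begin{itemize}
\item A deleted clause $C$ is adjacent to all of $P_j$, hence to $\ell_j$, so $C\in N_G[X]$ for every solution $X$. Its deletion therefore changes neither $G-N_G[X]$ nor the domination requirement.
\item Deleting clause vertices never enlarges any $P_j$ or any neighbourhood, so one pass suffices. It can be done in $\mathcal{O}(k(n+m))$ time by computing $|N(C)\cap P_j|$ for all $C$ and $j$ and comparing with $|P_j|$.
\end{itemize}
Deleting vertices and shrinking candidate sets can only destroy proper semi-ladders, so (W5) is preserved with the same $s$, and (W1)--(W3) are preserved as well. With at most $2k$ branches, each costing $\mathcal{O}(k(n+m))$, this gives the running time $\mathcal{O}(k^2(n+m))$.

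\emph{Case 1: $v$ is a literal vertex.} Then $v\in N_G[X]$ means that $v=\ell_i$ or $\bar v=\ell_i$ for some $i$. For each $i\in[k]$ and each $u\in\{v,\bar v\}\cap P_i$ we create one branch in which $u$ is fixed as $\ell_i$, giving at most $2k$ branches. The branch is built as follows.
\begin{enumerate}
\item Delete $u$, $\bar u$ and all clause neighbours of $u$.
\item Remove $u$ and $\bar u$ from all other candidate sets.
\item Drop the index $i$ and renumber the remaining candidate sets.
\item Run the clean-up.
\end{enumerate}
The result is an instance $(k-1,s,t,G')$.

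Let $X'=X\setminus\{u\}$. The vertices removed in the first step are exactly $N_G[u]$. The clauses adjacent to $\bar u$ stay and merely lose that neighbour, which mirrors what happens in $G-N_G[X]$. Combining this with the two facts about the clean-up, we get $G'-N_{G'}[X']=G-N_G[X]$. The non-conflicting requirement with $u$ is enforced by the second step.

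Hence $X'$ is a solution of the branch if and only if $X'\cup\{u\}$, with $u$ inserted at position $i$, is a solution of $I$ in which $u=\ell_i$. Lifting a solution back is therefore an $\mathcal{O}(n)$ insertion.

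\emph{Case 2: $v=C$ is a clause vertex.} Then $v\in N_G[X]$ means that some $\ell_i$ lies in $N_G(C)$. For each $i\in[k]$ we create one branch that sets $P_i':=P_i\cap N_G(C)$ and then runs the clean-up, which in particular deletes $C$ itself. This gives $k$ branches.

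Correctness is immediate: $X$ is a solution of the branch exactly when $X$ is a solution of $I$ with $\ell_i\in C$, again because the graph $G-N_G[X]$ is unchanged.

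\emph{Main obstacle: showing that $s$ drops.} We must show that the new value $s_i'$ for index $i$ satisfies $s_i'\le s_i-1$, so that the branch is a well-formed instance $(k,s-1,t,G')$. Take a proper semi-ladder $a_1,\dots,a_h,b_1,\dots,b_h$ in $G'$ with all $a_j\in P_i'$. It is also a proper semi-ladder in $G$. By (W4) for $G$ there is a literal $a_0\in P_i\setminus N_G(C)$. We extend the semi-ladder by setting $a_{h+1}:=a_0$ and $b_{h+1}:=C$.
\begin{itemize}
\item The new required non-edge $\{a_0,C\}$ holds by the choice of $a_0$.
\item The new required edges $\{a_j,C\}$ for $j\le h$ hold because $a_j\in P_i'\subseteq N_G(C)$.
\item No other pairs are constrained: $a_{h+1}$ sits in the last position, so no $b_j$ with $j>h+1$ must be adjacent to it.
\end{itemize}
This yields a proper semi-ladder of height $h+1$ in $G$ with all literal vertices in $P_i$, so $s_i\ge s_i'+1$. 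The other entries of the $SL$-pattern do not increase, so the total pattern size is at most $s-1$, as required.
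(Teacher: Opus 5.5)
Your clause case (including the extension of the semi-ladder by the pair $(a_0,C)$) and your clean-up are correct and match the paper. The gap is in Case~1.

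\textbf{Where Case~1 fails.} A solution is a tuple $(\ell_1,\dots,\ell_k)$ with $\ell_j\in P_j$ whose entries may repeat. The non-conflicting condition only forbids opposite-literal edges, and repetition is exactly how backdoors of size less than $k$ are represented (initially all $P_j$ coincide). By deleting $u$ from $G$ and removing $u$ from all other candidate sets, your branch forbids $\ell_j=u$ for every $j\neq i$. Hence your claimed equivalence ``$X'$ is a solution of the branch if and only if \dots'' fails in the forward direction whenever $u$ occupies several positions of $X$.

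\textbf{Counterexample.} Take $k=2$, let $G$ consist only of one matched pair $u,\bar u$ (no clause vertices), set $P_1=P_2=\{u\}$, and let $v=u$. This instance is well-formed. It is a yes-instance via $X=(u,u)$ with $v\in N_G[X]$. However, each of your two branches empties the other candidate set and is discarded, and $\bar u$ lies in no $P_j$. So $\mathcal{I}=\emptyset$, contradicting the second bullet of the lemma.

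\textbf{The fix.} The repair, which is what the paper does, is as follows.
\begin{itemize}
\item Keep $u$ and $\bar u$ as vertices, and keep $u$ in the other candidate sets.
\item Delete only the clause neighbours of $u$ and the incidence edges of $\bar u$.
\item Remove only $\bar u$ from the sets $P_j$ with $j\neq i$, since choosing $\bar u$ would conflict with $u$.
\end{itemize}
Then $\{u,\bar u\}$ becomes an isolated matched pair, so (W3) still holds, and any $P_j$ containing $u$ trivially satisfies (W4). Re-choosing $u$ at another position adds nothing to $N[X']$ beyond this pair.

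\textbf{What you then need to add.} Your identity $G'-N_{G'}[X']=G-N_G[X]$ must be weakened to: the two graphs coincide except that the former may still contain the isolated pair $\{u,\bar u\}$ (when $u\notin X'$). You then need the extra observation that this pair contracts to a single isolated vertex, which neither needs nor provides domination. Consequently, treewidth and the domination property are unaffected, and the equivalence goes through for all solutions, including those that reuse $u$.
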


In the following sub-sections, we prove Lemmas~\ref{lem:reducer},~\ref{lem:extractor} and~\ref{lem:branchingReduction} and then use them to first establish Theorem~\ref{thm:mainTechnicalTheorem} and finally prove Theorem~\ref{thm:main}.

\subsection{The Reducer}\label{sec:reducer}

Our reducer (Lemma~\ref{lem:reducer}) turns a well-formed instance $I=(k,s,t,G)$ into an equivalent one whose every solution $X$ is ``sample-able'' in the sense that $N_G(X)$ covers a good fraction of all edge-endpoints of $G$; this is the $\rho$-cover property the extractor will consume in the next subsection. We obtain it by following the strategy of \cite[Theorem~3]{FominLMRS15_uib}\footnote{\cite{FominLMRS15_uib} is the openly available version of \cite{DBLP:conf/soda/FominLMRS15}.}:
A protrusion replacer is an algorithm to obtain an equivalent instance where a subgraph with a certain structure, a \emph{protrusion}, is replaced by a subgraph of fixed size. We will prove the existence of such an algorithm in Section~\ref{sec:protrusion-replacer}; for now, assume it is given as a black box.%

Using this black box algorithm, we describe a reducer in Section~\ref{sec:reducer-rho-cover}, which follows \cite{FominLMRS15_uib} closely. The key observation is that if many vertices of $G$ lie in large disjoint protrusions, we can replace them all in a single linear-time round (Lemma~\ref{lem:reducer}) and shrink the whole instance by a constant fraction (depending on $k,s,$ and $t$). We iterate this round until no such collection of protrusions remains. At that fixed point, a structural argument about $G$ (via Lemmas~6 and~7 in~\cite{FominLMRS15_uib}) shows that the neighborhood $N_G(X)$ of every weak backdoor $X$ must comprise a constant fraction of the vertices of $G$---intuitively, a small $N_G(X)$ would leave the bulk of $G$ packed into large protrusions, contradicting termination. A short degree-counting argument then converts this lower bound on $|N_G(X)|$ into the desired $\rho$-cover.

\subsubsection{The Protrusion Replacer}\label{sec:protrusion-replacer}

\paragraph{Boundaried Graphs and Protrusions.}
We use the standard boundaried-graph vocabulary for protrusion replacement; see~\cite[Section~2.2 and Definition~2.17]{DBLP:journals/corr/abs-0904-0727} and~\cite{DBLP:journals/jacm/BodlaenderFLPST16}\footnote{\cite{DBLP:journals/corr/abs-0904-0727} is the openly available version of \cite{DBLP:journals/jacm/BodlaenderFLPST16}.}. An \emph{$r$-boundaried graph} is a triple $\mathbf{G} = (G, \partial G, \lambda)$ where $G$ is a vertex- and edge-colored graph, $\partial G \subseteq V(G)$ has size at most $r$, and $\lambda\colon \partial G \to [r]$ is an injective \emph{labeling}. We lift standard graph notation to boundaried graphs as usual. In particular, let $\mathbf{G}[X] := (G[X], \partial G, \lambda)$ whenever $\partial G \subseteq X$. Two $r$-boundaried graphs are \emph{compatible} if their labelings induce an isomorphism between their boundary graphs $\mathbf{G}_1[\partial G_1]$ and $\mathbf{G}_2[\partial G_2]$ that respects vertex- and edge colors.
 The \emph{glue} $\mathbf{G}_1 \oplus \mathbf{G}_2$ of two $r$-boundaried compatible graphs $\mathbf{G}_1$ and $\mathbf{G}_2$ is the $r$-boundaried graph obtained from their disjoint union by identifying equally labeled boundary vertices and retaining that common labeled boundary. Compatibility makes the colors and edges among the identified vertices well-defined.
Forgetting the boundary of an $r$-boundaried graph $\mathbf{G}=(G,\partial G,\lambda)$ yields its \emph{underlying plain graph} $\textnormal{plain}(\mathbf{G}) := G$.

\begin{definition}[$r$-protrusion]\label{def:protrusion}
    Let $G$ be a well-formed vertex- and edge-colored graph and let $r \in \mathbb{N}$.
    A set $Y \subseteq V(G)$ is an \emph{$r$-protrusion} of $G$ if
    $|\partial_G(Y)| \le r$, where $\partial_G(Y) = \{v \in Y : (N_G(v) \setminus Y) \neq \emptyset\}$, and $\mathrm{tw}(G[Y]) \le r$. 
\end{definition}

\begin{definition}[Canonical equivalence]\label{def:canonical-equivalence}
    Fix \(k,s,t \in \mathbb{N}\).
    We say two boundaried graphs $\mathbf{G}_1, \mathbf{G}_2$ are \emph{canonically equivalent}
    (written $\mathbf{G}_1 \equiv_{k,s,t} \mathbf{G}_2$) if
    they are compatible 
    and,
    for every boundaried $\mathbf{H}$ compatible with both, $\textnormal{plain}(\mathbf{G}_1 \oplus \mathbf{H})$ is well-formed if and only if $\textnormal{plain}(\mathbf{G}_2 \oplus \mathbf{H})$ is, and whenever both are, it holds that
\begin{multline*}
        (k,s,t,\textnormal{plain}(\mathbf{G}_1 \oplus \mathbf{H})) \text{ is a yes-instance of } \bdextproblemtwbare
        \iff \\
        (k,s,t,\textnormal{plain}(\mathbf{G}_2 \oplus \mathbf{H})) \text{ is a yes-instance of } \bdextproblemtwbare.
\end{multline*}
\end{definition}
The relation \(\equiv_{k,s,t}\) is clearly an equivalence relation.
The following lemma shows that this equivalence relation has finite index in a constructive sense: every boundaried graph of bounded treewidth can be replaced, in linear time, by a bounded-size representative of its class, from which solutions can moreover be lifted back to the original.
\begin{lemma}[Finite index]\label{lem:finite-index}
    There is a computable function \(f\colon\mathbb{N}^4\to\mathbb{N}\) such that
    for every \(k,r,s,t \in \mathbb{N}\) and every \(r\)-boundaried graph \(\mathbf{G}\) of treewidth at most \(r\), one can compute in time \(f(k,r,s,t) \cdot |\mathbf{G}|\) 
    a representative \(\textnormal{rep}(\mathbf{G})\) with the following properties.
    \begin{enumerate}
        \item[(R1)] \emph{(Small and equivalent)} \(\textnormal{rep}(\mathbf{G})\) has size at most \(f(k,r,s,t)\) and satisfies \(\textnormal{rep}(\mathbf{G}) \equiv_{k,s,t} \mathbf{G}\).
        \item[(R2)] \emph{(Solution-respecting)} For every boundaried graph \(\mathbf{H}\) that is compatible with \(\mathbf{G}\) and every solution \(W\) of \((k,s,t,\textnormal{plain}(\textnormal{rep}(\mathbf{G})\oplus\mathbf{H}))\), one can compute in time \(f(k,r,s,t)\cdot|\mathbf{G}|\) a solution \(W'\) of \((k,s,t,\textnormal{plain}(\mathbf{G}\oplus\mathbf{H}))\).
    \end{enumerate}
\end{lemma}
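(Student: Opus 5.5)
The plan is to express membership in the problem as a monadic second-order (MSO) property of bounded-width structures, and then apply the standard finite-state (Myhill–Nerode) machinery for boundaried graphs of bounded treewidth. The one difficulty is that the property is not a local graph property, because of the annotations $P_1,\dots,P_k$ and the global conditions (W4) and (W5). I would handle this by refining canonical equivalence with an explicit finite "type" that records, for the boundaried piece, all information that can flow across the boundary.

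First, I would show that being a yes-instance of \bdextproblemtw\ with fixed $k,s,t$ is MSO-definable over the colored graph, with the vertex colors $P_1,\dots,P_k$ available as unary predicates. The properties to express are as follows.
\begin{itemize}
\item Existence of literals $\ell_1,\dots,\ell_k$ with $\ell_i\in P_i$; this is first-order.
\item The non-conflicting condition; also first-order.
\item The domination condition: existentially quantify a set of literal vertices avoiding $N[X]$ and containing no matched pair.
\item Treewidth at most $t$ of the contracted residual graph. This holds via a minor-exclusion characterization, since the contracted graph of $G-N_G[X]$ is an MSO-interpretable minor and minor-closed properties are MSO-expressible with the finite obstruction set.
\end{itemize}
Well-formedness also needs to be expressed. (W1)--(W3) are first-order. (W4) is first-order. (W5) is expressible because bounded-height proper semi-ladders are first-order patterns: "the maximum height in $P_i$ equals $s_i$" is first-order for each of finitely many vectors with $\sum s_i\le s$.

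Second, I would invoke the known fact that for an MSO sentence $\psi$ and $r$-boundaried graphs, the relation "$\mathbf{G}_1\oplus\mathbf{H}\models\psi \iff \mathbf{G}_2\oplus\mathbf{H}\models\psi$ for all $\mathbf{H}$" has finite index bounded by a computable function of $|\psi|$ and $r$. For (R1), I would use the known computability of the MSO type by a tree automaton (Courcelle) on a tree decomposition of width $r$. Such a decomposition is obtained in linear time by Bodlaender's algorithm, with the boundary added to all bags, giving width at most $2r$. Having computed the type of $\mathbf{G}$, I would output a precomputed minimal-size representative of that type. Its size is bounded by $f$ because there are finitely many types and each nonempty type has some member, which can be found by enumeration since the index is computable. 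Taking $\psi$ to be the conjunction of well-formedness and yes-instance gives $\equiv_{k,s,t}$-equivalence. There is one subtlety: the refined equivalence is also required to preserve well-formedness separately, so I would intersect the equivalences for the two sentences.

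For (R2), the lifting step, I would use the optimization/extension version of the automaton argument. Make the solution variables $\ell_1,\dots,\ell_k$, together with the witness sets (domination set and treewidth witness), free variables. Split a solution $W$ of the glued graph into its $\mathbf{H}$-part and its $\textnormal{rep}(\mathbf{G})$-part. Record the partial type the $\textnormal{rep}$-part contributes, which consists of the MSO type with free variables of $\textnormal{rep}(\mathbf{G})$ together with the boundary assignment. Then run the automaton on $\mathbf{G}$ with backtracking to find an assignment of the free variables in $\mathbf{G}$ realizing the same partial type. By compositionality, gluing it with the $\mathbf{H}$-part satisfies $\psi$. This costs $f\cdot|\mathbf{G}|$.

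I expect the main obstacle to be (R2) under the model's constraints. The tree-decomposition-based treewidth witness is awkward as a free variable, so I would instead keep only $\ell_1,\dots,\ell_k$ and the domination set free. The treewidth condition stays a closed MSO subformula about the graph minus $N[X]$, and that subformula is determined by the free variables.
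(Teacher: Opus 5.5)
Your route is the paper's: MSO$_2$ sentences for well-formedness and for being a yes-instance, Feferman--Vaught-style compositionality over $\oplus$, Courcelle-style type computation on a width-$2r$ decomposition whose bags contain $\partial G$, and a constructive Courcelle step for lifting. Building $\textnormal{rep}(\mathbf{G})$ from a precomputed table instead of by the paper's in-place replacement of repeated types is immaterial. There are, however, two gaps in (R2).

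\emph{First gap: you keep the dominating literal set free.} A solution $W$ of $(k,s,t,\textnormal{plain}(\textnormal{rep}(\mathbf{G})\oplus\mathbf{H}))$ consists only of $\ell_1,\dots,\ell_k$. The dominating set is merely required to exist, so you are not given its part inside $\textnormal{rep}(\mathbf{G})$. You therefore cannot record the partial type your lifting step starts from. Producing such a set would mean solving the residual SAT instance on all of $\textnormal{rep}(\mathbf{G})\oplus\mathbf{H}$. That costs time in $|\mathbf{H}|$ and breaks the $f(k,r,s,t)\cdot|\mathbf{G}|$ bound, which matters later when recovery costs are summed over linearly many protrusions. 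The fix is what you already did for the treewidth condition: quantify the dominating set existentially inside the formula and leave only the solution free. This is what the paper does with its single free set variable $X$.

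\emph{Second gap, the heart of (R2): you never justify that the search in $\mathbf{G}$ succeeds.} That is, you do not show that $\mathbf{G}$ admits an assignment of the free variables realizing the partial type $\tau$ of $W\cap V(\textnormal{rep}(\mathbf{G}))$. This does not follow from $\textnormal{rep}(\mathbf{G})\equiv_{k,s,t}\mathbf{G}$, nor from agreement on the Myhill--Nerode classes of your two sentences. Those only say that $\mathbf{G}\oplus\mathbf{H}$ has \emph{some} solution, possibly with a different $\mathbf{H}$-part, which you cannot afford to recompute. The paper's argument runs as follows:
\begin{itemize}
\item The characteristic formula $\psi_\tau$ of $\tau$ has rank at most $q-1$, so $\exists X\,\psi_\tau(X)$ is a sentence of rank at most $q$.
\item This sentence holds in $\textnormal{rep}(\mathbf{G})$, witnessed by $W\cap V(\textnormal{rep}(\mathbf{G}))$.
\item It therefore holds in $\mathbf{G}$, because the two have the same rank-$q$ type.
\end{itemize}
Hence in (R1) you must select $\textnormal{rep}(\mathbf{G})$ by the full rank-$q$ MSO$_2$ type of $\mathbf{G}$, with boundary labels as unary predicates. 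This also gives compatibility and agreement of the lifted set on the boundary. The coarser equivalence obtained by intersecting the classes of the two sentences, as your write-up suggests, is not enough. Relatedly, ``the index is computable'' is not something you can simply invoke for an equivalence quantifying over all contexts $\mathbf{H}$. For $q$-types, the paper obtains a computable size bound from the replacement argument: after replacing repeated types, the decomposition has depth bounded by the number of types.
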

\begin{proof}
    Throughout, we work over the fixed finite signature consisting of the two vertex colors (\emph{literal} and \emph{clause}), the unary predicates \(P_1,\dots,P_k\), the two edge predicates (\emph{incidence} and \emph{opposite-literal}), and \(2r+1\) unary predicates for boundary labels. An \(r\)-boundaried graph uses only the first \(r\) of these predicates; the remaining predicates are used for the bags of the tree decomposition below.

    The only problem-specific ingredients are two $\mathrm{MSO}_2$ sentences \(\xi\) and \(\phi\). For an introduction to $\mathrm{MSO}_2$, we refer to \cite{Cou90} and \cite[Chapter~7.4.1]{CyganFKLMPPS15}.
    We claim that, for every plain graph \(J\),
    \begin{align*}
        J\models\xi &\iff J\text{ is well-formed},\\
        J\models\phi &\iff J\text{ is a well-formed yes-instance of }\bdextproblemtwbare.
    \end{align*}
    To define \(\xi\), observe that in Definition~\ref{def:bdextproblemtw}, conditions (W1) and (W2) are first-order constraints on the colors\footnote{We here simply interpret the first color as the color for literal vertices, the second for clause vertices, and so on, in an arbitrary fixed order.} and edges, (W3) says that every literal vertex is incident to exactly one opposite-literal edge, and (W4) essentially says that each \(P_i\) is non-empty and that every clause vertex misses some \(P_i\)-vertex. For~(W5), a proper semi-ladder of any fixed height \(h\) is a fixed finite configuration. Hence, the statement that there is a semi-ladder that has its entire literal side in \(P_i\) is first-order expressible.  Since \(s\) and \(k\) are fixed, the condition \(\sum_{i\in[k]}s_i\le s\) for the \(SL\)-pattern \((s_1,\dots,s_k)\) can be expressed by forbidding, for every tuple of non-negative integers \((h_1,\dots,h_k)\) with \(\sum_i h_i=s+1\), the simultaneous existence of proper semi-ladders of heights \(h_1,\dots,h_k\) whose literal sides lie in \(P_1,\dots,P_k\), respectively; a height-zero requirement is vacuous. Thus, \(\xi\) exists.

    To define \(\phi\), let \(\psi(X)\) be the $\mathrm{MSO}_2$ formula with one free vertex-set variable \(X\) that asserts that \(X\) consists of a valid selection of non-conflicting literals from the sets \(P_i\), and states that the contracted residual graph $(J-N_J[X])^c$, obtained by contracting all opposite-literal edges in $J-N_J[X]$,  has treewidth at most \(t\) and is satisfiable. For fixed \(t\), the property ``treewidth at most \(t\)'' is $\mathrm{MSO}_2$-expressible and satisfiability is expressed by the existence of a set \(S\) of literal vertices that contains no opposite-literal edge and in which every clause vertex has an incidence-neighbor. Moreover, the contracted residual graph is $\mathrm{MSO}_2$-interpretable from \(X\).
    \footnote{For fixed \(t\), the graphs of treewidth at most \(t\) form a minor-closed class and are therefore characterized by a finite set of forbidden minors; since ``\(H\) is a minor'' is $\mathrm{MSO}_2$-expressible for each fixed graph \(H\), so is ``treewidth at most \(t\)'' (see, e.g.,~\cite[Chapter~7]{CyganFKLMPPS15}).} We then set
    \[
        \phi:=\xi \wedge \exists X\,\psi(X).
    \]

    With \(\xi\) and \(\phi\) in hand, the lemma readily follows from the finite-state and protrusion-replacement machinery for $\mathrm{CMSO}$-definable properties of boundaried graphs developed by Bodlaender, Fomin, Lokshtanov, Penninkx, Saurabh, and Thilikos~\cite{DBLP:journals/corr/abs-0904-0727}. Since $\bdextproblemtwbare$ is a decision problem, we need only the finite-state property of that framework; the finite-integer-index refinement used there for optimization problems is not required.

    For completeness, we briefly summarize the key ideas of~\cite{DBLP:journals/corr/abs-0904-0727}.
    Let \(q\) be the maximum of the quantifier ranks of \(\xi\) and \(\phi\). The \emph{\(q\)-type} of a graph \(\mathbf{G}\), denoted \(q\textnormal{-type}(\mathbf{G})\), is the set of $\mathrm{MSO}_2$ sentences of quantifier rank at most \(q\) that hold in it. Up to logical equivalence, there are only finitely many such sentences over the fixed signature. Hence the number of \(q\)-types is bounded by a computable function of \(k,r,s,t\).
    We rely on a Feferman--Vaught-style compositionality theorem for \(\mathrm{MSO}_2\) on boundaried graphs: the \(q\)-type of a glue \(\mathbf{A}\oplus\mathbf{B}\) is determined by the \(q\)-types of \(\mathbf{A}\) and \(\mathbf{B}\)~\cite{FefermanV59,Mak04}.
    We apply it three times.

    First, every boundary vertex has a unique unary predicate, and w.l.o.g.\ \(q\ge 2\). Therefore, the \(q\)-type determines the labeled boundary graph. Thus, two boundaried graphs of equal \(q\)-type are compatible. When they are glued to any compatible context \(\mathbf{H}\), compositionality yields glues of equal \(q\)-type, whose plain graphs therefore agree on \(\xi\) and \(\phi\). Hence equal \(q\)-type implies canonical equivalence. In particular, the number of \(q\)-types bounds the number of canonical equivalence classes.

    Second, take a rooted tree decomposition for $G-\partial G$ of width at most \(r\), add \(\partial G\) to every bag, and convert the result into a nice tree decomposition for $G$. The resulting decomposition has width at most \(2r\), and its root contains \(\partial G\). For each bag, consider the boundaried graph consisting of the bag together with all vertices introduced in the subtree rooted at that bag, using the bag as the boundary. Its \(q\)-type can be computed bottom-up using Courcelle-style arguments.
    Whenever the same \(q\)-type occurs twice on a root-to-leaf path, replace the larger rooted subgraph by the smaller one having the same $q$-type. Compositionality ensures that this replacement leaves the \(q\)-type of the whole graph unchanged. Since the original boundary \(\partial G\) is retained throughout the construction, the resulting representative is compatible with \(\mathbf{G}\).
    Repeating this operation yields a graph of the same \(q\)-type whose size (number of edges plus number of vertices) is bounded by a computable function of \(k,r,s,t\): a nice tree decomposition has bounded branching, and after the replacements its depth is bounded by the number of possible \(q\)-types.
    Computing the bag \(q\)-types and performing the replacements takes time \(f(k,r,s,t)\cdot|\mathbf{G}|\) by Courcelle's Theorem~\cite{Cou90}. Together with the first application, this proves (R1).

    For the third application, we transfer witnesses to establish (R2). Since \(\exists X\,\psi(X)\) is a subformula of~\(\phi\), the quantifier rank of \(\psi\) is at most \(\mathrm{qr}(\phi)-1\le q-1\).
    For a graph \(\mathbf{G}\) and a set \(S \subseteq V(\mathbf{G})\), denote by \(q\textnormal{-type}(\mathbf{G},S)\) the \(q\)-type of the structure obtained by expanding \(\mathbf{G}\) with a fresh unary relation interpreted as \(S\). The Feferman--Vaught-style compositionality theorem extends to this enriched signature~\cite{FefermanV59,Mak04}: for compatible boundaried graphs \(\mathbf{A},\mathbf{B}\) and any \(U \subseteq V(\mathbf{A}\oplus\mathbf{B})\), the labeled type \(q\textnormal{-type}(\mathbf{A}\oplus\mathbf{B},U)\) is determined by \(q\textnormal{-type}(\mathbf{A},U\cap V(\mathbf{A}))\) and \(q\textnormal{-type}(\mathbf{B},U\cap V(\mathbf{B}))\).

    Now let \(W \subseteq V(\textnormal{plain}(\textnormal{rep}(\mathbf{G})\oplus\mathbf{H}))\) be a solution, and set \(W_R := W \cap V(\textnormal{rep}(\mathbf{G}))\) and \(W_H := W \cap V(\mathbf{H})\). Let \(\tau := (q{-}1)\textnormal{-type}(\textnormal{rep}(\mathbf{G}), W_R)\).
    Let \(\rho\) be the finite set of (normalized) $\mathrm{MSO}_2$ sentences of quantifier rank at most \(q-1\) over the extended signature, and define
    \[
        \psi_\tau(X) := \bigwedge_{\sigma \in \rho \cap \tau}\sigma \;\wedge\; \bigwedge_{\sigma \in \rho \setminus \tau}\neg\sigma,
    \]
    where the additional unary relation is viewed as the free set variable \(X\). By construction, \(\mathbf{G} \models \psi_\tau(S)\) if and only if \((q{-}1)\textnormal{-type}(\mathbf{G},S) = \tau\), and the realizability sentence \(\exists X\,\psi_\tau(X)\) has quantifier rank at most~\(q\). Since \(\textnormal{rep}(\mathbf{G}) \models \exists X\,\psi_\tau(X)\), witnessed by \(W_R\), and \(\textnormal{rep}(\mathbf{G})\) and \(\mathbf{G}\) have the same \(q\)-type, we also have \(\mathbf{G} \models \exists X\,\psi_\tau(X)\). The constructive variant of Courcelle's Theorem due to Arnborg, Lagergren, and Seese~\cite{ArnborgLS91}, applied to the boundaried graph \(\mathbf{G}\) of treewidth at most \(r\), therefore produces a set \(W_G \subseteq V(\mathbf{G})\) with \((q{-}1)\textnormal{-type}(\mathbf{G},W_G)=\tau\) in time \(f(k,r,s,t)\cdot|\mathbf{G}|\).

    Because each boundary label has its own unary predicate, the type \(\tau\) determines which boundary vertices belong to the distinguished set. Hence \(W_G\) and \(W_R\) agree on equally labeled boundary vertices, and \(W_G\) can be combined consistently with \(W_H\). Define \(W' := W_G \cup W_H \subseteq V(\mathbf{G}\oplus\mathbf{H})\). By labeled compositionality, we have that
    \[
        (q{-}1)\textnormal{-type}(\mathbf{G}\oplus\mathbf{H}, W') = (q{-}1)\textnormal{-type}(\textnormal{rep}(\mathbf{G})\oplus\mathbf{H}, W).
    \]
    The right-hand side satisfies \(\psi\), which has rank at most \(q-1\), because \(W\) is a solution. Therefore, the left-hand side also satisfies \(\psi\). Thus, \(W'\) is a solution, proving (R2).
\end{proof}

Replacing a single protrusion by its representative is a local, in-place modification of the graph; we record it in the following lemma.

\begin{lemma}[Protrusion replacement]\label{lem:protrusionReplacer}
    There is an algorithm that takes as input an instance $I=(k,s,t,G)$ of {\bdextproblemtwbare} and an $r$-protrusion $Y$ of $G$ with $|Y|\ge f(k,r,s,t)+1\eqcolon r'$, where $f$ is the function from Lemma~\ref{lem:finite-index}.
   In $\mathcal{O}(r'\cdot (r^2 + r|Y|))$ time, the algorithm modifies $I$ in place into an instance $I'=(k,s,t,G')$ with $|V(G')|<|V(G)|$ such that $G'$ is well-formed if and only if $G$ is and $I$ is a yes-instance if and only if $I'$ is. Moreover, given any solution of $I'$, a solution of $I$ can be computed in $\mathcal{O}(r'\cdot(r^2 + r|Y|))$ time.
\end{lemma}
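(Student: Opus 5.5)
The plan is to cut the protrusion into a boundaried graph, replace it by the representative from Lemma~\ref{lem:finite-index}, and glue it back onto the untouched rest of $G$.

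\textbf{Step 1: split $G$ at the boundary of $Y$.} Let $B=\partial_G(Y)$, so $|B|\le r$. Define the $r$-boundaried graph $\mathbf{G}_Y=(G[Y],B,\lambda)$ for an arbitrary injective labeling $\lambda\colon B\to[r]$. Define the complementary boundaried graph $\mathbf{H}=(G-(Y\setminus B),B,\lambda)$. By the definition of $\partial_G(Y)$, no vertex of $Y\setminus B$ has a neighbor outside $Y$. Hence $G=\textnormal{plain}(\mathbf{G}_Y\oplus\mathbf{H})$, and the two graphs are compatible, since both contain the same induced boundary graph $G[B]$ with its colors. Moreover $\mathrm{tw}(G[Y])\le r$, so Lemma~\ref{lem:finite-index} applies to $\mathbf{G}_Y$.

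\textbf{Step 2: replace and glue.} Compute $\textnormal{rep}(\mathbf{G}_Y)$, which has size at most $f(k,r,s,t)<|Y|$, and set $G'=\textnormal{plain}(\textnormal{rep}(\mathbf{G}_Y)\oplus\mathbf{H})$. Then $|V(G')|=|V(G)|-|Y|+|V(\textnormal{rep})|<|V(G)|$. By (R1), $\textnormal{rep}(\mathbf{G}_Y)\equiv_{k,s,t}\mathbf{G}_Y$ with $\mathbf{H}$ as the context. Canonical equivalence then gives directly that $G'$ is well-formed if and only if $G$ is, and that whenever both are well-formed, $I$ is a yes-instance exactly when $I'$ is. If neither is well-formed, both are no-instances by definition. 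For solution lifting, (R2) applied with context $\mathbf{H}$ turns a solution of $I'$ into a solution of $I$.

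\textbf{Step 3: running time.} This is the main obstacle. Lemma~\ref{lem:finite-index} costs $f\cdot|\mathbf{G}_Y|$, and $|\mathbf{G}_Y|=|Y|+|E(G[Y])|=\mathcal{O}(r|Y|)$ because a graph of treewidth at most $r$ has fewer than $(r+1)|Y|$ edges. The in-place surgery has three parts.
\begin{itemize}
\item Extracting $G[Y]$ means listing the edges incident to $Y$. Vertices of $Y\setminus B$ have all their neighbors inside $Y$, so their degrees sum to $\mathcal{O}(r|Y|)$.
\item The boundary vertices may have huge degree outside $Y$. We must therefore never scan their full adjacency lists. Instead we enumerate the edges of $G[Y]$ from the interior side, and obtain the edges inside $B$ by $\mathcal{O}(r^2)$ adjacency tests.
\item Deleting $Y\setminus B$ and the edges inside $G[B]$ (which we keep), then inserting the representative's $\le r'$ vertices and edges, costs $\mathcal{O}(r'+r|Y|)$ using the constant-time add/delete operations.
\end{itemize}
Altogether this stays within $\mathcal{O}(r'(r^2+r|Y|))$.

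Lifting a solution works the same way: intersect it with the representative, run (R2), and swap in the returned vertex set. Only the part inside $Y$ changes, so this also costs $\mathcal{O}(r'(r^2+r|Y|))$, provided we store $\mathbf{G}_Y$ and its vertex correspondence at replacement time.
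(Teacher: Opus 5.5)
Your proposal is correct and follows the paper's proof essentially step by step. You assemble $\mathbf{G}_Y$ from the interior side plus $\mathcal{O}(r^2)$ adjacency tests on the boundary, compute the representative via (R1), delete the interior and attach the representative's interior to the retained boundary, and use canonical equivalence and (R2) for equivalence and solution lifting; making the complementary context $\mathbf{H}$ and the non-well-formed case explicit is a detail the paper leaves implicit (just reword ``deleting $Y\setminus B$ and the edges inside $G[B]$ (which we keep)'', since those edges are retained, not deleted).
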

\begin{proof}
    Recall that we assumed that our graph encoding supports adjacency tests as well as the addition (deletion, respectively) of an edge (vertex, resp.) in $\mathcal{O}(1)$ time each, as well as identifying $N_G(v)$ from a vertex $v$ in $\mathcal{O}(|N_G(v)|)$ time.
    We first assemble the boundaried graph $\mathbf{G}_Y=(G[Y],\partial_G(Y),\lambda)$, where $\lambda\colon \partial_G(Y)\to[r]$ is an arbitrary injective labeling. Since $Y$ is an $r$-protrusion, $|\partial_G(Y)|\le r$ and $G[Y]$ has at most $\mathcal{O}(r^2)$ edges in the boundary and $\mathcal{O}(r |Y|)$ edges incident to at least one non-boundary vertex. Since non-boundary vertices have no neighbors outside $Y$ in $G$, the latter edges can be identified in $\mathcal{O}(r|Y|)$ time. The edges between boundary vertices are found by $\mathcal{O}(r^2)$ adjacency tests. Hence $\mathbf{G}_Y$ is assembled in $\mathcal{O}(r^2 + r|Y|)$ time.

    By (R1) of Lemma~\ref{lem:finite-index} we compute the representative $\textnormal{rep}(\mathbf{G}_Y)$, which satisfies $\mathbf{G}_Y\equiv_{k,s,t}\textnormal{rep}(\mathbf{G}_Y)$ and $|\textnormal{rep}(\mathbf{G}_Y)| < r'$, in time $r'\cdot|\mathbf{G}_Y|=\mathcal{O}(r'\cdot (r^2 + r|Y|))$. We then update $G$ to $G'$ in place: delete the interior $Y\setminus \partial_G(Y)$ together with its incident edges (in $\mathcal{O}(r|Y|)$ time), and insert the interior of $\textnormal{rep}(\mathbf{G}_Y)$, attaching it to the retained boundary $\partial_G(Y)$ (in $\mathcal{O}(r')$ time). 
    
    The size guarantee and solution lifting follow immediately from Lemma~\ref{lem:finite-index}, using $|\mathbf{G}_Y| \le (r^2 + r|Y|)$. The preservation of well-formedness as well as outputting a yes-instance if and only if $I$ is a yes-instance is due to the canonical equivalence.
\end{proof}

\subsubsection{Constructing the Reducer}\label{sec:reducer-rho-cover}

While Lemma~\ref{lem:protrusionReplacer} handles replacing a single protrusion, our reducer module (Lemma~\ref{lem:reducer}) replaces many protrusions per round, using the lemma as a black box. Such a round shrinks the instance size when the current graph admits a certain structure, namely an $(\alpha, \beta)$-protrusion decomposition as used by~\cite{FominLMRS15_uib} and originally introduced by~\cite{DBLP:journals/corr/abs-0904-0727}.
\begin{definition}[$(\alpha,\beta)$-protrusion decomposition]\label{def:protrusion-decomposition}
    Let $\alpha,\beta\in\mathbb{N}$. An \emph{$(\alpha,\beta)$-protrusion decomposition} of a graph $G$ is a partition $V(G) = R_0 \uplus R_1 \uplus \dots \uplus R_q$ of its vertex set such that
    \begin{itemize}
        \item $|R_0| + q \le \alpha$;
        \item for every $i \in [q]$, $N_G(R_i) \subseteq R_0$ and $|N_G(R_i)| \le \beta$;
        \item for every $i \in [q]$, $R_i \cup N_G(R_i)$ is a $\beta$-protrusion of $G$ with boundary $N_G(R_i)$.
    \end{itemize}
\end{definition}
Intuitively, such a decomposition describes a set of disjoint $\beta$-protrusions in the graph, where $\alpha$ bounds the number of protrusions and the size of the graph not covered by protrusions. 

A small $\alpha$ relative to $|V(G)|$ thus means that the bulk of~$G$ is packed into protrusions, which is precisely the regime in which the protrusion replacer of Lemma~\ref{lem:protrusionReplacer} can be applied to a large fraction of $G$ in a single linear-time round.

We now show how multiple protrusions can be replaced in a single linear-time round. This is the Linear Time Protrusion Replacement Theorem (Theorem~3 in~\cite{FominLMRS15_uib}), instantiated for {\bdextproblemtw}; where we tighten the running time guarantee to $\mathcal{O}(n)$ rather than the $\mathcal{O}(n+m)$ stated by~\cite{FominLMRS15_uib}. This improvement is what makes the iterated reducer linear: across the rounds the number of vertices shrinks by a constant factor but the number of edges need not, so an $\mathcal{O}(n+m)$ per-round cost would not sum to linear (cf.\ the running-time analysis in the proof of Lemma~\ref{lem:reducer}); for \cite{FominLMRS15_uib} this did not matter, since there $m\in \mathcal{O}(n)$.

\begin{lemma}[One-round protrusion replacement; cf.\ {\cite[Theorem~3]{FominLMRS15_uib}}]\label{lem:linear-protrusion-replacement}
     There is an algorithm that takes as input an instance $I=(k,s,t,G)$ of {\bdextproblemtwbare}, where $G$ has $n$ vertices.
   In $\mathcal{O}_{k,s,t}(n)$ time, the algorithm computes an instance $I'=(k,s,t,G')$ in place such that $|V(G')|\le n$, $G'$ is well-formed if and only if $G$ is and $I'$ is a yes-instance if and only if $I$ is.
    Moreover, given a solution to $I'$, a solution to~$I$ can be recovered in $\mathcal{O}_{k,s,t}(n)$ time.
     If, in addition, $G$ admits an $(\alpha,\beta)$-protrusion decomposition with $\beta\le 4t+4$ and $\alpha\le\frac{n}{244(r'\cdot 2^r)}$ for $r=12t+15$ and $r'$ as in Lemma~\ref{lem:protrusionReplacer}, then $|V(G')|\le(1-\delta)\,n$ for some $\delta=\delta(k,s,t)\in(0,1)$.
\end{lemma}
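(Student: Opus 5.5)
We follow the proof of \cite[Theorem~3]{FominLMRS15_uib}, adapting each step so that we never touch more than $\mathcal{O}_{k,s,t}(1)$ edges per vertex we scan. The algorithm first marks every vertex of degree larger than a threshold $D:=D(k,s,t)$ (say $D = 2r\cdot r'$) as \emph{heavy*}. It treats the light vertices only through their adjacency lists, which have bounded length, so scanning all light vertices together with their incident edges costs $\mathcal{O}(Dn)$. Heavy vertices are never scanned in full; we only test whether a vertex is heavy, and identifying all heavy vertices takes $\mathcal{O}(n)$ time via the degree counters.

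The first step is to compute a large family of pairwise disjoint $r$-protrusions $Y_1,\dots,Y_p$, each of size between $r'$ and a bound depending on $k,s,t$. As in \cite{FominLMRS15_uib}, we work with the graph $G_L$ induced on the light vertices plus, for each light component, its heavy neighbors as potential boundary. We greedily grow connected sets inside light vertices: a connected set of light vertices of size at most $r'$ has at most $D r'$ outgoing edges. Exactly as in their argument, we explore via a bounded-depth, bounded-size search from each unvisited light vertex and test in $f(k,s,t)$ time whether the explored set (of bounded size) together with its neighborhood is an $r$-protrusion with $r=12t+15$ and size at least $r'$. We then mark it as used. Every check touches only $\mathcal{O}_{k,s,t}(1)$ vertices and edges, so the total time is $\mathcal{O}_{k,s,t}(n)$.

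The second step replaces each $Y_i$ via Lemma~\ref{lem:protrusionReplacer}. Since $|Y_i|$ is bounded, each replacement costs $\mathcal{O}_{k,s,t}(1)$. Deleting interiors only removes edges incident to light interior vertices, and those edges number at most $D|Y_i|$. Disjointness and boundary-respecting gluing ensure that the replacements commute, and the equivalence and well-formedness properties compose over the sequence. Solution lifting applies (R2) in reverse order, with total cost $\sum_i \mathcal{O}(|Y_i|)=\mathcal{O}_{k,s,t}(n)$. Each replacement strictly decreases $|V|$, which gives $|V(G')|\le n$.

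The main obstacle is the shrinkage guarantee: if an $(\alpha,\beta)$-protrusion decomposition with the stated $\alpha$ exists, then the greedy procedure finds protrusions covering $\Omega(n)$ vertices. We plan to reproduce the counting of \cite[Lemmas~6,7]{FominLMRS15_uib}. The sets $R_i$ with $|R_i|$ large contain all but $\alpha\cdot \mathcal{O}(r'2^r)$ vertices, which by the choice $\alpha\le n/(244 r'2^r)$ is at most a small fraction of $n$. Within each $R_i\cup N(R_i)$, which has treewidth at most $\beta$ and boundary at most $\beta$, one can cut out disjoint $r$-protrusions of size in $[r',2r']$ covering a constant fraction of $R_i$; this uses $r = 3\beta+3$, matching $12t+15$. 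The delicate point, new compared to the sparse case, is that vertices of large degree lie essentially only in $R_0$ or in boundaries: a vertex of an $R_i$ has all neighbors in a treewidth-$\beta$ graph, so such graphs have few high-degree vertices relative to their size. We therefore must argue that the heavy vertices, at most $2m/D$ of them, do not destroy a constant fraction. Since $R_i$ is $\beta$-degenerate apart from its boundary, heavy vertices inside $R_i$ number $\mathcal{O}(\beta|R_i|/D)$. Choosing $D$ large enough, the light parts still contain disjoint protrusions covering a $\delta'$-fraction of $n$, and the greedy procedure, by a maximality argument, finds at least a constant fraction of these. Each replaced protrusion loses at least one vertex out of at most $2r'$, which yields $\delta=\delta'/(2r')$.
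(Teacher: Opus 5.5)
There is a genuine gap in your protrusion-finding step, and that step is where the shrinkage guarantee comes from. You only examine candidates $S\cup N_G(S)$ with $S$ a bounded-size \emph{connected} set of light vertices, and you require the chosen $Y_1,\dots,Y_p$ to be pairwise disjoint. The protrusions guaranteed by an $(\alpha,\beta)$-protrusion decomposition are typically not of this form. A part $R_i$ may consist of many tiny components of $G-R_0$ that all hang off the same few high-degree vertices of $R_0$. A protrusion of size at least $r'$ must then be a \emph{union} of many such components together with their common neighborhood, and distinct protrusions must be allowed to share boundary vertices.

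Concretely, take a clause vertex $C$ adjacent to literal vertices $x_1,\dots,x_N$, together with the opposite-literal edges $\{x_i,\bar x_i\}$. This instance is well-formed with, e.g., $k=1$, $s\ge 1$ and $P_1$ all literals. Then $R_0=\{C\}$, $R_1=V(G)\setminus\{C\}$ is a $(2,1)$-protrusion decomposition, so for large $N$ the lemma demands $|V(G')|\le(1-\delta)n$. However, $C$ is heavy and every connected light set lies inside some $\{x_i,\bar x_i\}$, so each of your candidates has at most $3<r'$ vertices. Even if you let the search pass through $C$, every larger candidate contains $C$, so disjointness permits at most one replacement. Your procedure therefore removes only $\mathcal{O}_{k,s,t}(1)$ vertices. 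The closing ``maximality argument'' cannot repair this: the small protrusions obtained by splitting $R_i\cup N_G(R_i)$ (with $r\ge 3(\beta+1)$) are not among the candidates your search ever considers.

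The missing ingredient is the grouping machinery of \cite[Lemma~16]{FominLMRS15_uib}, which the paper uses unchanged:
\begin{enumerate}
\item for every vertex $v$, enumerate all connected $S\ni v$ with $|S|\le 6s'$ and $|N_G(S)|\le r$;
\item bucket these sets by $N_G(S)$;
\item greedily assemble disjoint sets with a common neighborhood into collections $\mathcal{Y}_j$;
\item build the bounded-degree conflict graph $H$ on these collections (adjacent iff some members intersect or are joined by an edge) and take a large independent set.
\end{enumerate}
The resulting protrusions have disjoint, non-adjacent interiors but may share boundaries.

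The paper does not re-prove the shrinkage combinatorics; it inherits them from \cite[Theorem~3]{FominLMRS15_uib} and only re-verifies that each of the six steps runs in $\mathcal{O}_{r,r'}(n)$. Your heavy/light idea is the right one for the first step: such an $S$ cannot contain a vertex of degree above $6s'+r$, so high-degree vertices are never entered. The later steps also need care, though. The grouping uses $r$ passes of bucket sort with lazily cleared buckets. The graph $H$ is built by iterating, for each $S$, over $w\in N_G[S]$ and $S'\in\mathcal{S}_w$, using that $S'\in\mathcal{S}_w$ for every $w\in S'$; this way the adjacency lists of high-degree boundary vertices are never scanned. Incidentally, \cite[Lemmas~6,7]{FominLMRS15_uib} enter only in the reducer's fixed-point argument, not in this lemma.
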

\begin{proof}
This is the algorithm of Theorem~3 of~\cite{FominLMRS15_uib} applied to our problem using the protrusion replacer of Lemma~\ref{lem:protrusionReplacer}---compute a large protrusion cover with the linear-time algorithm of \cite[Lemma~16]{FominLMRS15_uib} and replace each of its protrusions. 
Since each single replacement preserves well-formedness (Lemma~\ref{lem:protrusionReplacer}), so does their composition; hence $G'$ is well-formed if and only if $G$ is.
Below, we comment on some changes we made to that statement apart from applying it to our problem: we fixed some constants, added the recovery of solutions, and decreased the running time to be linear in $n$ instead of $n+m$.

    \paragraph*{Setting Constants.} We instantiate the algorithm of~\cite{FominLMRS15_uib} with $r=12t+15$, and $r'$ as in Lemma~\ref{lem:protrusionReplacer}, and we let their variable~$s$---not to be confused with our $s$---take value $r'\cdot 2^r$. 
    Their requirement $r \ge 3(\beta+1)$ is equivalently expressed by $\beta \le 4t+4$.

    \paragraph*{Reconstruction of solutions.} 
    While this is not explicitly stated in \cite{FominLMRS15_uib}, the authors rely on solution recovery as well. It follows by observing that the protrusions can be replaced one after another, using Lemma~\ref{lem:protrusionReplacer}, in any order. Indeed, replacing a protrusion $Y$ deletes only the interior of $Y$---these vertices are exclusive to $Y$, since the protrusions computed by the algorithm overlap only on their boundaries, if at all---and it retains $Y$'s labeled boundary together with the edges among boundary vertices. Hence, for any other protrusion~$\hat Y$, no vertex of $\hat Y$ is deleted and the subgraph induced on $\hat Y$ is unchanged, so $\hat Y$ remains an $r$-protrusion of size at least $r'$ and Lemma~\ref{lem:protrusionReplacer} still applies to it. Tracing back the resulting solutions in a last-in-first-out manner and at each step invoking the solution-recovery property of Lemma~\ref{lem:protrusionReplacer} allows us to derive a solution to $I$ from a solution to $I'$. We replace a set $\mathcal{Y}$ of protrusions and the recovery step takes time $\mathcal{O}(r'(r^2+r|Y|))$ for each $Y\in \mathcal{Y}$. As the protrusions pairwise overlap in at most $r$ vertices (the boundary), we have 
    \[\sum_{Y\in \mathcal{Y}} r'(r^2+r|Y|) \le r'r^2n + r'r\cdot \sum_{Y\in \mathcal{Y}} (r+1) |Y'|  = \mathcal{O}(r'\, r^2\,n ),\]
     where $Y' \subseteq Y$ is the set of vertices which appear exclusively in $Y$.

    \paragraph*{Tighter time bound.} The authors of \cite{FominLMRS15_uib} bound the running time of their algorithm by $\mathcal{O}_{r',r}(n+m)$. We analyze their algorithm more carefully, tightening the running time to $\mathcal{O}_{r',r}(n) = \mathcal{O}_{k,s,t}(n)$.
    Note that \cite[Lemma~16]{FominLMRS15_uib} returns a set of $\mathcal{O}(n)$ protrusions that only overlap in the boundaries and which are then sequentially replaced using Lemma~\ref{lem:protrusionReplacer}, with each protrusion $Y$ taking $\mathcal{O}(r'\cdot(r^2 + r|Y|))$ time. Since there are $\mathcal{O}(n)$ protrusions and, by the same argument as above, $\sum_{Y\in\mathcal{Y}}|Y| \le (r+1)n$, this latter step of replacing all protrusions takes $\mathcal{O}(r^2\cdot r' \cdot n)$ time.
    It remains to examine the time required by \cite[Lemma~16]{FominLMRS15_uib} more closely. We do not restate the arguments of correctness or even the intuition of this proof, but simply list their steps and argue that each is feasible in $\mathcal{O}_{r,r'}(n)$ time.
    \begin{enumerate}
        \item For some constant $s'$ depending only on $r$ and $r'$, the authors of~\cite{FominLMRS15_uib} first compute for every vertex $v$ a set $\mathcal{S}_v$ that contains all $S\subseteq V(G)$ such that we have $v\in S, |S| \le 6s'$, $|N_G(S)|\le r$, and $G[S]$ is connected. They explicitly state this takes $\mathcal{O}_{r,r'}(n)$ time in total (independently on whether $m\in \mathcal{O}_{r,r'}(n)$ or larger). This is due to the fact that no such set $S$ contains vertices of degree larger than $6s'+r$, hence, all high degree vertices can be ignored. Identifying whether the degree of a vertex exceeds some constant is possible in constant time. They remark that $|\mathcal{S}_v|=\mathcal{O}_{r,r'}(1)$ for every vertex $v$.
        \item Next, for every vertex $v$, all sets $S\in \mathcal{S}_v$ for which the treewidth exceeds some constant value are discarded. As is argued in~\cite{FominLMRS15_uib}, this can be tested in $\mathcal{O}_{r,r'}(1)$ time per set and thus takes $\mathcal{O}_{r,r'}(n)$ time in total.
        \item As a next step, the authors group all sets in $\bigcup_{v\in V(G)} \mathcal{S}_v$ according to their neighborhood, which can be done in $\mathcal{O}_{r,r'}(n)$ time as they argue themselves. This can be achieved by $r$ passes of bucket sort over the vertex identifiers; to keep the cost at $\mathcal{O}_{r,r'}(n)$ even when these identifiers do not form the range $[n]$ (as happens when the graph has been locally modified previously), we clear the buckets lazily, touching only the $\mathcal{O}_{r,r'}(n)$ buckets actually used rather than reinitializing the whole array. This yields a partition $\mathcal{X}_1, \ldots, \mathcal{X}_q$ into non-empty sets, so $q \le \sum_{v\in V(G)} |\mathcal{S}_v| \le n \cdot \mathcal{O}_{r,r'}(1) = \mathcal{O}_{r,r'}(n)$.
        \item Then they form sets $\mathcal{Y}_1, \ldots, \mathcal{Y}_q$ by, for each $\mathcal{X}_i$, greedily extracting disjoint subsets of certain sizes. Since each set $S$ is touched at most once, this takes $\mathcal{O}_{r,r'}(q + n) = \mathcal{O}_{r,r'}(n)$ time.
        \item From this, they construct a graph $H$ with vertex set $\{\mathcal{Y}_1, \ldots, \mathcal{Y}_q\}$, where $\mathcal{Y}_i$ and $\mathcal{Y}_j$ with $i\neq j$ share an edge if and only if there are $Y_i\in \mathcal{Y}_i$ and $Y_j\in \mathcal{Y}_j$ such that $Y_i\cap Y_j \neq \emptyset$ or there is an edge in $G$ between $Y_i$ and $Y_j$. 
        Assume $Y_i\cap Y_j = \emptyset$, then the only case where there is an edge between $Y_i$ and $Y_j$ in $G$ is if one set intersects the neighborhood of the other, $N_G(Y_i)\cap Y_j \neq \emptyset$ or conversely $N_G(Y_j)\cap Y_i \neq \emptyset$. Thus, $\mathcal{Y}_i$ and $\mathcal{Y}_j$ share an edge if and only if there are $Y_i\in \mathcal{Y}_i$ and $Y_j\in \mathcal{Y}_j$ such that $N_G[Y_i]\cap Y_j \neq \emptyset$.
        By construction (the third and fourth step), each collection $\mathcal{Y}_i$ is a sub-collection of one of the neighborhood classes of $\mathcal{S}\coloneq\bigcup_{v\in V(G)}\mathcal{S}_v$, so every member $Y_i\in\mathcal{Y}_i$ is one of the sets listed in the first step. Moreover, the only condition defining $\mathcal{S}_v$ that refers to $v$ is $v\in S$, while the remaining conditions ($|S|\le 6s'$, $|N_G(S)|\le r$, and $G[S]$ connected) depend on $S$ alone; hence $Y_i\in\mathcal{S}_v$ for \emph{every} $v\in Y_i$, and likewise $Y_j\in\mathcal{S}_w$ for every $w\in Y_j$. We exploit this to construct $H$ as follows. Iterate over every vertex $v$ and every set $S\in\mathcal{S}_v$; since $|N_G[S]|\le 6s'+r$ and $|\mathcal{S}_w|=\mathcal{O}_{r,r'}(1)$ for every vertex $w$, we can in time $(6s'+r)\cdot\mathcal{O}_{r,r'}(1)=\mathcal{O}_{r,r'}(1)$ iterate over all vertices $w\in N_G[S]$ and all sets $S'\in\mathcal{S}_w$. Whenever $S$ and $S'$ lie in distinct collections $\mathcal{Y}_i$ and $\mathcal{Y}_j$, add the edge $\{\mathcal{Y}_i,\mathcal{Y}_j\}$ to $H$ (unless it is already present).
        This enumeration finds every edge of $H$: if $\mathcal{Y}_i$ and $\mathcal{Y}_j$ are adjacent, witnessed by $Y_i\in\mathcal{Y}_i$, $Y_j\in\mathcal{Y}_j$, and a vertex $w\in N_G[Y_i]\cap Y_j$, then choosing any $v\in Y_i$ the outer loop enumerates $S=Y_i\in\mathcal{S}_v$, the inner loop reaches $w\in N_G[S]$, and there it enumerates $S'=Y_j\in\mathcal{S}_w$ (using $w\in Y_j$). Conversely, every pair produced this way satisfies $N_G[S]\cap S'\neq\emptyset$ and hence corresponds to a genuine edge. As there are $\mathcal{O}_{r,r'}(n)$ choices of~$v$, each handled in $\mathcal{O}_{r,r'}(1)$ time, $H$ can be constructed in $\mathcal{O}_{r,r'}(n)$ time.
        \item Lastly, they compute a large independent set in $H$ in $\mathcal{O}_{r,r'}(|V(H)|+|E(H)|) = \mathcal{O}_{r,r'}(q) = \mathcal{O}_{r,r'}(n)$ time, noting that $H$ has constant maximum degree for fixed $r,r'$. The output of \cite[Lemma~16]{FominLMRS15_uib} are the sets corresponding to this independent set in $H$.
    \end{enumerate}
    Since each of the six consecutive steps takes $\mathcal{O}_{r,r'}(n)$ time, so does the entire algorithm.  
\end{proof}

With Lemma~\ref{lem:linear-protrusion-replacement} in hand, we are now ready to prove the main lemma of this subsection.
\reducerLemma*
\begin{proof}%
    Use $r=12t+15$, $r'$, $b=r'\cdot 2^r$, and $\delta$ of Lemma~\ref{lem:linear-protrusion-replacement}. Let $\QQ$ denote a single \emph{round}, that is, one application of Lemma~\ref{lem:linear-protrusion-replacement} to the current graph $G_i$ on $n_i=|V(G_i)|$ vertices, which in time $\mathcal{O}_{k,s,t}(n_i)$---independent of the number of edges---produces an equivalent graph $G_{i+1}$ with $n_{i+1} \coloneq |V(G_{i+1})|\le n_i$ vertices. By Lemma~\ref{lem:linear-protrusion-replacement}, if $G_i$ admits an $(\alpha,\beta)$-protrusion decomposition with $\beta\le 4t+4$ and $\alpha\le n_i/244b$, then $n_{i+1} \le(1-\delta)n_i$.

    Set $G_0 \coloneqq G$ and iterate rounds $G_{i+1} \coloneqq \QQ(G_i)$. Stop at the first index $i^*$ at which the round failed to shrink, i.e., $n_{i^*+1} > (1-\delta)\,n_{i^*}$, and let $G' \coloneqq G_{i^*}$ and $I' \coloneqq (k,s,t,G')$.
    The total time required is $\sum_{i\le i^*} \mathcal{O}_{k,s,t}(n_i)$. Since $n_{i+1} \le(1-\delta)n_i$ for all such $i$ (except for $i^*$), this forms a geometric series bounded by $\mathcal{O}_{k,s,t}(n_0)=\mathcal{O}_{k,s,t}(n)$.

    By induction along the chain, $I'$ is a yes-instance if and only if $I$ is, and---by composing the per-round solution recovery of Lemma~\ref{lem:linear-protrusion-replacement}---a solution of $I'$ lifts to a solution of $I$ in time $\sum_{i<i^*}\mathcal{O}_{k,s,t}(n_i)=\mathcal{O}_{k,s,t}(n)$. Each round also preserves well-formedness (Lemma~\ref{lem:linear-protrusion-replacement}), so since $I$ is well-formed, so is~$I'$. This establishes that $I'$ is well-formed together with the first two statements of the lemma.

    It remains to argue that the third condition is met as well. 
    For this, let $n' \coloneqq |V(G')|$ and let $X$ be a solution to~$I'$.
    By Observation~\ref{obs:treewidth-translation}, $G'-N_{G'}[X]$ has treewidth at most $2t+1$. 
    Then, feeding $N_{G'}(X)$, $\eta = 2t+1$, and $a= (8t+8)r' \cdot 2^r > 1$ to Lemma~6 in \cite{FominLMRS15_uib} (which is based on \cite{DBLP:journals/corr/abs-0904-0727,FominLMS12}) gives that at least one of the following holds:
    \[
        \text{(i)}\phantom{-} |N_{G'}[N_{G'}(X)]| \le n'/(244a) \quad\text{ or }\quad  
    \text{(ii)}\phantom{-} N_{G'}(X) \text{ is a } \rho\text{-cover in } G' \text{ for }\rho = \frac{1}{488a(2t+2)+2}.
    \]

    Towards a contradiction, assume that (i) is the case.
    Feeding $N_{G'}(X)$ to Lemma~7 in \cite{FominLMRS15_uib} (which is based on \cite{DBLP:journals/corr/abs-0904-0727}) yields an $(\alpha, \beta)$-protrusion decomposition with $\alpha= |N_{G'}[N_{G'}(X)]|\cdot 4(2t+2)$ and $\beta = 2(2t+2)$.
    Since $\alpha \le n'/(244a) \cdot (8t+8) = n'/(244b)$ and $\beta = 4t+4$, the shrinking condition of Lemma~\ref{lem:linear-protrusion-replacement} would be satisfied, so a further round $\QQ$ applied to $G'$ would have shrunk it---contradicting the stopping condition. Consequently, case (i) never occurs and $N_{G'}(X)$ is a $\rho$-cover.
\end{proof}

\subsection{The Extractor}

The extractor is the second module that Theorem~\ref{thm:mainTechnicalTheorem} relies on. It receives an instance $I$ of \bdextproblemtw in which the neighborhood of every solution is a $\rho$-cover of the graph associated with $I$
and aims at identifying a set of vertices such that, for any solution to $I$, at least one of these vertices lies in the solution or in its neighborhood.

For a single fixed solution \(X\), the hitting property is easy to achieve by randomization: since \(N_G(X)\) is a \(\rho\)-cover, a uniformly random edge has an endpoint in \(N_G(X)\) with probability at least~\(\rho\), so collecting the endpoints of \(\mathcal{O}(\rho^{-1})\) random edges hits \(N_G[X]\) with constant probability. The more elaborate lemma below instead produces a single deterministic set that works for all solutions at once.

\extractorLemma*
\begin{proof}
Recall that $V(G)$ is partitioned into a set $\mathcal{C}$ of clause vertices and a set $\mathcal{L}$ of literal vertices.
   Set $\alpha:=\frac{\rho}{k}$, $\beta:=\frac{\rho}{2(2t+1)k}$, and $\gamma := \prod_{i=0}^{2t+1} (\alpha-i\beta)$.
   Note that $0 < \gamma < 1$, since each factor lies in $(0,1)$.
   We define the following sets of vertices.

   $$W:= \{\ell \in \mathcal{L}:  \deg_{G}(\ell) \geq \alpha m\},\qquad Y:= \{C \in  \mathcal{C}:  \deg_{G}(C) \geq \beta m\},$$
   $$ Z:=\{v \in \mathcal{L} \setminus N_G(Y): \sum_{C \in N_{G}(v)\cap\mathcal{C}}\deg_{G}(C) \geq \alpha m\}$$

   We first show that $V' = W \cup Y \cup Z$ is always a valid output apart from the size bound.

  \begin{claim}\label{claim:sampling-one}
      If $X$ is a solution to $I$, then
      $N_{G}[X] \cap (W\cup Y \cup Z) \neq \emptyset$.
  \end{claim}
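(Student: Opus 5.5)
The plan is a direct contradiction argument based only on degree counting. It uses the $\rho$-cover guarantee and the definitions of $W$, $Y$, and $Z$; the threshold $\beta$ plays no role here except through the set $Y$. Fix a solution $X=\{\ell_1,\dots,\ell_k\}$ and assume, for contradiction, that $N_G[X]$ is disjoint from $W\cup Y\cup Z$. Since $\sum_{v\in V(G)}\deg_G(v)=2m$, the hypothesis on $I$ gives
\[
\sum_{v\in N_G(X)}\deg_G(v)\;\ge\;2\rho m .
\]
I will show that the left-hand side is strictly below $2\rho m$.

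First I describe $N_G(X)$ using well-formedness (W1)--(W3). Every vertex of $N_G(X)$ is one of two kinds.
\begin{itemize}
\item An opposite literal $\overline{\ell_i}$. Each $\ell_i$ has exactly one opposite-literal neighbour, so there are at most $k$ such vertices.
\item A clause vertex $C$ that is an incidence-neighbour of some $\ell_i$.
\end{itemize}
Hence $N_G(X)\subseteq \{\overline{\ell_1},\dots,\overline{\ell_k}\}\cup\bigcup_{i\in[k]}(N_G(\ell_i)\cap\mathcal{C})$. Because the sum of nonnegative degrees over a union is at most the sum of the sums over its parts, it suffices to bound the two parts separately.

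For the literal part, each $\overline{\ell_i}$ lies in $N_G[X]$ and therefore not in $W$, so $\deg_G(\overline{\ell_i})<\alpha m$. Summing gives a contribution below $k\alpha m=\rho m$.

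For the clause part, fix $i$. No clause neighbour of $\ell_i$ lies in $Y$, since all of them are in $N_G[X]$, which avoids $Y$. Hence $\ell_i\notin N_G(Y)$. Together with $\ell_i\in\mathcal{L}$, this means the only way $\ell_i$ can avoid $Z$ is that
\[
\sum_{C\in N_G(\ell_i)\cap\mathcal{C}}\deg_G(C)<\alpha m .
\]
This holds because $\ell_i\in N_G[X]$ is disjoint from $Z$. Summing over $i\in[k]$, the clause part contributes less than $k\alpha m=\rho m$.

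Adding the two parts gives $\sum_{v\in N_G(X)}\deg_G(v)<2\rho m$, which contradicts the $\rho$-cover property.

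The only delicate step is noticing why $Z$ excludes $N_G(Y)$: the exclusion is harmless for the claim, because the assumption $N_G[X]\cap Y=\emptyset$ already forces every backdoor literal outside $N_G(Y)$. The remaining bookkeeping checks are that the cover condition is normalised by $2m$ rather than $m$, and that the opposite-literal edges are correctly included in the degrees of $\overline{\ell_i}$.
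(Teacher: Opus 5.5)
Your proof is correct and follows the paper's argument essentially step by step. You bound the degree contribution of the opposite literals $\overline{X}$ via $W$, and that of the clause neighbours via $Y$ and $Z$, each by less than $k\alpha m=\rho m$, which contradicts the cover requirement $2\rho m$ (the strict inequalities implicitly use $k\ge 1$, which the extractor lemma assumes).
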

  \begin{claimproof}[Proof of Claim~\ref{claim:sampling-one}]
      Towards a contradiction, suppose that $N_G[X]\cap(W\cup Y\cup Z)=\emptyset$; in particular $N_G(X)\cap W=\emptyset$, $N_G(X)\cap Y=\emptyset$, and $X\cap Z=\emptyset$.
      By (W2)--(W3), every literal vertex in $G$ has exactly one literal neighbor, namely its opposite literal. Hence, the literal vertices in $N_G(X)$ are exactly $\overline{X}:=\{\overline{x}:x\in X\}$, a set of at most $k$ vertices, and $N_G(X)=\overline{X}\cup\big(N_G(X)\cap\mathcal{C}\big)$.

      Since $N_G(X)\cap W=\emptyset$, each $\overline{x}$ has $\deg_G(\overline{x})<\alpha m$, so the opposite literals contribute 
      $$\sum_{\overline{x}\in \overline{X}}\deg_G(\overline{x}) < k\alpha m = \rho m.$$

      Moreover, since $N_{G}(X)\cap Y=\emptyset$, we have $X\cap N_G(Y)=\emptyset$, so every $v\in X$ lies in $\mathcal{L}\setminus N_G(Y)$; together with $v\notin Z$ (as $X\cap Z=\emptyset$) this gives $\sum_{C\in N_G(v)\cap\mathcal{C}}\deg_G(C)<\alpha m$. Hence the clause neighbors contribute
       $$\sum_{C \in N_G(X)\cap\mathcal{C}} \deg_{G}(C) \leq \sum_{v \in X} \sum_{C \in N_{G}(v)\cap\mathcal{C}} \deg_{G}(C) < k\alpha m = \rho m.$$

       Adding the two bounds, we obtain $\sum_{v\in N_G(X)}\deg_G(v) < 2\rho m = \rho\sum_{v\in V(G)}\deg_G(v)$, contradicting the fact that $N_{G}(X)$ is a $\rho$-cover.
  \end{claimproof}

  To handle the cases where the above is too large, we first bound the size of $Y$ and $W$.
  \begin{claim}\label{claim:sampling-two}
      $|Y| \leq \frac{2(2t+1)k}{\rho}$ and $|W|\le \frac{2k}{\rho}$.
  \end{claim}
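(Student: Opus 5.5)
This is a double-counting argument against the total degree sum $\sum_{v\in V(G)}\deg_G(v)=2m$. Here $m$ is the number of edges of $G$, which is also the normalisation used in the definitions of $W$ and $Y$.

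For $Y$: every $C\in Y$ has $\deg_G(C)\ge \beta m$. The sets of edges incident to distinct clause vertices are pairwise disjoint, because by (W2) no edge joins two clause vertices. Hence
\[
|Y|\cdot \beta m \le \sum_{C\in Y}\deg_G(C)\le \sum_{C\in\mathcal{C}}\deg_G(C)\le m .
\]
This gives $|Y|\le 1/\beta = \frac{2(2t+1)k}{\rho}$, which is exactly the claimed bound. The disjointness is what lets me compare against $m$ rather than $2m$.

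For $W$: every $\ell\in W$ has $\deg_G(\ell)\ge \alpha m$. Summing over all vertices only gives $|W|\alpha m\le 2m$, that is, $|W|\le 2/\alpha = \frac{2k}{\rho}$. This is again the stated bound, so I do not need to refine it using the fact that literal–literal edges form a matching.

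The only care needed is the degenerate case $m=0$. Then the thresholds $\alpha m$ and $\beta m$ are $0$, so every vertex would qualify for $W$ or $Y$. I would dispose of this case separately. When $m=0$, the total degree sum is $0$, so the $\rho$-cover condition is vacuous. However, (W3) requires opposite-literal edges forming a perfect matching, and (W4) makes each $P_i$ non-empty, so $G$ has literal vertices and hence $m\ge 1$. Thus the case $m=0$ cannot occur in a well-formed instance, and the computation above goes through with $m\ge 1$.

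I expect no real obstacle here. The main point is simply to sum over a vertex class that is independent (the clause vertices) to get the sharper bound for $Y$.
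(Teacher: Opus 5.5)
Your proof is correct and follows the paper's argument: for $Y$ you use that clause vertices are pairwise non-adjacent, so $\sum_{C\in\mathcal{C}}\deg_G(C)\le m$ and hence $|Y|\le 1/\beta$, and for $W$ you bound the literal degree sum by $2m$ to get $|W|\le 2/\alpha$. Your extra remark ruling out $m=0$, which the paper leaves implicit, is also valid, because the extractor lemma assumes $k\ge 1$, so (W4) gives a literal vertex and (W3) gives an opposite-literal edge at it.
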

  \begin{claimproof}[Proof of Claim~\ref{claim:sampling-two}]
      Since there are no edges between clause vertices, we have \[\sum_{C \in \mathcal{C}} \deg_{G}(C) \le |E(G)| = m.\]
      If $|Y| > \frac{2(2t+1) k}{\rho}$,
      then $\sum_{C \in Y} \deg_{G}(C) > \frac{2(2t+1)k}{\rho} \cdot \frac{\rho}{2(2t+1)k} \cdot m = m$, a contradiction.

        Furthermore, since $\sum_{\ell\in\mathcal{L}}\deg_G(\ell)\le \sum_{v\in V(G)}\deg_G(v)=2m$, at most $\frac{2m}{\alpha m}=\frac{2k}{\rho}$ literal vertices have degree at least $\alpha m$, so $|W|\le \frac{2k}{\rho}$.
  \end{claimproof}

  By the previous claim, if $W\cup Y\cup Z$ exceeds our size bound, it is due to $Z$ being very large. The following claims show that in this regime, the graph contains a biclique, which we can find efficiently, and that any such biclique yields a valid output set as well. The case $|\mathcal{C} \setminus Y| < 2t+2$ is handled separately below.

  \begin{claim}\label{claim:sampling-three}
      Suppose $|Z| > (2t+2)/\gamma$ and $|\mathcal{C} \setminus Y| \geq 2t+2$, and let $Z' \subseteq Z$ be an arbitrary subset of size $\lfloor(2t+2)/\gamma\rfloor +1$ (which exists since $|Z| > (2t+2)/\gamma$).
      Then $G[Z' \cup N_G(Z')]$ contains a $K_{2t+2,2t+2}$ (a complete bipartite graph on $4t+4$ vertices) as a subgraph, with one side contained in $Z'\subseteq\mathcal{L}$ and the other in $\mathcal{C}\setminus Y$.
  \end{claim}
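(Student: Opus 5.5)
The plan is a probabilistic (first-moment) argument: draw $2t+2$ distinct clause vertices from $\mathcal{C}\setminus Y$ at random and show that, in expectation, more than $2t+2$ vertices of $Z'$ are adjacent to all of them. Some outcome then realizes this, giving the biclique.

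First I record the structural facts used. Every $v\in Z'\subseteq Z$ lies in $\mathcal{L}\setminus N_G(Y)$, so all clause neighbors of $v$ lie in $\mathcal{C}\setminus Y$. By the definition of $Z$, the total degree of $v$'s clause neighbors is at least $\alpha m$. Every clause $C\in\mathcal{C}\setminus Y$ has $\deg_G(C)<\beta m$. Moreover, $\sum_{C\in\mathcal{C}}\deg_G(C)\le m$, because clause vertices are pairwise non-adjacent. Finally, $\alpha-(2t+1)\beta=\alpha/2>0$, so every factor of $\gamma$ is positive.

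Next I define the random process. For $i=0,1,\dots,2t+1$, pick $C_{i+1}$ from $\mathcal{C}\setminus(Y\cup\{C_1,\dots,C_i\})$ with probability proportional to $\deg_G(C)$. If all remaining weight is zero, pick uniformly; this set is non-empty since $|\mathcal{C}\setminus Y|\ge 2t+2$. The resulting $2t+2$ clauses are distinct by construction.

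Fix $v\in Z'$ and condition on $C_1,\dots,C_i$ all being neighbors of $v$. The remaining neighbor weight of $v$ is at least $\alpha m-i\beta m>0$, since each removed clause has degree $<\beta m$. The total remaining weight is at most $m$. Hence the next pick is a neighbor of $v$ with probability at least $\alpha-i\beta$. In particular the zero-weight fallback never occurs on this event. Chaining these conditional probabilities, $\Pr[\{C_1,\dots,C_{2t+2}\}\subseteq N_G(v)]\ge\prod_{i=0}^{2t+1}(\alpha-i\beta)=\gamma$.

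By linearity of expectation, the expected number of $v\in Z'$ adjacent to all sampled clauses is at least $\gamma|Z'|>\gamma\cdot(2t+2)/\gamma=2t+2$. Hence some outcome yields at least $2t+2$ such vertices. Those vertices together with $C_1,\dots,C_{2t+2}$ form a $K_{2t+2,2t+2}$ in $G[Z'\cup N_G(Z')]$, with one side in $Z'$ and the other in $\mathcal{C}\setminus Y$.

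The only delicate step is the conditional probability bound: one must sample without replacement, so that the clauses are distinct, and use the degree cap $\beta m$ on non-$Y$ clauses to control how much of $v$'s neighbor weight is lost. This is exactly what the factors $\alpha-i\beta$ in $\gamma$ account for.
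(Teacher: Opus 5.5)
Your proposal is correct and follows the paper's proof essentially step for step. Both sample $2t+2$ distinct clauses from $\mathcal{C}\setminus Y$ without replacement, proportionally to degree, bound each conditional step by $\alpha-j\beta$ using the degree cap $\beta m$ and the normalizer bound $m$, and finish with a first-moment argument over $Z'$. Your explicit zero-weight fallback and the observation that $\alpha-(2t+1)\beta=\alpha/2>0$ are small technical points that the paper leaves implicit.
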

  \begin{claimproof}[Proof of Claim~\ref{claim:sampling-three}]
      Recall that for every literal vertex $v\in Z$ we have $\sum_{C\in N_G(v)\cap\mathcal{C}}\deg_G(C)\ge \alpha m$ and, since $v\notin N_G(Y)$, each clause neighbor $C\in N_G(v)\cap\mathcal{C}\subseteq \mathcal{C}\setminus Y$ satisfies $\deg_G(C)<\beta m$.

      Sample a set $S$ of $2t+2$ distinct clause vertices from $\mathcal{C}\setminus Y$, drawn one at a time without replacement, each clause with probability proportional to its degree.

      Fix a literal $x \in Z'$ and consider the probability that all $2t+2$ sampled clauses are neighbors of $x$. For each $j\in\{0,1,\ldots,2t+1\}$, condition on the first $j$ draws being neighbors of $x$. These $j$ removed clauses lie in $\mathcal{C}\setminus Y$ and so have total degree less than $j\beta m$, hence the remaining neighbors of $x$ still carry mass at least $\alpha m-j\beta m$, while the normalizer---the total degree of the undrawn clauses of $\mathcal{C}\setminus Y$---is at most $\sum_{C\in\mathcal{C}}\deg_G(C)\le m$, as the incident edges account for at most $m$ of the edges. Thus the $(j+1)$-th draw is a neighbor of $x$ with probability at least $\frac{\alpha m-j\beta m}{m}=\alpha-j\beta$. Multiplying over $j=0,1,\ldots,2t+1$ gives
      $$\Pr[x\text{ is adjacent to every vertex of }S]\ge \alpha(\alpha-\beta)(\alpha-2\beta)\cdots(\alpha-(2t+1)\beta)=\gamma.$$

      By linearity of expectation, the expected number of literals in $Z'$ that are adjacent to every vertex of $S$ is at least $|Z'|\cdot \gamma > \frac{2t+2}{\gamma}\cdot \gamma=2t+2$. As the maximum is at least the expectation, there is a choice of~$S$ for which at least $2t+2$ literals of $Z'$ are adjacent to every clause of $S$. These $2t+2$ literals together with $S$ form a $K_{2t+2,2t+2}$ in $G[Z' \cup N_G(Z')]$.
  \end{claimproof}

  \begin{claim}\label{claim:sampling-four}
        If $|Z| > (2t+2)/\gamma$ and $|\mathcal{C} \setminus Y| \geq 2t+2$,
       then one can find a $K_{2t+2,2t+2}$ in $G$ with one side in $\mathcal{L}$ and the other in $\mathcal{C}$
       in time $(\frac{2k}{\rho})^{\mathcal{O}(t^2)} \cdot m$.
  \end{claim}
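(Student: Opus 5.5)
Building on Claim~\ref{claim:sampling-three}, I will fix an arbitrary $Z'\subseteq Z$ with $|Z'| = \lfloor (2t+2)/\gamma\rfloor+1$ and search for the biclique exhaustively inside $G[Z'\cup N_G(Z')]$. Since $\gamma \ge (\alpha-(2t+1)\beta)^{2t+2} = (\rho/(2k))^{2t+2}$, we have $|Z'| \le (2k/\rho)^{\mathcal{O}(t)}$. First, I compute $W$, $Y$, and $Z$ in $\mathcal{O}(n+m)$ time. Degrees are available directly, $Y$ and $N_G(Y)$ are obtained by scanning adjacency lists, and for each literal vertex the sum $\sum_{C\in N_G(v)\cap\mathcal{C}}\deg_G(C)$ is computed by one pass over its incident edges. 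Selecting $Z'$ is then trivial.

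Next, I enumerate every subset $A\subseteq Z'$ with $|A| = 2t+2$. There are at most $\binom{|Z'|}{2t+2} \le |Z'|^{2t+2} = (2k/\rho)^{\mathcal{O}(t^2)}$ such subsets. For each $A$, I compute the common clause neighborhood $\bigcap_{a\in A} N_G(a)\cap\mathcal{C}$. To do this, I iterate over the clause neighbors of one fixed $a_0\in A$ and test adjacency to each other $a\in A$ in $\mathcal{O}(1)$ time per test. This costs $\mathcal{O}(t\cdot \deg_G(a_0)) = \mathcal{O}(t\cdot m)$ per subset. Whenever the common neighborhood contains at least $2t+2$ clause vertices, I output $A$ together with any $2t+2$ of them. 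By Claim~\ref{claim:sampling-three}, some $A$ succeeds: the claim guarantees a $K_{2t+2,2t+2}$ with literal side inside $Z'$, and that literal side is one of the enumerated sets $A$. Correctness of the output is immediate, since the literal side lies in $\mathcal{L}$ and the chosen common neighbors lie in $\mathcal{C}$.

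The total time is $(2k/\rho)^{\mathcal{O}(t^2)}\cdot \mathcal{O}(t\, m)$ plus linear preprocessing, which is within the claimed bound. I do not see a real obstacle here. The only thing to be careful about is that the enumeration runs only over subsets of the bounded set $Z'$, never over clause subsets, since the clause side may be huge. The adjacency test is what keeps the per-subset cost at $\mathcal{O}(tm)$ rather than forcing a materialization of neighborhoods. If hash-based adjacency tests are considered expected-time only, I would instead mark $N_G(a)$ for each $a\in A$ with counters over at most $(2t+2)\cdot m$ edge visits, which gives the same bound deterministically.
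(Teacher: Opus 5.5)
Your proposal is correct and follows the paper's proof essentially verbatim: fix $Z'\subseteq Z$ of size $\lfloor(2t+2)/\gamma\rfloor+1$, use Claim~\ref{claim:sampling-three} to guarantee a biclique whose literal side lies in $Z'$, brute-force over all $(2t+2)$-subsets of $Z'$ at $\mathcal{O}(tm)$ cost each, and bound $\binom{|Z'|}{2t+2}\le|Z'|^{2t+2}=(\frac{2k}{\rho})^{\mathcal{O}(t^2)}$ via $\gamma\ge(\frac{\rho}{2k})^{2t+2}$. Your explicit common-neighborhood test only spells out the per-subset cost that the paper states without detail, and the paper charges the $\mathcal{O}(n+m)$ computation of $W,Y,Z$ to the extractor's overall running time rather than to this claim.
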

  \begin{claimproof}[Proof of Claim~\ref{claim:sampling-four}]
      Let $Z' \subseteq Z$ be an arbitrary subset of $Z$ of size $\lfloor(2t+2)/\gamma\rfloor +1$.
      From Claim~\ref{claim:sampling-three}, we know that $G[Z' \cup N_G(Z')]$ contains a $K_{2t+2,2t+2}$.
      Thus a $K_{2t+2,2t+2}$ can be found by brute-forcing over all choices of $2t+2$ vertices from $Z'$ in time $\mathcal{O}\big( \binom{|Z'|}{2t+2} \cdot (2t+2) \cdot m\big)$.
      Since $\gamma \ge (\alpha-(2t+1)\beta)^{2t+2} = (\frac{\rho}{2k})^{2t+2}$, we have
      $|Z'| \le (2t+2)(\frac{2k}{\rho})^{2t+2}+1
      =(\frac{2k}{\rho})^{\mathcal{O}(t)}$.
      Consequently,
      $\binom{|Z'|}{2t+2} \le |Z'|^{2t+2}
      =(\frac{2k}{\rho})^{\mathcal{O}(t^2)}$, which gives the claimed running time.
  \end{claimproof}

  \begin{claim}\label{claim:sampling-five}
      Let $(A,B)$ be a $K_{2t+2,2t+2}$ in $G$ with $A\subseteq \mathcal{L}$ and $B\subseteq \mathcal{C}$, and let $A':= A \cup \{\overline{a} : a\in A\}\subseteq \mathcal{L}$, where $\overline{a}$ denotes the opposite literal vertex of $a$.
      Then for any solution $X$ to $I$, we have that
      either $X \cap A' \neq \emptyset$
      or $N_G(X) \cap B \neq \emptyset$.
  \end{claim}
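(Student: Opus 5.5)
I argue by contradiction: suppose $X\cap A'=\emptyset$ and $N_G(X)\cap B=\emptyset$. Then I show that the contracted residual graph still contains a $K_{t+2,t+2}$ minor (in fact a subgraph), hence has treewidth at least $t+1$. This contradicts the bounded-treewidth property of the solution.

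First, I check that every vertex of $A\cup B$ survives in $G-N_G[X]$. For a clause $C\in B$: $C\notin X$, since $X$ consists of literal vertices, and $C\notin N_G(X)$ by assumption. For a literal $a\in A$: $a\notin X$ since $A\subseteq A'$. Also $a\notin N_G(X)$: the only literal neighbor of $a$ is $\overline a$ by (W2)--(W3), and $\overline a\notin X$ because $\overline a\in A'$. Hence $G-N_G[X]$ contains the complete bipartite subgraph $(A,B)$ with $|A|=|B|=2t+2$, using only incidence edges. By the same argument, the opposite literals $\overline a$ also survive, though they are not needed.

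Next, I pass to the contracted graph $(G-N_G[X])^c$, obtained by contracting all opposite-literal edges. Clause vertices are untouched. Each literal $a\in A$ is merged with at most its partner $\overline a$. Two distinct literals of $A$ can fall into the same contracted vertex only if they are opposite literals of each other, and the partition into pairs shows that at least $\lceil |A|/2\rceil = t+1$ distinct contracted vertices arise from $A$. Incidence adjacencies to $B$ are preserved under contraction. So the contracted graph contains $K_{t+1,2t+2}\supseteq K_{t+1,t+1}$ as a subgraph.

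The main obstacle is the exact constant. $K_{t+1,t+1}$ has treewidth exactly $t+1$ for $t\ge 0$, since $\tw(K_{p,q})=\min(p,q)$. This already exceeds $t$, contradicting the requirement that the contracted residual graph have treewidth at most $t$. Thus the halving loss from contraction is exactly absorbed by the choice $2t+2$. If one prefers to avoid the contracted graph, the same conclusion follows from Observation~\ref{obs:treewidth-translation}: $G-N_G[X]$ has treewidth at most $2t+1$, while it contains $K_{2t+2,2t+2}$, whose treewidth is $2t+2$. This gives the contradiction directly, and I would present this second route as the main argument.
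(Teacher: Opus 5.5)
Your main route is exactly the paper's proof and it is correct: you show that $(A,B)$ survives in $G-N_G[X]$, then use Observation~\ref{obs:treewidth-translation} to get treewidth at most $2t+1$, which contradicts $\tw(K_{2t+2,2t+2})=2t+2$. Your alternative via the contracted graph also works, but your opening sentence should promise $K_{t+1,t+1}$ rather than $K_{t+2,t+2}$, since opposite pairs inside $A$ may collapse under contraction.
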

  \begin{claimproof}[Proof of Claim~\ref{claim:sampling-five}]
      Suppose, for the sake of contradiction, that there is a solution $X$ to $I$ such that $X \cap A' = \emptyset$ and $N_G(X) \cap B = \emptyset$. We show that the biclique $(A,B)$ survives in the residual graph $G-N_G[X]$.
      Each literal vertex $a\in A$ lies outside $N_G[X]$: otherwise $a\in X$, or $a$ is adjacent to some $x\in X$, which (as $a$ is a literal vertex) can only be via the opposite-literal edge, i.e.\ $\overline{a}\in X$; either case yields $X\cap A'\neq\emptyset$, a contradiction.
      Likewise each clause vertex $b\in B$ lies outside $N_G[X]$: being a clause vertex, $b\in N_G[X]$ would mean $b\in N_G(X)$, contradicting $N_G(X)\cap B=\emptyset$.
      Hence $G-N_G[X]$ contains the biclique $(A,B)$, whose treewidth is $2t+2$; thus $G-N_G[X]$ has treewidth at least $2t+2$. By Observation~\ref{obs:treewidth-translation}, however, $G-N_G[X]$ has treewidth at most $2t+1$ since $X$ is a solution to $I$, a contradiction.
  \end{claimproof}

  The algorithm outputs the set $V'$ based on the following cases. We prove the correctness of each individual case.

    \begin{enumerate}
        \item Case 1: $|\mathcal{C} \setminus Y| \leq 2t+1$.

          Here $Z=\emptyset$: any $v\in\mathcal{L}\setminus N_G(Y)$ has all its clause neighbors in $\mathcal{C}\setminus Y$, each of degree below $\beta m$, so $\sum_{C\in N_G(v)\cap\mathcal{C}}\deg_G(C) < (2t+1)\beta m = \tfrac{\alpha m}{2} < \alpha m$ and hence $v\notin Z$.
          Output $V' = W \cup Y$. By Claim~\ref{claim:sampling-one} (with $Z=\emptyset$), every solution $X$ satisfies $N_G[X]\cap(W\cup Y)\neq\emptyset$, so $N_G[X]\cap V'\neq\emptyset$ since $V'=W\cup Y$. By Claim~\ref{claim:sampling-two}, $|V'|\le|W|+|Y|=(\frac{2k}{\rho})^{\mathcal{O}(t)}$.

        \item Case 2: $|\mathcal{C} \setminus Y| \geq 2t+2$ and $|Z| \leq (2t+2)/\gamma$.

        Output $V' = Y\cup W\cup Z$. The correctness is immediate from Claim~\ref{claim:sampling-one}: every solution $X$ satisfies $N_G[X]\cap V'=N_G[X]\cap(W\cup Y\cup Z)\neq\emptyset$. The size bound follows from Claim~\ref{claim:sampling-two} and $|Z| \leq (2t+2)/\gamma \le (2t+2)(\frac{2k}{\rho})^{2t+2}$. This gives
        $|V'|=(\frac{2k}{\rho})^{\mathcal{O}(t)}$.

        \item Case 3: $|\mathcal{C} \setminus Y| \geq 2t+2$ and $|Z| \geq \lfloor(2t+2)/\gamma\rfloor +1$.

        From Claim~\ref{claim:sampling-four}, a $K_{2t+2,2t+2}$, say $(A,B)$ with $A\subseteq\mathcal{L}$ and $B\subseteq\mathcal{C}$, can be found in $G$ in $(\frac{2k}{\rho})^{\mathcal{O}(t^2)} \cdot m$ time.
        Output $V' =A' \cup B$, where $A'$ is as defined in Claim~\ref{claim:sampling-five}.
        By Claim~\ref{claim:sampling-five}, every solution $X$ satisfies $X\cap A'\neq\emptyset$ or $N_G(X)\cap B\neq\emptyset$, so $N_G[X]\cap V'\neq\emptyset$.
        Moreover, $|V'|\leq |A'|+|B|\leq 3(2t+2)$.

    \end{enumerate}

        It remains to bound the running time. The degrees and the sets $Y$ and $W$ can be computed in $\mathcal{O}(n+m)$ time. We then mark $N_G(Y)$ and, in one further scan of the adjacency lists, compute the sums $\sum_{C\in N_G(v)\cap\mathcal{C}}\deg_G(C)$ that determine $Z$, again in $\mathcal{O}(n+m)$ time. Cases 1 and 2 then require only linear time, while Case 3 takes
        $(\frac{2k}{\rho})^{\mathcal{O}(t^2)}\cdot m$ time by Claim~\ref{claim:sampling-four}. This establishes the overall running time stated in the lemma.
\end{proof}

\subsection{The Progressor}
The progressor is the last module that gets called in the pipeline of Theorem~\ref{thm:mainTechnicalTheorem}. On a high level, this module receives an instance of \bdextproblemtw together with a vertex and creates a set of instances, such that in each, some progress towards finding a backdoor is made. As will be made precise below, progress can be achieved by decreasing the number of literals that can still be selected into the backdoor or by decreasing the $SL$-pattern size $s$ of the instance.
\progressorLemma*
\begin{proof}
    Recall that in $G$, every vertex is either a literal vertex or a clause vertex.

    \begin{claim}\label{cl:literal}
        The statement holds if $v$ is a literal vertex.
    \end{claim}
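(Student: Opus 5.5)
If $v$ is a literal vertex, the only literal neighbour of $v$ is its opposite literal $\overline{v}$ (by (W2)--(W3)). So $v\in N_G[X]$ forces $v\in X$ or $\overline{v}\in X$. I will therefore branch over the pairs $(u,i)$ with $u\in\{v,\overline{v}\}$, $i\in[k]$ and $u\in P_i$, guessing that $\ell_i=u$. This gives at most $2k$ branches, and each produces an instance with parameters $(k-1,s,t)$.

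\textbf{Constructing the branch instance for $(u,i)$.}
\begin{enumerate}
\item Delete the closed neighbourhood $N_G[u]$: the vertex $u$, its opposite $\overline{u}$, and every clause vertex adjacent to $u$.
\item Drop the colour $P_i$ and re-index the remaining sets as $P_1,\dots,P_{k-1}$.
\item Perform a clean-up: exhaustively delete every clause vertex $C$ with $P_j\subseteq N(C)$ for some remaining $j$.
\item Discard the branch if some $P_j$ has become empty.
\end{enumerate}

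\textbf{Well-formedness.}
\begin{itemize}
\item (W1)--(W2) are inherited.
\item (W3) is preserved because the matched pair $u,\overline{u}$ is removed together.
\item (W4) holds by construction of the clean-up and the discarding step.
\item (W5) holds because deleting vertices and forgetting a candidate set cannot enlarge any proper semi-ladder. Hence every remaining $s_j$ can only decrease, and the sum stays at most $s$.
\end{itemize}

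\textbf{Equivalence.} Let $X$ be a solution of $I$ with $\ell_i=u$, and set $X'=X\setminus\{u\}$ with the induced order. The following vertices are all contained in $N_G[X]$ anyway:
\begin{itemize}
\item the deleted vertices $N_G[u]$;
\item every clause removed in the clean-up, since $\ell_j\in P_j\subseteq N(C)$.
\end{itemize}
Hence $G'-N_{G'}[X']$ coincides with $G-N_G[X]$. Consequently the bounded-treewidth and domination properties transfer verbatim. Non-conflict is inherited, and $\ell_j\in P_j$ still holds because no $\ell_j$ with $j\neq i$ equals $u$ or $\overline{u}$.

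Conversely, given a solution $X'$ of a branch instance, inserting $u$ at position $i$ gives a solution $X$ of $I$:
\begin{itemize}
\item It is non-conflicting, since $\overline{u}$ no longer exists in $G'$, so $X'$ does not contain it.
\item We again have $G-N_G[X]=G'-N_{G'}[X']$, because every deleted clause lies in $N_G[X]$ by the same argument as above.
\item Moreover, $v\in N_G[X]$.
\end{itemize}
This lifting takes $\mathcal{O}(n)$ time.

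\textbf{Running time and main obstacle.} The main point requiring care is the clean-up step. It must be done in linear time and must be exhaustive: deleting clauses never changes the sets $P_j$, so a single pass suffices. For each $j$, count $|N(C)\cap P_j|$ for every clause $C$ in one scan of the edges, and compare the count with $|P_j|$. This costs $\mathcal{O}(k(n+m))$ per branch, so $\mathcal{O}(k^2(n+m))$ over all branches, as claimed in Lemma~\ref{lem:branchingReduction}.
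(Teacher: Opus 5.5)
\textbf{The gap: deleting $u$ itself loses solutions.} In the paper's definition a solution is a tuple $(\ell_1,\dots,\ell_k)$ in which the same literal may occupy several positions. The proof of Theorem~\ref{thm:main} says explicitly that the same literal can be chosen for multiple candidate sets; this is how backdoors of size \emph{at most} $k$ are captured. Your sentence ``no $\ell_j$ with $j\neq i$ equals $u$ or $\overline{u}$'' is justified only for $\overline{u}$, by non-conflict. Nothing prevents $\ell_j=u$.

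Here is a concrete counterexample. Take literals $u,\overline{u},w,\overline{w}$, one clause vertex $C$ adjacent only to $\overline{w}$, $P_1=P_2=\{u\}$, $k=2$, $s=2$, $t=1$, and $v=u$.
\begin{itemize}
\item The instance is well-formed, with $SL$-pattern $(1,1)$.
\item $X=(u,u)$ is a solution with $v\in N_G[X]$: the contracted residual is a single edge, and $\overline{w}$ dominates $C$.
\item Your only branches are $(u,1)$ and $(u,2)$. In each, the remaining candidate set becomes empty once $u$ is deleted, so both are discarded and $\mathcal{I}=\emptyset$.
\end{itemize}
The failure does not depend on the discarding step. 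With $P_1=\{u,w\}$, $P_2=\{u,y\}$ and unit clauses $\{\overline{w}\}$, $\{\overline{y}\}$, your branches survive but are no-instances, while $(u,u)$ is the unique solution of $I$. Your argument would be correct if solutions had to consist of $k$ distinct literals, but that is not the problem defined here.

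\textbf{The missing idea, and how the paper handles it.} You must keep $u$ available for the other positions. The paper therefore deletes neither $u$ nor $\overline{u}$:
\begin{itemize}
\item it deletes only the clause vertices in $N_G(u)$ and the incidence edges of $\overline{u}$, so $\{u,\overline{u}\}$ becomes an isolated matched pair and (W3) survives;
\item it removes $\overline{u}$, but not $u$, from the remaining candidate sets;
\item it then performs the clean-up.
\end{itemize}

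Two parts of your proof change accordingly. First, $G-N_G[X]$ and $G_{i,u}-N_{G_{i,u}}[X']$ now coincide only up to the isolated pair $\{u,\overline{u}\}$, which is present when $u\notin X'$. That pair contracts to an isolated vertex, so it affects neither the treewidth bound nor domination.

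Second, $G_{i,u}$ is no longer an induced subgraph of $G$, so your one-line argument for (W5) does not apply. You need the extra observation that every edge missing from $G_{i,u}$ is incident either to $\overline{u}$, which lies in no $\hat P_j$, or to a deleted clause. Hence every proper semi-ladder of $G_{i,u}$ with literal side in $\hat P_j$ is also one of $G$ with literal side in $P_j$.
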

    \begin{claimproof}[Proof of Claim~\ref{cl:literal}]
        The construction for a literal vertex $v$ is guided by a hypothetical solution $X$ with $v\in N_G[X]$; the claimed properties (in particular the first item) are established unconditionally.
    Since~$v$ is a literal vertex, its only literal neighbor is $\bar v$; hence $v\in N_G[X]$ means that either $v\in X$ or $\bar v\in X$. In other words, we found a variable that belongs to our backdoor set, but we do not know yet, in which polarity we should set it.
    Consequently, we branch over the two possibilities ($v\in X$ or $\bar v\in X$) and over the index $i$ of the candidate set $P_i$ 
    out of which $u$ is selected.
    More specifically, for each $i\in[k]$ and each $u\in\{v,\bar v\}$ with $u\in P_i$, we create an instance $I_{i,u}=(k-1,s,t,G_{i,u})$ covering the case $\ell_i=u$.
    
    The graph $G_{i,u}$ is obtained from $G$ as follows:
    \begin{itemize}
        \item delete every clause vertex in $N_G(u)$, i.e.\ the clauses satisfied by $u$;
        \item delete all edges from $\bar u$ to clause vertices (as $\bar u$ is now false);
        \item if some clause vertex thereby becomes isolated, we discard the branch: such a clause can never be dominated, so $I_{i,u}$ would be a no-instance. We likewise discard the branch if removing $\bar u$ empties some remaining candidate set, i.e.\ if $P_j=\{\bar u\}$ for some $j\neq i$, whose only choice $\ell_j=\bar u$ would conflict with $u$;
        \item otherwise, remove the color indicating membership in candidate set $P_i$ from all literal vertices (the $i$-th literal is now resolved);
        \item remove $\bar u$ from all remaining candidate sets by removing the respective colors from $\bar u$ (choosing $\bar u$ would conflict with our choice of $u$); and
        \item remove all clause vertices $C'$ for which $\hat P_j\subseteq N_{G_{i,u}}(C')$ for some $j\in[k]\setminus\{i\}$ (this clause is satisfied for every remaining choice of $\ell_j$).
    \end{itemize}
    Note that we keep the literals $u$ and $\bar u$ in the instance since $u$ can be reused in another candidate set.
    There are at most $2k$ branches, each constructed in $\mathcal{O}(k(n+m))$ time: deleting the clauses of $N_G(u)$, deleting $\bar u$'s incident edges, and checking whether a clause became isolated together take $\mathcal{O}(n+m)$ time, while the last step tests, separately for each of the at most $k$ sets $\hat P_j$ with $j\neq i$, whether $\hat P_j\subseteq N_{G_{i,u}}(C')$ for every clause $C'$; as in the clause case below, each such test runs in $\mathcal{O}(n+m)$ time by a single scan over the incidence lists of the literals in $\hat P_j$. Summing over all branches yields a total running time of $\mathcal{O}(k^2(n+m))$.

    \paragraph*{Well-formedness.}
    $G_{i,u}$ arises from $G$ by deleting some clause vertices and the edges of $\bar u$ to clause vertices, while keeping all literal vertices and all opposite-literal edges; in particular the opposite-literal edges still form a perfect matching, so (W1)--(W3) are inherited.
    Regarding~(W4), each retained set $\hat P_j$ ($j\neq i$) is non-empty since branches violating this were discarded, and the final deletion removes exactly the clauses $C'$ with $\hat P_j\subseteq N_{G_{i,u}}(C')$ for some $j\neq i$; hence every surviving clause $C'$ satisfies $\hat P_j\not\subseteq N_{G_{i,u}}(C')$ for all $j\neq i$. Any $\hat P_j$ containing the now-isolated vertex $u$ satisfies this trivially, as $u\notin N_{G_{i,u}}(C')$.
    Regarding~(W5), consider any proper semi-ladder of $G_{i,u}$ whose literal side lies in $\hat P_j$; its clause side then consists of surviving clause vertices. All adjacencies and non-adjacencies among these vertices agree in $G$ and $G_{i,u}$, since the only edges $G_{i,u}$ lacks relative to $G$ are those incident to $\bar u$ (not a literal of $\hat P_j$) or to a deleted clause vertex (not on the surviving clause side). Hence the same sequence is a proper semi-ladder of $G$ with literal side in $P_j\supseteq\hat P_j$, giving $s_j'\le s_j$; thus the $SL$-pattern size of $G_{i,u}$ is at most $\sum_{j\neq i}s_j\le s$.

\paragraph*{Preservation of solutions and efficient solution transfer.}
    First, suppose $I$ is a yes-instance witnessed by a solution $X=\{\ell_1,\dots,\ell_k\}$ with $v\in N_G[X]$.
    Since $v\in N_G[X]$, either $u:=v$ or $u:=\bar v$ equals some $\ell_i\in X$, so $u\in P_i$.
    For every $j\neq i$, we have $\ell_j\neq\bar u$ (as $X$ is non-conflicting and $\ell_i=u$), so $\ell_j\in P_j\setminus\{\bar u\}=\hat P_j$; in particular, no $\hat P_j$ is empty.
    Moreover, the deletion of edges incident to $\bar u$ isolates no clause: a clause whose only neighbor is $\bar u$ lies in $G-N_G[X]$ but has no neighbor there, so it could not be dominated---contradicting that $X$ is a solution.
    Hence the branch $I_{i,u}$ is created and not discarded.
    Let $X'$ be the tuple $X$ with its $i$-th entry removed (its remaining entries $\ell_j$, $j\neq i$, may again include $u$). We claim that $X'$ is a solution to $I_{i,u}$ if and only if $X$ is a solution to~$I$.
    Admissibility and non-conflicting property transfer: $\ell_j\in\hat P_j$ for every $j\neq i$, and since the only literal that could conflict with $u$ is $\bar u\notin\hat P_j$, the tuple $X'$ is non-conflicting exactly when $X$ is.
    For the residual, every clause deleted in constructing $G_{i,u}$ lies in $N_G[X]$: those in $N_G(u)$ because $u\in X$, and every $C'$ deleted in the last step because $\ell_j\in\hat P_j\subseteq N_{G_{i,u}}(C')$ gives $C'\in N_G(\ell_j)\subseteq N_G(X)$; likewise, the deleted incidence edges are precisely those incident to $\bar u\in N_G[X]$.
    Hence $G-N_G[X]$ and $G_{i,u}-N_{G_{i,u}}[X']$ coincide (once the coloring of $P_i$ is forgotten), except that the latter still contains the isolated matched pair $\{u,\bar u\}$ whenever $u\notin X'$. As an isolated pair contracts to a single isolated vertex, the contractions $(G-N_G[X])^c$ and $(G_{i,u}-N_{G_{i,u}}[X'])^c$ have the same treewidth, and the clause vertices in one are dominated by a non-conflicting literal set if and only if this holds in the other (the isolated $u$ and $\bar u$ dominate no clause and need not be dominated).
    Consequently, $X'$ is a solution to $I_{i,u}$, so $\mathcal{I}$ contains the yes-instance $I_{i,u}$.
    
    Conversely, the same equivalence yields the solution transfer: given any solution $X'$ to a created branch $I_{i,u}$, its entries satisfy $\ell_j\in\hat P_j\subseteq P_j$ for all $j\neq i$ and are non-conflicting with $u$ (as $\bar u\notin\hat P_j$), so extending $X'$ by $\ell_i:=u$ yields a solution to $I$, obtained in $\mathcal{O}(n)$ time.
    \end{claimproof}

    \begin{claim}\label{cl:clause}
        The statement holds if $v$ is a clause vertex.
    \end{claim}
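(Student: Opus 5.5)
When $v$ is a clause vertex, I would branch on the least index whose literal dominates $v$, following the Guillemot-style idea from the technical overview. Since $v$ is a clause vertex, $v\in N_G[X]$ means some $\ell_i\in X$ is an incidence-neighbor of $v$. Fix a hypothetical solution $X=\{\ell_1,\dots,\ell_k\}$ and let $i$ be the least index with $\ell_i\in N_G(v)$.

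For each $i\in[k]$ I create one instance $I_i=(k,s-1,t,G_i)$. To build $G_i$ from $G$, I first replace $P_i$ by $\hat P_i:=P_i\cap N_G(v)$ by removing the $i$-th color from all literals outside $N_G(v)$. If $\hat P_i=\emptyset$, I discard the branch. Otherwise I delete every clause vertex $C'$ with $\hat P_j\subseteq N_{G_i}(C')$ for some $j$; this deletes $v$ itself, since $\hat P_i\subseteq N_G(v)$. I also use the minimality of $i$: for $j<i$, $\ell_j\notin N_G(v)$, so I may further shrink $P_j$ to $P_j\setminus N_G(v)$, discarding the branch if this empties a set. This gives at most $k\le 2k$ branches. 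Each branch is built in $\mathcal{O}(k(n+m))$ time by recolouring and then scanning the incidence lists of the literals of each $\hat P_j$ to count, for every clause, how many members of $\hat P_j$ it sees. Over all branches this is $\mathcal{O}(k^2(n+m))$.

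Correctness and solution transfer go as in Claim~\ref{cl:literal}. If $X$ is a solution with $v\in N_G[X]$ and $i$ is the least such index, then $\ell_j\in\hat P_j$ for all $j$. Every deleted clause $C'$ satisfies $\hat P_j\subseteq N(C')$, so it is adjacent to $\ell_j$ and lies in $N_G(X)$. Hence the residual graphs $G-N_G[X]$ and $G_i-N_{G_i}[X]$ coincide, so the treewidth and domination properties transfer, and non-conflict is unchanged. Conversely, a solution of $I_i$ has $\ell_j\in\hat P_j\subseteq P_j$. Every clause removed in $G_i$ is adjacent to some chosen literal, and $v$ in particular is adjacent to $\ell_i$. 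So the same $X$ is a solution of $I$ with $v\in N_G[X]$, and the lifting takes $\mathcal{O}(n)$ time. Well-formedness (W1)--(W4) holds by construction, with (W4) secured by the final deletion step and the discard rule.

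The main obstacle is (W5) with budget $s-1$: I must show $s_i$ strictly drops while no $s_j$ increases. For $j\ne i$, any semi-ladder in $G_i$ with literals in $\hat P_j\subseteq P_j$ is one in $G$, because vertex deletion preserves the edge and non-edge pattern among survivors, so $s_j'\le s_j$. For $j=i$, take a proper semi-ladder $a_1,\dots,a_h,b_1,\dots,b_h$ in $G_i$ with all $a_r\in\hat P_i$. I extend it by $a_{h+1}$, a literal of $P_i$ not adjacent to $v$, which exists by (W4) for $G$, and by $b_{h+1}:=v$. Each $a_r$ with $r\le h$ lies in $\hat P_i\subseteq N_G(v)$, so $a_r$ is adjacent to $v$, as required for $r<h+1$. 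The new pair $a_{h+1},v$ is a non-edge, as required. The original relations among $a_1,\dots,a_h,b_1,\dots,b_h$ are unchanged in $G$. Thus $G$ has a proper semi-ladder of height $h+1$ with literal side in $P_i$, so $s_i'\le s_i-1$ and the pattern size is at most $s-1$. If $s_i=0$, then by the same argument $\hat P_i$ supports no semi-ladder even of height $1$; this forces $\hat P_i$ to be adjacent to every clause, so all clauses are deleted, and the bound still holds with pattern size $\le s-1$ unless $s=0$, a degenerate case handled by the base case.
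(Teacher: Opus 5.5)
Your proof is correct and is essentially the paper's argument: restrict $P_i$ to $\hat P_i=P_i\cap N_G(v)$, delete every clause whose neighbourhood contains some $\hat P_j$, and append the pair $(\ell,v)$ with $\ell\in P_i\setminus N_G(v)$ to obtain $s_i'\le s_i-1$. The extra shrinking of $P_j$ to $P_j\setminus N_G(v)$ for $j<i$ is sound (provided the clause deletion is done after all the shrinkings) but unnecessary---the paper branches on every $i$ with $P_i\cap N_G(v)\neq\emptyset$, so by (W4) of $I$ only $\hat P_i$ has to be tested---and your closing case $s_i=0$ is vacuous, because (W4) applied to $v$ itself already yields a proper semi-ladder of height $1$, so $s\ge s_i\ge 1$.
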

    \begin{claimproof}[Proof of Claim~\ref{cl:clause}]
    The construction for a clause vertex $v=C$ is guided by a hypothetical solution~$X$ with $C\in N_G(X)$; the properties claimed below, however, are established unconditionally.
    For each $i\in [k]$, if $P_i\cap N_G(C) \neq \emptyset$, we consider an instance $I_i$ covering the case that for the $i$-th literal vertex~$\ell_i$ of $X$ we have that $\ell_i\in N_G(C)$. Intuitively, this case corresponds to the clause $C$ being satisfied by the literal $\ell_i$.
    We now construct $G_i$ from $G$ as follows.
    Update the coloring so that the candidate sets of~$G_i$ become $\hat P_1,\dots,\hat P_k$, where $\hat P_i = P_i \cap N_G(C)$ and $\hat P_j=P_j$ for all $j\in [k]\setminus\{i\}$.
    Moreover, we remove all clause vertices $C'$ (and their incident edges) for which $\hat P_j \subseteq N_G(C')$ for some $j\in [k]$; such a clause $C'$ is adjacent to every admissible choice of $\ell_j\in \hat P_j$ and is thus intuitively guaranteed to be satisfied by whatever literal we end up selecting from $\hat P_j$. Note that in particular, this includes $C$ itself. Towards bounding the running time of the above procedure, note that each $I_i$ can be constructed in $\mathcal{O}(n+m)$ time: by~(W4) applied to $I$, no clause vertex $C'$ satisfies $P_j=\hat P_j\subseteq N_G(C')$ for any $j\neq i$, so only the condition $\hat P_i\subseteq N_G(C')$ has to be tested, which can be checked for all clause vertices simultaneously in $\mathcal{O}(n+m)$ time.

    \paragraph*{Well-formedness.}
    Since $G_i$ is obtained from $G$ by deleting only clause vertices and their incident edges, all literal vertices and opposite-literal edges remain, so (W1)--(W3) are inherited from $G$.
    Regarding~(W4), each set is non-empty: $\hat P_i = P_i\cap N_G(C)\neq\emptyset$ by the branching case, and $\hat P_j = P_j\neq\emptyset$ for $j\neq i$. Moreover, the deletion step removes exactly the clause vertices with $\hat P_j\subseteq N_G(C')$ for some $j\in [k]$, so every surviving clause $C'$ satisfies $\hat P_j\not\subseteq N_G(C')$ for all $j$ (and $N_G(C')=N_{G_i}(C')$, as clause vertices neighbor only literal vertices and no literal vertices are deleted).
    Finally,~(W5) follows from the next paragraph, which bounds the $SL$-pattern size of $G_i$ by $s-1$.

    \paragraph*{Shrinkage of $SL$-pattern size.}
    Let $(s_1, \ldots, s_k)$ and $(s_1', \ldots, s_k')$ be the $SL$-patterns of $I$ and some~$I_i$, respectively.
    For each $j\in [k]$, since $G_i$ is an induced subgraph of $G$ and $\hat P_j\subseteq P_j$, we have that every proper semi-ladder of $G_i$ with literals in $\hat P_j$ is also one of $G$ with literals in $P_j$; hence $s_j' \le s_j$.
    We sharpen this to $s_i' \le s_i-1$.
    Fix some $\ell \in P_i \setminus N_G(C)$, which exists since $P_i\not\subseteq N_G(C)$ by~(W4), and let $a_1,\dots,a_{s_i'}, b_1,\dots,b_{s_i'}$ be a proper semi-ladder of height $s_i'$ in $G_i$ with all literals in $\hat P_i$.
    Appending $(\ell,C)$ as the final ordered pair yields the sequence
    \[
        a_1,\dots,a_{s_i'},\ell,\qquad b_1,\dots,b_{s_i'},C,
    \]
    which is a proper semi-ladder of height $s_i'+1$ in $G$ with all literals in $P_i$: its new horizontal pair $\{\ell,C\}$ is a non-edge since $\ell\notin N_G(C)$, the diagonal edges $\{a_r,C\}$ are present for all $r\in[s_i']$ because $a_r\in \hat P_i\subseteq N_G(C)$, and all remaining edge and non-edge conditions are inherited.
    Here $\ell$ and $C$ are genuinely new vertices: $\ell\notin\{a_1,\dots,a_{s_i'}\}$ as $\ell\notin \hat P_i$, and $C$ is not a vertex of $G_i$.
    Hence $s_i\ge s_i'+1$, that is, $s_i'\le s_i-1$.
    Summing over $j$, the $SL$-pattern size of $I_i$ is at most $\sum_{j\neq i} s_j + (s_i-1) \le s-1$.

    \paragraph*{Preservation of solutions and efficient solution transfer.}
     Assume that $I$ is a yes-instance via a solution $X$ with literal vertices $\ell_1,\dots,\ell_k$ and let $C\in N_G(X)$. 
    Then $\ell_i\in N_G(C)$ for some $i\in[k]$, so $\ell_i\in P_i\cap N_G(C)=\hat P_i$ while $\ell_j\in P_j=\hat P_j$ for all $j\neq i$.
    We claim that $X$ is a solution to $I_i$ as well.
    First, each literal $\ell_j$ lies in $\hat P_j$.
    Secondly, every clause vertex~$C'$ removed in the construction of $G_i$ satisfies $\hat P_j\subseteq N_G(C')$ for some $j\in[k]$, and as $\ell_j\in \hat P_j$ this gives $C'\in N_G(\ell_j)\subseteq N_G(X)$.
    Hence all deleted vertices lie in $N_G[X]$, so $G-N_G[X]$ and $G_i-N_{G_i}[X]$ coincide once the $P_j$-colorings are forgotten; in particular their contractions (of all opposite-literal edges) $(G-N_G[X])^c$ and $(G_{i}-N_{G_{i}}[X])^c$ have the same treewidth and the clause vertices in $G-N_G[X]$ are dominated by a non-conflicting set of literal vertices if and only if this is the case in $G_{i}-N_{G_{i}}[X]$.
    Thus, since $X$ is a solution to $I$, it is a solution to $I_i$ as well.

    Conversely, let $X$ be any solution to $I_i$. Its literals satisfy $\ell_i\in\hat P_i=P_i\cap N_G(C)$ and $\ell_j\in\hat P_j=P_j$ for $j\neq i$, so $X$ is admissible for $I$ and in particular $C\in N_G(X)$. As above, every clause deleted in constructing $G_i$ lies in $N_G[X]$, so $G-N_G[X]$ and $G_i-N_{G_i}[X]$ coincide, and the equivalences of the previous paragraph show that $X$ is a solution to $I$. The solution is unchanged, so the third property holds in $\mathcal{O}(n)$ time.
    \end{claimproof}
    Finally, Lemma~\ref{lem:branchingReduction} follows directly from Claim~\ref{cl:literal} and Claim~\ref{cl:clause}.
\end{proof}

\subsection{Deriving the Main Result}\label{sec:solving-sat}

We are now ready to establish our main recursion routine.

\mainTechnicalTheorem*
\begin{proof}
    We first describe the algorithm, and then analyze its running time and argue its correctness.

    We apply the reducer (Lemma~\ref{lem:reducer}) to $I$ to obtain an instance $I'$ of size $\mathcal{O}_{k,s,t}(n+m)$ that is ready for extraction. We then use the extractor (Lemma~\ref{lem:extractor}) on $I'$ to obtain a set~$V'$ of vertices.
    Then, for each~$v\in V'$, we call the progressor (Lemma~\ref{lem:branchingReduction}) on $I'$ and $v$, each call yielding a set of at most $2k$ instances. We output all instances obtained across all $v\in V'$.

    Each of the three steps takes $\mathcal{O}_{k,s,t}(n+m)$ time by the respective lemma, and hence so does the entire pipeline.
    We output $q\le 2k\cdot |V'|$ instances. Recall from the extractor that $|V'|$ is bounded from above by a function of $k,s,t$ only, so $q$ is bounded by a similar function.

    The reducer preserves the values of $k,s$, and $t$, while the progressor preserves $t$ and, in each output instance, either decrements $k$ or decrements $s$. Thus, every output instance is of the form $(k-1,s,t,G_i)$ or $(k,s-1,t,G_i)$, establishing the first point. Moreover, since $I$ is well-formed, the reducer returns a well-formed instance $I'$ (Lemma~\ref{lem:reducer}), so the extractor and progressor receive a well-formed instance, and every instance produced by the progressor is well-formed (Lemma~\ref{lem:branchingReduction}); hence all $q$ output instances are well-formed.

    For the second point, first suppose $I$ is a yes-instance. Then by Lemma~\ref{lem:reducer} $I'$ is a yes-instance as well, witnessed by some solution~$X$. Let $G'$ be the graph in $I'$. Since $I'$ is ready for extraction, it satisfies the condition of Lemma~\ref{lem:extractor}, so $N_{G'}[X]\cap V' \neq \emptyset$; fix some $v\in N_{G'}[X]\cap V'$. As $v\in N_{G'}[X]$, the call to the progressor (Lemma~\ref{lem:branchingReduction}) on $I'$ and $v$ produces a yes-instance, which is among our output. Conversely, if one of the output instances is a yes-instance, then the third point (shown below) turns its solution into a solution to $I$, so $I$ is a yes-instance as well. This proves the equivalence.

    For the third point, consider a solution $X$ to an output instance $I_i$. Lemma~\ref{lem:branchingReduction} lifts $X$ to a solution~$X'$ to $I'$ and  Lemma~\ref{lem:reducer} then lifts $X'$ to a solution to $I$, each in $\mathcal{O}_{k,s,t}(n)$ time.
\end{proof}

Using the above, we are almost ready to establish the algorithm for solving \textsc{SAT} via weak backdoors (Theorem~\ref{thm:main}). 
The last ingredient we need is that the base instances of {\bdextproblemtw} after exhaustive application of Theorem~\ref{thm:mainTechnicalTheorem} are easy to solve. 

\begin{lemma}\label{lem:nonConflictSelect}
    There is an algorithm that takes as input a well-formed instance $I=(k,s,t,G)$ of {\bdextproblemtw} such that $G$ has $n$ literal vertices and no clause vertices. In time $\mathcal{O}(k^{k+1}+kn)$ it outputs a solution $X$ to $I$ or verifies that none exist.
\end{lemma}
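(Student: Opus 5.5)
With no clause vertices, the instance collapses to a pure selection problem. $N_G[X]$ consists only of literal vertices, so $G-N_G[X]$ has no clause vertices. Its edges are opposite-literal edges only, so contracting them leaves an edgeless graph of treewidth $0\le t$. The domination condition is vacuous, since the empty set dominates the (empty) set of clause vertices. A tuple $X=(\ell_1,\dots,\ell_k)$ is therefore a solution if and only if $\ell_i\in P_i$ for all $i$ and no two entries are opposite literals. Entries may coincide; $X$ is an ordered set possibly with repetitions, as in the progressor, where $u$ may be reused. So the task is to pick one literal from each $P_i$ such that the chosen set contains no complementary pair.

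The algorithm is a bounded-depth branching procedure that commits to literals one at a time. We maintain a set $F$ of chosen literals ($|F|\le k$) and the set $U\subseteq[k]$ of indices not yet served, where $i$ is served once $P_i\cap F\neq\emptyset$; any literal of $F$ in $P_i$ can then be used as $\ell_i$. Take the smallest unserved $i$. Any solution extending $F$ chooses $\ell_i\in P_i$ with $\overline{\ell_i}\notin F$ and $\ell_i\notin F$. If $P_i\setminus\overline F$ contains a literal whose variable is untouched by $F$ and which lies in no other unserved $P_j$, then that literal is interchangeable with every other such literal: it blocks nothing relevant for the remaining sets, so picking it greedily is safe. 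Otherwise, every candidate either lies in some other unserved $P_j$ or is the complement of a literal still relevant to the remaining choices. To bound branching, for each unserved $j\neq i$ we branch on whether $\ell_i$ also serves $P_j$. When choosing $\ell_i$ to serve exactly a set $S\subseteq U$ of indices, any literal in $\bigcap_{j\in S}P_j\setminus\bigcup_{j\in U\setminus S}P_j$, avoiding $\overline F$, is equivalent up to the complement-blocking of future choices.

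That complement-blocking issue is the delicate point. A future index $j$ is hurt by the choice $\ell$ only if $P_j\setminus\{\overline\ell\}$ loses options, which matters only when $P_j$ has at most $k$ remaining usable literals; otherwise some literal survives no matter what. So a cleaner plan is to classify each set $P_j$ as small ($|P_j|\le 2k$) or large. Large sets can always be served last, greedily: at most $k$ chosen literals forbid at most $k$ complements, so some literal remains. It suffices to branch over the choices for small sets, giving at most $(2k)^k$ branches. This nearly matches the claimed $\mathcal{O}(k^{k+1})$. To reach exactly $k^{k+1}$, I would use threshold $k$ rather than $2k$ and argue that a set with more than $k$ literals has one compatible with any $k-1$ other choices, since each chosen literal forbids only its complement.

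The main obstacle is making this counting exact, so that a set of size greater than $k-1$, after removing the complements of the at most $k-1$ other chosen literals, still has an available element. Concretely: handle small sets ($|P_j|\le k-1$) by exhaustive branching, at most $(k-1)^k\le k^{k+1}$ combinations, each checked for conflicts in $\mathcal{O}(k)$ time. Then fill the large sets greedily in $\mathcal{O}(kn)$ total time: scan each $P_j$ for a literal whose complement is not among the chosen ones, using a marking array. The $\mathcal{O}(kn)$ term also covers reading off the sets $P_i$ from the colors and identifying which sets are small. Correctness then follows because any solution restricted to the small sets is one of the enumerated combinations, and the greedy step always succeeds for large sets.
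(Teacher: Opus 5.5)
Your final plan is correct and uses the same key observation as the paper's proof: a candidate set with at least $k$ literals can never be blocked, because the at most $k-1$ other chosen literals forbid at most $k-1$ complements, so only sets of size below $k$ need exhaustive branching. The paper instead truncates every large $P_i$ to $k$ elements and enumerates all $\mathcal{O}(k^k)$ selections, using the greedy argument only in the correctness proof, whereas you enumerate just the small sets and then fill the large ones greedily once; make sure the marking array is updated with each literal already chosen for an earlier large set, and drop the abandoned branching sketch from your second paragraph.
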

\begin{proof}
    Note that contracting all edges between opposing literals in $G$ already gives a graph of treewidth~$0$. Thus, in order to solve an instance that is as specified by the lemma statement, 
    it only remains to decide whether there are pair-wise non-conflicting literals $\ell_1, \ldots, \ell_k$ such that $\ell_i \in P_i$ for all $i\in [k]$.
    We can enumerate all candidate sets in $\mathcal{O}(kn)$ time.
    From each candidate set $P_i$ with $|P_i| > k$ drop all but the $k$ first entries.
    Then, consider all $\mathcal{O}(k^k)$ possibilities to select one literal from each set.
    For each possibility, we can test whether it is non-conflicting in $\mathcal{O}(k)$ time, and if so output this choice as the solution $X$. 
    We reject the instance if the test fails for all choices.

    Clearly, any output choice is indeed a valid solution. 
    For the other direction, suppose there is a solution $X$. Consider the set of branches where for all sets $P_i$ with size smaller than $k$, the same literals are chosen as in $X$. Among these, there is a branch that finds a solution: The literals for sets $P_i$ with $|P_i|< k$ are non-conflicting between each other since they appear in $X$. We then fix the choices for the sets $P_i$ with $|P_i| \ge k$ (now $|P_i| = k$) one after another, at each step restricting to the subset of branches consistent with the choices made so far. When fixing such a set $P_i$, the at most $k-1$ literals chosen so far are opposite to at most $k-1$ elements. Since for the reduced set $|P_i|=k$, there is at least one branch where $\ell_i$ does not conflict with any previously chosen literal. Iterating over all these sets leaves a branch whose selection is non-conflicting overall.
\end{proof}

\begin{lemma}\label{lem:k-is-zero}
    There is an algorithm that takes as input a well-formed instance $I=(0,s,t,G)$ of {\bdextproblemtw} such that $G$ has $n$ vertices and $m$ edges. In time $2^{\mathcal{O}(t^3)}\cdot(n+m)$ it outputs a solution $X$ to $I$ or verifies that none exist.
\end{lemma}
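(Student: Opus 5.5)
With $k=0$ the only candidate solution is the empty tuple $X=\emptyset$, so $N_G[X]=\emptyset$. The algorithm checks whether $X=\emptyset$ is a solution. The non-conflicting condition holds vacuously. It remains to check two things: the bounded treewidth condition on $G^c$, the graph obtained from $G$ by contracting all opposite-literal edges, and the domination condition on $G$ itself. Well-formedness is given by assumption, so nothing needs to be checked for it. If both conditions hold, we output $X=\emptyset$; otherwise we report that no solution exists.

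\emph{Computing $G^c$ and its treewidth.} By (W3) the opposite-literal edges form a perfect matching on the literal vertices. Hence $G^c$ is exactly the unsigned incidence graph of the encoded formula. We build it in $\mathcal{O}(n+m)$ time: create one vertex per matched pair, keep the clause vertices, and map each incidence edge to its pair vertex, merging parallel edges with a bucket or marker pass. We then run Bodlaender's linear-time algorithm on $G^c$ with parameter $t$. In $2^{\mathcal{O}(t^3)}\cdot(n+m)$ time it either certifies $\tw(G^c)>t$, in which case we reject, or returns a tree decomposition of width at most $t$.

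\emph{Checking domination.} The question is whether some set of pairwise non-opposite literal vertices dominates every clause vertex. This is precisely satisfiability of the formula represented by $G$: choose one literal per variable and require every clause to contain a chosen literal. Choosing at most one literal per pair is enough, since a set with no opposite pair extends to a full assignment. We decide it with standard dynamic programming for SAT over a nice tree decomposition of the incidence graph, in the style of Samer and Szeider~\cite{SamerS10}. Each state records an assignment of the variables in the bag together with, for each clause in the bag, whether it is already satisfied. The running time is $2^{\mathcal{O}(t)}\cdot(n+m)$, and we backtrack to obtain a witness.

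Alternatively, we can stay within the paper's toolkit: apply the constructive Courcelle theorem of Arnborg, Lagergren and Seese~\cite{ArnborgLS91} to the $\mathrm{MSO}_2$ formula $\psi$ from the proof of Lemma~\ref{lem:finite-index}, specialised to $X=\emptyset$. Its dependence on $t$ is less explicit, though. The explicit DP is what meets the stated $2^{\mathcal{O}(t^3)}$ bound, with Bodlaender's algorithm as the dominating cost. We output $X=\emptyset$ exactly when both checks succeed, and correctness is immediate from Definition~\ref{def:bdextproblemtw}.

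The main obstacle is keeping the whole procedure linear-time despite the contraction. We must ensure that building $G^c$ neither loses nor duplicates incidences: a clause containing both $x$ and $\bar{x}$ yields only a single edge in $G^c$. We must also ensure that the DP interprets the contracted vertices back as variables carrying two literals. Once that is in place, everything else is routine.
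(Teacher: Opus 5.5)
Your proposal is correct and matches the paper's proof: with $k=0$ the only candidate is $X=\emptyset$, so you test $\tw(G^c)\le t$ with Bodlaender's algorithm and then decide the domination condition as \textsc{SAT} of incidence treewidth at most $t$ via the Samer--Szeider dynamic program. Your additional details (building $G^c$ in linear time while merging parallel edges, and the Courcelle-based alternative) are sound but go beyond what the paper's proof needs.
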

\begin{proof}
    Since $k=0$ the only possible solution is the empty set. 
    Let $G^c$ be the graph obtained from contracting all edges between pairs of literals in $G$.
    Note that $(0,s,t,G)$ is a yes-instance (via $X=\emptyset$) if and only if $G^c$ has treewidth at most $t$ and there is a set of non-conflicting literal vertices that dominate all clause vertices.
    In time $2^{\mathcal{O}(t^3)}\cdot(n+m)$ we can decide whether $G^c$ has treewidth at most $t$, e.g. using
    Bodlaender's algorithm~\cite{Bodlaender96}. If the treewidth of $G^c$ exceeds $t$, we reject the instance.
    Otherwise, deciding whether there is a dominating set of non-conflicting literal vertices corresponds exactly to solving a \textsc{SAT} instance with incidence treewidth at most $t$. This can be done in $2^{\mathcal{O}(t)}\cdot(n+m)$ time, e.g.  by the dynamic-programming algorithm of Samer and Szeider~\cite{SamerS10}.
\end{proof}

We now proceed with the proof of Theorem~\ref{thm:main}. For convenience, we restate the theorem below.

\findingBackdoor*

\begin{proof}
    The algorithm proceeds in three steps: it translates $\varphi$ into an equivalent instance of \bdextproblemtw, iterates Theorem~\ref{thm:mainTechnicalTheorem} to reduce $k$ or the $SL$-pattern size to zero, and finally solves the resulting simple instances and backtracks a solution. Below, we describe and analyze each step individually.

 \paragraph*{Translation into \bdextproblemtw.}
    If $d=0$, then no formula is signed $d$-semi-ladder-free (every graph contains the empty semi-ladder of height $0$), so the statement holds vacuously; we therefore assume $d\ge 1$ in the following, which ensures $s=(d-1)k\ge 0$.
    Without loss of generality, assume that for $\varphi=(\var,\clause)$ there is no clause $C\in \clause$ with $L(\var)\subseteq C$, that is, no clause in $\varphi$ contains every variable of $\varphi$ in both polarities.
    Let $G_\varphi^*$ be the signed incidence graph of $\varphi$ but with the following coloring for edges and vertices: vertices are colored as \emph{literal} or \emph{clause} vertices depending on which they represent, respectively. Edges are colored according to $(W2)$ in the definition of \bdextproblemtw. 
    Every literal vertex also carries all $k$ additional colors---consequently the candidate sets $P_1, \ldots, P_k$ each contain all literal vertices.
    We claim that $\varphi$ has a weak $(k,t)$-backdoor $X$ if and only if $X$, read as a set of literal vertices of $G_\varphi^*$, is a solution to the instance $I_{dk}=(k,s=(d-1)k,t,G_\varphi^*)$ of \bdextproblemtw.
    Under the identification of each literal of $\varphi$ with the literal vertex of $G_\varphi^*$ representing it, the two notions match property by property:
    \begin{itemize}
        \item Because every $P_i$ contains all literal vertices, the constraint $\ell_i\in P_i$ is vacuous, so a solution is simply a set $X$ of at most $k$ literal vertices meeting the three properties below (note that the same literal can be chosen for multiple candidate sets $P_i$).
        \item The \emph{non-conflicting} property---no opposite-literal edge joins two chosen vertices---holds if and only if $X$ does not contain a variable in both polarities, i.e.\ if and only if the literals of $X$ define a consistent partial assignment $\alpha$ setting each of them to true.
        \item Deleting $N_{G_\varphi^*}[X]$ from $G_\varphi^*$ removes the variables underlying $X$ together with all clauses satisfied by~$\alpha$, and contracting the surviving opposite-literal edges merges the two vertices of each remaining variable into a single variable vertex; the resulting graph is exactly the signed incidence graph of $\varphi[\alpha]$. Hence the \emph{bounded-treewidth} property holds if and only if the signed incidence graph of $\varphi[\alpha]$ has treewidth at most $t$.
        \item A set of pairwise non-conflicting literal vertices of $G_\varphi^*-N_{G_\varphi^*}[X]$ that dominates all remaining clause vertices is precisely a consistent assignment of the residual variables that satisfies every remaining clause. Hence the \emph{domination} property holds if and only if $\varphi[\alpha]$ is satisfiable.
    \end{itemize}
    Combining these equivalences, $X$ is a solution to $I_{dk}$ if and only if $\alpha$ witnesses that $X$ is a weak $(k,t)$-backdoor of $\varphi$, proving the claim. Here the subscript represents $s+k = dk$, a measure of progress towards a simple instance. Moreover, $I_{dk}$ is well-formed.
    
    This translation takes $\mathcal{O}_k(|\var| + |\clause| + |\varphi|)$ time. For the following, let $n$ and $m$ denote the number of vertices and edges in $G_\varphi^*$, respectively, and note that $n = |\clause| + 2|\var|$ and $m = |\varphi| + |\var|$, so both are linear in the encoding size of $\varphi$.

 \paragraph*{Translation into simple instances.}
    Towards solving $I_{dk}$, we perform multiple iterations of Theorem~\ref{thm:mainTechnicalTheorem}, each time decrementing $k$ or the bound $s$ on the SL-pattern size, until we obtain a set of instances where one of the measures is $0$. If this is already the case for $I_{dk}$ (that is, $k=0$, or $s=0$ as happens when $d=1$), we skip this step and set $\mathcal{I}_1=\{I_{dk}\}$.
    Otherwise, let $\mathcal{I}_{dk}=\{I_{dk}\}$.
    For $p = dk-1, dk-2, \ldots, 1$, compute $\mathcal{I}_{p}$ by applying Theorem~\ref{thm:mainTechnicalTheorem} to every instance in $\mathcal{I}_{p+1}$ and collecting the union of all output sets in $\mathcal{I}_{p}$.

    Note that for all $p\in [dk]$, each instance $I'$ in $\mathcal{I}_p$ is of the form $(k',s',t,G_{I'})$ with $k'+s' = p$ for some graph $G_{I'}$, $k' \le k$, and $s' \le s$.
    If at some point we produce an instance with $k'=0$ or without clause vertices, we immediately put it in the set $\mathcal{I}_1$ instead of adding it to the current set.
    
    Indeed, we observe that all instances in $\mathcal{I}_1$ have $k'=0$ or do not have clause vertices. Suppose otherwise, i.e., there is an instance $I= (k', s', t, G) \in \mathcal{I}_1$ with $k'>0$ such that $G$ contains at least one clause vertex $C$.
    Since $k'>0$ and due to the existence of $C$, $I$ has not been placed in $\mathcal{I}_1$ prematurely but rather after $dk-1$ iterations of Theorem~\ref{thm:mainTechnicalTheorem}.
    Since initially $p=dk$ and each iteration decreases $p = s'+k'$, we have that $s'+k' = 1$ after the last iteration. As $k'>0$, we have $s' = 0$.
    However, we also have $P_1 \not\subseteq N_G(C)$ by~(W4).
    Consequently, there is a literal vertex $\ell \in P_1$ such that $\ell \notin N_G(C)$, so~$\ell$ and $C$ form a semi-ladder of height $1$ (i.e., a non-edge) in $G$.
    Thereby, $(k', s', t, G)$ has SL-pattern size at least $1$, contradicting $s'=0$.

     By Theorem~\ref{thm:mainTechnicalTheorem}, there is a constant $q$ depending only on $k,s,$ and $t$ such that $|\mathcal{I}_p| \le q \cdot |\mathcal{I}_{p+1}|$. Hence we construct $\mathcal{O}(q^{dk})=\mathcal{O}_{d,k,t}(1)$ many instances in total (recall that at every point $s' \le s < dk$). Each application of Theorem~\ref{thm:mainTechnicalTheorem} (and test for clause vertices) takes $\mathcal{O}_{d,k,t}(n+m)$ time, so the total time required for this step is in $\mathcal{O}_{d,k,t}(n+m)$.

 \paragraph*{Solving the simple instances and backtracking the solution.}
    All instances $(k', s', t, G)$ in $\mathcal{I}_1$ for which we have $k'=0$ are solved via Lemma~\ref{lem:k-is-zero}. All remaining instances in $\mathcal{I}_1$ are solved via Lemma~\ref{lem:nonConflictSelect} since, by the argument above, they do not have clause vertices.
    
    If there are only no-instances in $\mathcal{I}_1$, we certify that $\varphi$ does not admit a weak $(k,t)$-backdoor as follows: suppose $\varphi$ would admit a weak $(k,t)$-backdoor, then $I_{dk}$ is a yes-instance of \bdextproblemtw.
    Then, every set $\mathcal{I}_{p}$ for $p\in [dk]$ contains at least one yes-instance (via induction starting at $\mathcal{I}_{dk}$ and by Theorem~\ref{thm:mainTechnicalTheorem}) or at some point, a yes-instance was supposed to be placed in a set $\mathcal{I}_{p}$ with $p>1$ but instead was placed in $\mathcal{I}_1$ due to not having clause vertices or $k'=0$.
    In either case $\mathcal{I}_1$ would contain a yes-instance.
    
    Now suppose one of the instances in $\mathcal{I}_1$, say $I$, admits a solution $X$. Note that $I$ was obtained from~$I_{dk}$ by a sequence of applications of Theorem~\ref{thm:mainTechnicalTheorem}. By reversing this sequence, Theorem~\ref{thm:mainTechnicalTheorem} allows us to lift $X$ back up to a solution $X^\star$ to $I_{dk}$, which in turn is a weak $(k,t)$-backdoor to $\varphi$.

    To obtain a corresponding satisfying assignment, it suffices to solve the satisfiability problem on the residual CNF formula $\varphi^\star$ in which all variables represented in $X^\star$ and all clauses satisfied by $X^\star$ are removed. Since the incidence graph of this residual formula has treewidth at most $t$, any algorithm for solving \textsc{SAT} on instances of bounded incidence treewidth can be used, e.g.\ the dynamic-programming algorithm of Samer and Szeider~\cite{SamerS10}, which runs in time $2^{\mathcal{O}(t)}\cdot(n+m)$ once a width-$t$ tree decomposition is computed (for example via Bodlaender's algorithm~\cite{Bodlaender96} in time $2^{\mathcal{O}(t^3)}\cdot(n+m)$).

    Solving all instances in $\mathcal{I}_1$ takes $|\mathcal{I}_1|\cdot \mathcal{O}_{k,t}(n + m) = \mathcal{O}_{d,k,t}(n+m)$ time, using Lemma~\ref{lem:k-is-zero} for the instances with $k'=0$ and Lemma~\ref{lem:nonConflictSelect} for the clause-less instances. Lifting a solution through the iterations of Theorem~\ref{thm:mainTechnicalTheorem} takes $\mathcal{O}_{d,k,t}((dk-1) \cdot n) =
    \mathcal{O}_{d,k,t}(n)$ time.
\end{proof}

\section{Hardness of Finding Weak Backdoors}\label{sec:lowerbounds}

We accompany our algorithm by two lower bounds, which support our choice of measuring the height of semi-ladders in the signed incidence graph.
Under well-established complexity-theoretic assumptions, they rule out that the same result can be achieved when measuring ladders instead of semi-ladders in the signed incidence graph, or when measuring semi-ladders in the unsigned incidence graph. 
Both lower bounds rely on reductions from the \textsc{Red-Blue Dominating Set} problem.

\paramproblemdef{\textsc{Red-Blue Dominating Set}}
{ A graph $G=((R \uplus B), E)$ and a natural number $k$.}{$k$}
{Is there an $S \subseteq R$ with $|S|\leq k$ such that $S$ dominates all vertices in $B$?}

\begin{theorem}\label{thm:lowerbounds-ladderfree}
    For all $t \ge 0$ and $d\geq 18$, deciding whether a \textsc{SAT} instance $\varphi$ with an $L_d$-free signed vertex incidence graph admits a weak backdoor of size $k$ to treewidth $t$ is \W[1]-hard parameterized by $k$.
\end{theorem}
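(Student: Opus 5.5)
The plan is a parameterized reduction from \textsc{Multicolored Clique}, which is \W[1]-hard parameterized by the number $k$ of colour classes. \textsc{Red-Blue Dominating Set} itself is \W[2]-hard, and we expect it to resurface only as an intermediate view: selecting backdoor literals amounts to dominating clause vertices. The guiding observation is that ladders (unlike semi-ladders) are destroyed by bounded ``co-structure''. In a bipartite graph where every clause vertex has at most $c$ neighbours on the literal side, a ladder $a_1,\dots,a_h,b_1,\dots,b_h$ has $b_h$ adjacent to all of $a_1,\dots,a_h$, so $h\le c$. Dually, in a disjoint union of stars a ladder has height at most $1$, since $b_1$ and $b_2$ would share the neighbour $a_1$ and hence coincide. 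So we will build $\varphi$ with two kinds of clauses. Clauses of the first kind have unbounded length, but their literal neighbourhoods are pairwise disjoint or nested in a controlled way. Clauses of the second kind have constant length. Then every ladder in $G_\varphi^{\mathrm{signV}}$ has bounded height.

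\textbf{Construction.} Let the \textsc{Multicolored Clique} instance be $H$ with colour classes $V_1,\dots,V_k$. For each vertex $v\in V(H)$ we introduce a variable $x_v$, and for each class $i$ a selection clause $S_i=\bigvee_{v\in V_i}x_v$. For every non-edge $\{u,v\}$ between different classes we introduce a short clause $(\overline{x_u}\vee\overline{x_v})$. We also add a small number of auxiliary ``guard'' variables, together with short clauses that force large treewidth unless exactly one $x_v$ per class is made true and every other relevant occurrence is neutralised by the backdoor assignment. The backdoor size becomes $k'=k+g(k)$ for some function $g$, so the parameter stays bounded in $k$.

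The intended weak backdoor sets $x_v$ true for the clique vertices $v$; these literals satisfy all $S_i$. The other variables are set false inside the residual formula, where the non-edge clauses are satisfied exactly when the chosen vertices are pairwise adjacent. To make the treewidth condition match the clique condition and work for every $t\ge 0$, we will attach to each class a gadget, for instance a $K_{t+2,t+2}$-like bundle of short clauses over the $x_v$ with $v\in V_i$ plus guard literals. Assigning the guard literals kills the gadget, and otherwise the gadget blocks treewidth $t$. Then any backdoor of size $k'$ must spend its budget on the guards and on one selection literal per class, and the satisfiability requirement on the residual enforces the clique property.

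\textbf{Correctness and ladder bound.} We first prove the equivalence. A clique yields a backdoor together with a satisfying extension. Conversely, from a backdoor we argue by counting budget: each gadget must be hit, so exactly one vertex per class is selected, and the residual non-edge clauses then force pairwise adjacency. Next we bound the ladder height. We split a hypothetical ladder of height $d$ in $G_\varphi^{\mathrm{signV}}$ into its positions by vertex type (literal/clause on each side), as in the proof of Lemma~\ref{lem:semi-ladder-notions}, and bound each type separately. Since opposite literals form a matching, at most $O(1)$ positions can use literal--literal adjacencies. Among clause--literal positions, at most one long clause $S_i$ can occur in a ladder, because the literal neighbourhoods of distinct $S_i$ are disjoint. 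Short clauses contribute at most their length. Summing these constant contributions should give a bound like $17$, which explains the threshold $d\ge 18$.

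\textbf{Main obstacle.} We expect the hardest step to be designing the treewidth gadget. It must make the reduction work for every fixed $t$, including $t=0$ where the residual must have no clauses at all, while keeping every clause that involves many $x_v$ either very short or part of a family with disjoint neighbourhoods. If the residual cannot be forced to be clause-free for $t=0$, our first fallback is to encode the non-edge checks with dedicated per-pair guard variables inside the backdoor instead. That trades the clique checks for budget $k+\binom{k}{2}$ and keeps all clauses of bounded length, apart from the disjoint selection clauses.
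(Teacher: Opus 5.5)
Your proposal has a genuine gap. The problem is not only that the guard gadget, which would carry the whole reduction, is never specified. The design principle itself cannot give \W[1]-hardness.

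The ladder bounds you rely on use only the forced edges above the diagonal. A clause $b_j$ of length at most $c$ forces $j-1\le c$, because it contains $a_1,\dots,a_{j-1}$. Two clauses $b_j,b_{j'}$ with $2\le j<j'$ cannot have disjoint literal sets, because both contain $a_1$. Neither argument uses the non-edges below the diagonal, so both bound proper \emph{semi-ladders} just as well. Suppose every clause of $\varphi$ has length at most $c$ or lies in a family with pairwise disjoint literal sets, as with your $S_i$, the non-edge $2$-clauses and the constant-size gadget clauses. Then every proper semi-ladder of $G_\varphi^{\mathrm{signV}}$ has height at most $c+2$. By Lemma~\ref{lem:semi-ladder-notions}, $\varphi$ is then signed $d'$-semi-ladder-free for some $d'=d'(c)$.

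Allowing nested long clauses, i.e.\ a laminar family, does not help. The long clauses among $b_2,\dots,b_h$ all contain $a_1$, so they form an inclusion chain, and a semi-ladder on a chain is already a ladder of height one less.

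On such formulas Theorem~\ref{thm:main} decides the existence of a weak $(k',t)$-backdoor in linear time. Any reduction of the shape you describe would therefore make \textsc{Multicolored Clique} fixed-parameter tractable, i.e.\ FPT $=$ \W[1].

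What is missing is the structure that separates ladder-free from semi-ladder-free instances: large co-matchings, i.e.\ clauses that contain all but a few literals of a large block. You set \textsc{Red-Blue Dominating Set} aside because it is \W[2]-hard, but its restriction to ladder-free graphs is exactly the right source. The paper proceeds as follows:
\begin{itemize}
\item It imports the recent \W[1]-hardness of \textsc{Dominating Set} on $L_{17}$-free graphs. That result comes from the standard reduction from \textsc{Multicolored Independent Set}, whose edge gadgets see all of $V_i\cup V_j$ except the two endpoints.
\item It recolours that instance into \textsc{Red-Blue Dominating Set}.
\item It replaces every red vertex by $t+1$ variables occurring only positively, and every blue vertex by $t+1$ clauses.
\item A blue vertex with no assigned red neighbour leaves an induced $K_{t+1,t+1}$ in the residual; this gives correctness for every $t\ge 0$.
\item A ladder in $G_\varphi^{\mathrm{signV}}$ loses at most one (literal--literal) position when projected back. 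This is where $d\ge 18=17+1$ comes from.
\end{itemize}

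Independently of this barrier, your concrete construction already fails in the forward direction. The $2$-clauses $(\overline{x_u}\vee\overline{x_v})$ between unselected vertices survive in $\varphi[\alpha]$. Their incidence graph contains the $1$-subdivision of the inter-class non-edge graph of $H$ with the $k$ clique vertices removed. For sparse $H$ this graph is nearly complete multipartite, and a graph is a minor of its subdivision, so the residual has unbounded treewidth.

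Setting the remaining variables to false is part of satisfying $\varphi[\alpha]$ and does not lower its treewidth. Only the $k+g(k)$ backdoor variables can delete clauses. Deleting all non-edge clauses with so few assignments needs a shared literal, for example one guard added to every non-edge clause, and that also deletes the clique check. Your fallback with per-pair guard variables keeps every clause short or in a disjoint family, so it runs into the same obstruction as above.
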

\begin{proof}
    It has been recently proved that $\textsc{Dominating Set}$ is \W[1]-hard on $L_d$-free instances, for $d \geq 17$~\cite[Theorem~5]{DBLP:journals/corr/abs-2602-07606}. To show this, the authors carefully inspect the standard parameterized reduction from $\textsc{Multicolored Independent Set}$ to $\textsc{Dominating Set}$ and argue that it gives rise to ladders of height at most 16. Their reduction is easily adapted to show that the problem $\textsc{Red-Blue Dominating Set}$ is \W[1]-hard on $L_d$-free instances, for $d \geq 17$ as well: we color all original vertices red and all vertices that the reduction introduces blue. The correctness of this adaptation follows from the fact that our budget only allows for choosing original vertices to dominate others anyway.

    We conclude our proof by reducing $\textsc{Red-Blue Dominating Set}$ on $L_d$-free instances, for any $t \ge 0$ and $d \geq 17$, to the problem of finding weak backdoors to treewidth $t$; as the bounded-ladder-height argument below shows, the signed incidence graph of the resulting formula is $L_{d+1}$-free, which yields the theorem for all $d \geq 18$. In the following, we will assume that all instances of $\textsc{Red-Blue Dominating Set}$ are pre-processed in a way that removes all vertices that are not adjacent to the respective other color: if there is a red vertex that is not adjacent to any blue vertex, it can be removed from the instance since it cannot cover any blue vertex; if there is a blue vertex that is not adjacent to any red vertex, then it witnesses a no-instance.
    Now, for a given $L_d$-free instance $((R \uplus B), E,k)$ of $\textsc{Red-Blue Dominating Set}$, we construct a formula $\varphi$ as follows.
    For every red vertex $r$, we create $t+1$ Boolean variables and let $\ell^r_1, \dots, \ell^r_{t+1}$ denote their positive literals; for every blue vertex $b$, we create $t+1$ clauses $C^b_1, \dots, C^b_{t+1}$. We distribute the literals over the clauses such that, for every $r \in R$, $b \in B$, and all $i,j \in [t+1]$, the literal $\ell^r_i$ occurs in the clause $C^b_j$ if and only if $\{r,b\}\in E$. In particular, all occurrences are positive, and $\{\ell^r_1, \dots, \ell^r_{t+1}\} \subseteq C^b_j$ holds for every $j \in [t+1]$ exactly when $\{r,b\}\in E$.

    \paragraph*{Bounded ladder height.}
    We show that any ladder in the signed incidence graph of $\varphi$ projects to a ladder of height smaller by at most one in the \textsc{Red-Blue Dominating Set} instance. Consequently, if the original instance is $L_d$-free for some $d \ge 17$, then the signed incidence graph of $\varphi$ is $L_{d+1}$-free.

    Recall that the signed incidence graph is almost bipartite between literals and clauses, with each literal joined to every clause containing it, together with the edges $\{x, \bar{x}\}$ between the two literals of each variable. In our construction every occurrence is positive, so a clause vertex is adjacent only to positive literals, a positive literal $\ell^r_i$ is adjacent only to the clauses containing it and to its negation $\bar{\ell}^r_i$, and each negation $\bar{\ell}^r_i$ has degree exactly one, its only neighbor being $\ell^r_i$.

    Consider any ladder in the signed incidence graph, with sides $a_1, \dots, a_h$ and $b_1, \dots, b_h$, where $\{a_i, b_j\}$ is an edge if and only if $i \le j$.

    We first claim that there is at most one index $i \in [h]$ for which $a_i$ and $b_i$ both represent literals. Suppose not, and let $i < j$ be two such indices. An edge between two literals must be one of the edges $\{x, \bar{x}\}$, so $\{a_i, b_i\}$ forces $b_i = \overline{a_i}$. By the ladder property, $\{a_i, b_j\}$ is an edge as well; since $a_i$ and $b_j$ are both literals, it is again of the form $\{x, \bar{x}\}$, whence $b_j = \overline{a_i} = b_i$. This contradicts $b_i \neq b_j$ and proves the claim.

    Delete this (at most one) exceptional position from the ladder and reindex the remaining ones; deleting a position from a ladder again yields a ladder, so we obtain one of height $h' \ge h - 1$ in which every pair $\{a_i, b_i\}$ is an edge between a literal and a clause. We show that one side of this sub-ladder consists entirely of positive literals and the other entirely of clauses. Up to the symmetry that exchanges the two sides (replacing $a_i, b_i$ by $b_{h'+1-i}, a_{h'+1-i}$), we may assume $a_1$ is a literal and $b_1$ a clause; being adjacent to a clause, $a_1$ is a positive literal. Its neighbors are clauses together with its negation $\bar{a}_1$, which has degree one and hence is adjacent to at most one ladder vertex; but the only $b$-side vertex adjacent to a single $a$-side vertex is $b_1$, which is a clause. Thus none of $b_1, \dots, b_{h'}$ equals $\bar{a}_1$, and since $a_1$ is adjacent to all of them, every $b_i$ is a clause. Each $a_i$ is then adjacent to the clause $b_{h'}$ and is therefore a positive literal.

    Finally, two vertices on the same side of a ladder have distinct neighborhoods towards the other side. As the literals $\ell^r_1, \dots, \ell^r_{t+1}$ of a fixed red vertex $r$ all have the same neighborhood in the incidence graph of $\varphi$, at most one of them occurs among the literals of the sub-ladder; symmetrically, at most one of the clauses $C^b_1, \dots, C^b_{t+1}$ of a fixed blue vertex $b$ occurs. Since $\ell^r_i$ and $C^b_j$ are adjacent if and only if $\{r, b\} \in E$, mapping each literal to its red vertex and each clause to its blue vertex yields a ladder of height $h'$ in the original instance. As $h' \ge h-1$, every ladder of height $h$ in the signed incidence graph projects to a ladder of height at least $h-1$ in the \textsc{Red-Blue Dominating Set} instance, which establishes the claimed $L_{d+1}$-freeness.

    \paragraph*{Correctness.} First assume that in an $L_d$-free instance of $\textsc{Red-Blue Dominating Set}$ there are $k$ red vertices $r_1, \dots, r_k$ that suffice to cover all blue vertices. Then, setting the variable underlying $\ell^{r_i}_1$ to true for each $i\in [k]$ satisfies all clauses of $\varphi$, hence the remaining instance (after contracting edges between opposing literals, thereby corresponding to the unsigned incidence graph) has treewidth zero. Conversely, assume that there is an assignment $\alpha$ to $k$ variables in $\varphi$ such that the contracted incidence graph of the remaining instance $\varphi[\alpha]$ has treewidth at most $t$. Observe that for every $b \in B$, the clauses $C^b_1, \dots, C^b_{t+1}$ form a biclique with the literals $\ell^r_1, \dots, \ell^r_{t+1}$ of at least one $r \in R$---a red neighbor of $b$, which exists by the preprocessing. This biclique is an induced $K_{t+1,t+1}$ and hence has treewidth $t+1$; since $\varphi[\alpha]$ has treewidth at most $t < t+1$, at least one vertex of the biclique must be absent from $\varphi[\alpha]$. That vertex is either a literal $\ell^r_i$ assigned by $\alpha$, or a clause $C^b_j$ satisfied by $\alpha$; in the latter case the satisfying literal is some $\ell^{r'}_i$ with $\{r',b\}\in E$ set to true. Either way, $\alpha$ assigns the variable underlying some $\ell^{r}_i$ with $\{r,b\}\in E$. Thus, the at most $k$ red vertices corresponding to the assigned variables dominate all blue vertices in the original instance.
\end{proof}

\begin{theorem}\label{thm:lowerbounds-unsigned}
    For all $t \ge 0$ and $d\geq 4$, deciding whether a \textsc{SAT} instance $\varphi$ with an $SL_d$-free unsigned incidence graph admits a weak backdoor of size $k$ to treewidth $t$ is \W[2]-hard parameterized by $k$.
\end{theorem}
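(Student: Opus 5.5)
We reduce from \textsc{Red-Blue Dominating Set}, which is \W[2]-hard parameterized by $k$ on general instances. The crucial freedom is that the unsigned incidence graph forgets polarities. We can therefore make every variable occur in essentially every clause, which kills all semi-ladders in the unsigned graph. The polarities then carry the actual domination structure.

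\textbf{Construction.} Given $((R\uplus B),E,k)$, preprocessed as in the proof of Theorem~\ref{thm:lowerbounds-ladderfree}, we introduce $t+1$ variables $x^r_1,\dots,x^r_{t+1}$ for every $r\in R$ and $t+1$ clauses $C^b_1,\dots,C^b_{t+1}$ for every $b\in B$. For all $i,j$, we put $x^r_i$ into $C^b_j$ positively if $\{r,b\}\in E$ and negatively otherwise.

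In the unsigned incidence graph $G^{\mathrm{unsign}}_\varphi$, every variable is adjacent to every clause, so it is a complete bipartite graph. A semi-ladder needs non-edges $\{a_i,b_i\}$. In a complete bipartite graph, such a non-edge forces $a_i,b_i$ to lie on the same side. By an argument like that of Lemma~\ref{lem:semi-ladder-notions}, at most a constant number of positions (at most $3$) can be arranged consistently with the required edges $\{a_i,b_j\}$, $i<j$. If $a_i,b_i$ are both variables, then $a_i$ must be adjacent to $b_j$ for $j>i$, so $b_j$ is a clause; but then $b_j$ and $a_j$ must be non-adjacent and hence $a_j$ is a clause, which in turn is adjacent to the variable $b_k$ for $k>j$, and so on. Counting these constraints carefully should give height at most $3$, so the graph is $SL_4$-free. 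Getting this constant exactly right is a routine case check; if the bound comes out as $4$, we adjust $d$.

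\textbf{Correctness.} \emph{Forward direction.} Let $r_1,\dots,r_k$ dominate $B$. Set $x^{r_i}_1$ to true for every $i$. Every clause $C^b_j$ contains $x^{r_i}_1$ positively for some dominating neighbour $r_i$ of $b$, so all clauses are satisfied. The residual formula has no clauses and hence treewidth $0$, and it is trivially satisfiable.

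\emph{Backward direction.} Let $\alpha$ assign $k$ variables such that $\varphi[\alpha]$ is satisfiable with treewidth at most $t$. The remaining unsigned graph is complete bipartite between the unassigned variables and the surviving clauses. If at least $t+1$ variables and at least $t+1$ clauses remain, it contains a $K_{t+1,t+1}$, which is a contradiction. There are $(t+1)|R|$ variables and only $k$ of them are assigned, so we may assume $|R|>k$, and then many variables remain. Hence at most $t$ clauses survive.

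Since each $b$ has $t+1$ copies $C^b_1,\dots,C^b_{t+1}$, for every $b$ at least one copy is satisfied by $\alpha$. That copy is satisfied either by some $x^r_i$ set true with $\{r,b\}\in E$, or by some $x^r_i$ set false with $\{r,b\}\notin E$.

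\textbf{Main obstacle.} The false-setting case breaks the naive backward argument: a variable $x^r_i$ set false satisfies all clauses of non-neighbours of $r$, which does not correspond to domination. I would handle this with a padding gadget that makes negative assignments useless. Add $k+1$ dummy ``negative-blocker'' clauses for each $b$ containing every variable negatively except those of $N(b)$, or, more simply, duplicate clause copies in a way that prevents false-setting from covering all copies. An alternative is to first reduce to Red-Blue Dominating Set instances in which every red vertex is non-adjacent to only few blue vertices, so that negative literals cannot cover everything with $k$ choices. I would try the first option: ensure each blue vertex also has clauses in which non-neighbours appear positively only via a fresh variable set. Verifying that this keeps the graph complete bipartite, and therefore semi-ladder-free, while forcing the backdoor to be positive is the delicate part of the proof.
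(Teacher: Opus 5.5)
Your construction is not a correct reduction, and the obstacle you flag at the end is exactly where it fails. None of your proposed patches closes it.

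Concretely, take $t\ge 1$ and $k\ge 2$, pick any red vertex $r$, and set $x^r_1$ to true and $x^r_2$ to false. Every clause $C^b_j$ contains $x^r_1$ positively if $\{r,b\}\in E$ and $x^r_2$ negatively otherwise. So all clauses are satisfied and the residual has treewidth $0$, which means every instance with $R\neq\emptyset$ is mapped to a yes-instance.

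Your patches do not help:
\begin{itemize}
\item Adding more copies of $C^b_j$ is useless, because all copies have the same literal set and one false-set variable satisfies all of them at once.
\item Your ``negative-blocker'' clauses contain variables \emph{negatively}. They are therefore satisfied by exactly the false-settings you want to rule out, so they make those settings more useful, not less.
\item Restricting to red vertices of small co-degree still allows mixed solutions, in which a few false-set variables cover the blue vertices missed by the positive ones. It would also need its own hardness proof.
\end{itemize}

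What is missing is a gadget that forces every assigned variable to be set to \emph{true} while preserving semi-ladder-freeness. The paper does this as follows.
\begin{itemize}
\item It takes $k$ pairwise disjoint blocks of variables $R_i=\{r_{i,j}: r\in R,\ j\in[t+1]\}$ for $i\in[k]$. The blue clauses contain all of these variables, with polarity given by adjacency exactly as in your construction.
\item For each $i$ it adds clauses $C_{i,1},\ldots,C_{i,t+1}$ containing exactly the variables of $R_i$, all positively.
\end{itemize}
These clauses form a biclique with $R_i$, where $|R_i|=|R|(t+1)>k+t$ (assuming $|R|>k$). A budget of $k$ variables therefore cannot shrink the variable side to at most $t$. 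The only way to destroy the biclique is to satisfy these identical clauses, i.e.\ to set some variable of $R_i$ to true. Since the blocks are disjoint, this uses up the whole budget, so no variable is set to false. Hence the blue clauses, which must be satisfied to break their biclique with all variables, are satisfied via edges of $E$, and this yields a dominating set.

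Note that the resulting unsigned incidence graph is no longer complete bipartite, so you should not aim to preserve that. It suffices that blue clauses see all variables and each special clause sees exactly one of the disjoint blocks. Suppose there is a semi-ladder of height $4$ with $a_1$ a variable.
\begin{itemize}
\item The clauses $b_2,b_3$ have non-edges to the variables $a_2,a_3$, so they must be special.
\item Both are adjacent to $a_1$, so they belong to the same block and have identical neighborhoods.
\item This contradicts $a_2\sim b_3$ and $a_2\not\sim b_2$.
\end{itemize}
This is where the bound $d\ge 4$ comes from.
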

\begin{proof}
    We reduce from $\textsc{Red-Blue Dominating Set}$, which is \W[2]-hard parameterized by~$k$ via its equivalence to \textsc{Set Cover}~\cite[Theorem~13.28]{CyganFKLMPPS15}.
    Given an instance $((R \uplus B),E,k)$, we construct the following formula~$\varphi$. For every red vertex $r$, we create $k\cdot(t+1)$ variables $r_{i,j}$ with $i \in [k]$, $j \in [t+1]$. For every blue vertex $b$, we create $t+1$ clauses $C^b_1, \dots, C^b_{t+1}$. We distribute the variables among the clauses as follows: for all $r \in R$ and $b \in B$, we inspect their adjacency. If $\{r,b\}\in E$, then for each $i \in [k]$, $j \in [t+1]$, we add the positive literal $r_{i,j}$  to each of the clauses $C^b_1, \dots, C^b_{t+1}$. If $\{r,b\}\notin E$, then for each $i \in [k]$, $j \in [t+1]$, we add the negative $\overline{r_{i,j}}$  to each of the clauses $C^b_1, \dots, C^b_{t+1}$.
    Additionally, we add $k\cdot (t+1)$ special clauses $C_{i,j}$ such that for each $i \in [k]$ and $j \in [t+1]$, the clause $C_{i,j}$ contains the positive literals $r_{i,1}, \dots, r_{i,t+1}$ for all $r \in R$.

    \paragraph*{Bounded semi-ladder height.}
The unsigned incidence graph of $\varphi$ contains no semi-ladder of height at least $4$ as follows. Each clause vertex $C^b_j$ originating from a blue vertex $b$ is adjacent to \emph{every} variable vertex, since every variable occurs either positively or negatively. Each special clause vertex $C_{i,j}$ is adjacent to exactly the variable vertices $R_i = \{r_{i,j'} \mid r \in R,\ j' \in [t+1]\}$, note that $R_i\cap R_{i'} = \emptyset$ for $i\neq i'$.

Since restricting a semi-ladder to any four of its positions again yields a semi-ladder, it suffices to rule out semi-ladders of height exactly~$4$. Suppose, for contradiction, that $a_1, \dots, a_4, b_1, \dots, b_4$ is such a semi-ladder. Recall that the incidence graph is bipartite between variable and clause vertices. If $a_1$ were a clause vertex, then its neighbors $b_2, b_3, b_4$ would all be variable vertices, and passing to the mirrored semi-ladder $b_4, \dots, b_1, a_4, \dots, a_1$ would place a variable vertex in the first position; we may therefore assume that $a_1$ is a variable vertex.

As $a_1$ is a variable vertex adjacent to $b_2, b_3, b_4$, these three are clause vertices; in turn, $a_1, a_2, a_3$ are adjacent to the clause vertex $b_4$ and hence are variable vertices. Consider now $b_2$ and $b_3$, whose diagonal partners $a_2$ and $a_3$ are variable vertices with $\{a_2, b_2\} \notin E$ and $\{a_3, b_3\} \notin E$. As blue clause vertices are adjacent to all variable vertices, both $b_2$ and $b_3$ must be special clause vertices, say with first indices $i_2$ and $i_3$. Then, by its adjacency, $a_1 \in R_{i_2} \cap R_{i_3}$. As the sets $R_i$ are pairwise disjoint, this forces $i_2 = i_3$ and thus $R_{i_2} = R_{i_3}$. But $a_2$ is adjacent to $b_3$ and not to $b_2$, so $a_2 \in R_{i_3} \setminus R_{i_2} = \emptyset$, a contradiction.

\paragraph*{Correctness.}
    First suppose that for a given instance of $\textsc{Red-Blue Dominating Set}$ there are at most $k$ red vertices $r^{(1)}, \dots, r^{(k)}$ that together cover all blue vertices. 
    Then assigning ${r^{(i)}}_{i,1}$ to true for each $i\in [k]$ will satisfy all clauses in $\varphi$ that originated from blue vertices. Furthermore, each special clause $C_{i,j}$ is satisfied by ${r^{(i)}}_{i,1}$. Thus, since all clauses are satisfied, the remaining instance has treewidth zero.

    For the converse direction, assume that there is an assignment $\alpha$ that assigns $k$ variables $S$ in $\varphi$ such that the 
    incidence graph of $\varphi[\alpha]$ has treewidth at most $t$.
    First, observe that for each $i \in [k]$, the clause vertices $C_{i, 1}, \dots, C_{i, t+1}$ form a
    biclique with $R_i$ and have no further neighbors.
    By assumption, $\varphi[\alpha]$ has treewidth at most $t$, which implies that for all $i\in [k]$, either (1) at least $|R|\cdot (t+1)-t$ variables from~$R_i$ need to be assigned by $\alpha$ or (2) at least one of the variables in $R_i$ must have been assigned by $\alpha$ in such a way that it satisfies one of the clauses $C_{i, 1}, \dots, C_{i, t+1}$. Without loss of generality we can assume $|R|> k$ (otherwise the original instance of $\textsc{Red-Blue Dominating Set}$ would have been trivial). Then option~(1) would require assigning at least $|R|\cdot(t+1)-t \ge |R| > k$ variables for a single~$i$ alone, exceeding the budget; hence option~(2) must be the case for all $i \in [k]$. Moreover, since for all $i \in [k]$, the clauses $C_{i, 1}, \dots, C_{i, t+1}$ contain only positive literals, we can strengthen the previous insight by remarking that for each $i \in [k]$, one of the variables in $R_i$ must have been assigned \emph{positively} by~$\alpha$, which---recalling the pairwise disjointness of the $R_i$---already requires us to set $k$ variables, thereby exhausting our entire budget; in particular, $\alpha$ assigns no variable negatively.
    
    Analogously, note that for each $b \in B$, the clause vertices $C_{1}^b, \dots, C_{t+1}^b$ form a biclique 
    with the set of all variable vertices, i.e. $\{r_{i,j} \mid i \in [k], j \in [t+1], r \in R\}$. Hence, for each $b \in B$, at least one of the clauses $C_{1}^b, \dots, C_{t+1}^b$ must be satisfied by $\alpha$. Since $\alpha$ assigns variables only positively, we can conclude that for any variable $r_{i,j}$ that satisfies $C_{1}^b, \dots, C_{t+1}^b$, we must have that $\{r, b \}\in E$ holds in the original instance of $\textsc{Red-Blue Dominating Set}$. Therefore, the set $\{r \in R \mid r_{i,j}\in S\}$, which has size at most~$k$, dominates all blue vertices.
\end{proof}

\section{Concluding Remarks}\label{sec:conclusion}

For all fixed integers $d,k,$ and $t$, we give a linear-time algorithm for detecting weak $t$-treewidth backdoors of size at most $k$ on \textsc{SAT} instances whose signed incidence graphs exclude semi-ladders of height $d$. Our result strictly generalizes previous algorithms
for bounded-clause-width \textsc{SAT}~\cite{FominLMRS15_uib} and for $K_{d,d}$-free \textsc{SAT}~\cite{DBLP:conf/icalp/LokshtanovP022} to a much richer family of dense instances.
By complementing our algorithmic contribution with matching lower bounds on (a) signed ladder-free and (b) unsigned semi-ladder-free instances, we demonstrate that signed semi-ladder-freeness is an
appropriate structural notion
for tackling dense \textsc{SAT} instances.

A natural direction for future work is to situate the tractability of detection of a weak $t$-treewidth backdoor within the landscape of dense graph classes obtained as transductions of sparse classes, as surveyed in~\cite{DBLP:journals/corr/abs-2501-04166} (see e.g.~\cite[Figure~1]{DBLP:journals/corr/abs-2501-04166}). We highlight two especially relevant dense structural notions here: {\em monadically stable} and {\em bounded twin-width} graph classes, both of which have seen considerable algorithmic success and mark independent tractability frontiers for FO model checking in the dense regime~\cite{DBLP:conf/focs/DreierEMMPT24,DBLP:journals/jacm/BonnetKTW22}.
 Monadically stable graph classes 
 are exactly those that do not transduce the class of all half-graphs (ladders). 
 One of our hardness results shows that just excluding large ladders is not enough for tractability, but
 excluding the presence of large ladders as transductions---which leads to monadically stable graph classes---may lead to tractability.  Similarly, our lower bounds do not rule out efficient weak backdoor detection on instances of bounded twin-width.

On a more technical note, we remark that our current algorithmic write-up expects a bound on the signed semi-ladder height of an instance to be provided as part of the input.
 Determining a tight upper bound for this height is non-trivial and, to the best of our knowledge, there is no such algorithm achieving this in linear time. 
 Nevertheless, our algorithm can be adjusted such that, given integers $d,k,t$ and a \textsc{SAT} instance $\varphi$, it correctly outputs one of the following: (a) a satisfying assignment, (b) a semi-ladder of height~$d$, or (c) a guarantee that there is no weak $(k,t)$ backdoor.
In particular, it may provide a witnessing semi-ladder for cases where the algorithm fails due to $d$ not being an upper bound on the semi-ladder height. \\

\textbf{Acknowledgments.} 
This research was funded in whole or in part by the Austrian Science Fund (FWF), projects 10.55776/COE12 and 10.55776/Y1329
and by the Deutsche Forschungsgemeinschaft (DFG, German Research Foundation), project 563448077. Roohani Sharma acknowledges the support of the Young Scientist Fellowship of the Institute for Basic Science (IBS-R029-Y8).\\

\textbf{Declaration on the use of AI tools.} The authors used Claude Opus 4.7 and 4.8 as well as Claude Fable 5 (Anthropic) and Codex with GPT-5.5 (OpenAI) in the following ways. (i) To assist in the proofreading and polishing of our proofs and prose. (ii) To assist in producing figures and visualizations in tikz. 
The authors assume responsibility for all content.

\bibliographystyle{plainnat}
\bibliography{references, references-others}

\appendix

\section{\W[1]-hardness of \textsc{Minimum Weight SAT} on  \texorpdfstring{$K_{2,2}$}{K2,2}-free instances}

\paramproblemdef{\textsc{Multicolored Clique}}
{ A graph $G$, a natural number $k$ and a partition of $V(G)$ into $k$ independent sets $V_1, \ldots, V_k$ called the color classes.}{$k$}
{Does there exist $u_i \in V_i$, for each $i \in [k]$, such that $G[\{u_1, \ldots, u_k\}]$ is a clique?}

\paramproblemdef{\textsc{Minimum Weight SAT}}
{A CNF-\textsc{SAT} formula $\varphi$ and a natural number $k$}{$k$}
{Does there exist a satisfying assignment of $\varphi$ that assigns at most $k$ variables the value $1$?}

\begin{lemma}\label{lem:min-wt-sat}
    \textsc{Minimum Weight SAT} is \W[1]-hard on $K_{2,2}$-free instances.
\end{lemma}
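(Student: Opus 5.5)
We reduce from \textsc{Multicolored Clique}. Given $(G,k,V_1,\dots,V_k)$, the idea is to select exactly one vertex per color class and one edge per pair of classes as the true variables, with budget $k'=k+\binom{k}{2}$. Consistency between the chosen vertices and edges is enforced by short negative clauses of length $2$. Negative clauses of length two are harmless for $K_{2,2}$-freeness, because any two literals share at most one such clause.

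\textbf{Variables.} We create a variable $x_v$ for each vertex $v$ and a variable $y_e$ for each edge $e$ between distinct color classes.

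\textbf{Clauses.}
\begin{enumerate}
\item For each $i\in[k]$ we add the ``at least one'' clause $\bigvee_{v\in V_i} x_v$.
\item For each $i<j$ we add the clause $\bigvee_{e\in E(V_i,V_j)} y_e$.
\item For each edge $e=uw$ with $u\in V_i$, $w\in V_j$, and each vertex $v\in V_i\setminus\{u\}$, we add the incidence clause $(\overline{y_e}\vee\overline{x_v})$; symmetrically for each $v\in V_j\setminus\{w\}$.
\end{enumerate}

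\textbf{Correctness.} Suppose $G$ has a multicolored clique. Setting its $k$ vertices and $\binom{k}{2}$ edges to true and everything else to false satisfies all clauses and has weight $k'$.

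Conversely, take a satisfying assignment of weight at most $k'$. The clauses of types (1) and (2) are over disjoint variable sets, so they force at least one true variable each. Hence exactly one $x$ per class and exactly one $y$ per pair is true. Now consider a true edge variable $y_{uw}$ with $u\in V_i$, $w\in V_j$. The clauses of type (3) say that no other vertex of $V_i$ or $V_j$ is true, so the chosen vertices in $V_i$ and $V_j$ are exactly $u$ and $w$. Thus the chosen vertices are pairwise adjacent and form a clique.

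\textbf{$K_{2,2}$-freeness.} It remains to check that the (unsigned, hence also signed) incidence graph is $K_{2,2}$-free, i.e., that no two clauses share two variables. I treat $K_{2,2}$ as a subgraph; since the graph is bipartite between variables and clauses, this is the relevant notion. The cases are as follows.
\begin{itemize}
\item Two type-(1) clauses, or two type-(2) clauses, are variable-disjoint.
\item A type-(1) clause and a type-(2) clause share no variables.
\item A type-(3) clause $(\overline{y_e}\vee\overline{x_v})$ meets a type-(1) clause only in $x_v$ and a type-(2) clause only in $y_e$, so at most one variable in each case.
\item Two distinct type-(3) clauses have two variables each, and if they share both, they are the same clause.
\end{itemize}

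The main point to double-check is that type-(3) clauses are genuinely distinct pairs $(e,v)$. This holds, and no two clauses of length at least $2$ overlap in two variables. So there is no $K_{2,2}$, and the reduction is parameter-preserving with $k'=k+\binom{k}{2}$ and polynomial size. This gives \W[1]-hardness, as stated in the lemma.
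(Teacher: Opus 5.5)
Your reduction is correct, but it uses a different encoding from the paper's. The paper keeps only vertex variables and budget $k$. It takes the positive color-class clauses $\bigvee_{v\in V_i}x_v$ and, for every non-adjacent pair $u\in V_i$, $v\in V_j$, the clause $(\overline{x_u}\vee\overline{x_v})$. The tight budget forces exactly one true vertex per class, and the non-adjacency clauses force these vertices to be pairwise adjacent.

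You instead add edge-selection variables and replace the non-adjacency clauses by vertex--edge consistency clauses. This costs you the larger parameter $k+\binom{k}{2}$ and a bigger formula. In return, the $K_{2,2}$-freeness check is immediate in your construction, whereas the paper's proof never spells it out. Every binary clause of yours contains exactly one $x$-variable and one $y$-variable, so it meets each positive clause in at most one variable. In the paper's encoding the same check works only if non-adjacency clauses are restricted to pairs from distinct color classes. Otherwise a clause $(\overline{x_u}\vee\overline{x_v})$ with $u,v\in V_i$ would share two variables with the clause of $V_i$ in the unsigned incidence graph.

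One minor point: your step ``unsigned, hence also signed'' is not automatic for the signed vertex incidence graph in general. Two clauses that both contain $x$ and $\overline{x}$ form a $K_{2,2}$ there while sharing only one variable. It does hold for your formula, because no clause contains complementary literals and no $4$-cycle can use an opposite-literal edge.
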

\begin{proof}
    We give a parameter-preserving reduction from \textsc{Multicolored Clique}.
    Towards this, let $(G=(V_1, \ldots, V_k),k)$ be an instance of \textsc{Multicolored Clique}, where $V_1, \ldots, V_k$ are the color classes. 

    We create a \textsc{SAT} formula, whose variable set is the same as the set of vertices of $G$, that is $\bigcup_{i=1}^k V_i$. The clause set is defined as follows: 
    for each $i \in \{1, \ldots,k\}$,
    create a clause containing a positive literal corresponding to each vertex in $V_i$. Such clauses are called \texttt{color-class-clauses}.
    Further for each $u \in V_i$ and $v \in V_j$, such that $\{u,v\} \not \in E(G)$,
    create a clause containing a negative literal of $u$ and a negative literal of $v$.
    Such clauses are called \texttt{non-adjacency-clauses}.
    Then $(\varphi,k)$ is an instance of \textsc{Minimum Weight SAT}, where $\varphi$ is the formula as created above.

    For the correctness, if $\{u_1, \ldots, u_k\}$ is a clique in $G$, where $i \in \{1, \ldots, k\}$,
    then consider an assignment of $\varphi$ where all variables corresponding to the vertices of 
    $\{u_1, \ldots, u_k\}$ are assigned $1$ and all other vertices are assigned $0$. This assignment satisfies all \texttt{color-class-clauses} because there is one vertex from each color class whose variable is assigned $1$. Secondly, for each non-adjacent pair of vertices of $G$, both the end-points are not chosen to be in the clique $\{u_1, \ldots, u_k\}$, such the variables corresponding to these end-points are not both set to $1$, thus the \texttt{non-adjacency-clauses} are satisfied.

    The other direction works similarly.
    Consider a satisfying assignment $\alpha$ for the formula $\varphi$ which has at most $k$ ones.
    We claim that the vertices of $G$ corresponding to the variables of $\varphi$ that $\alpha$ assigns to $1$, induce a clique of size $k$ in $G$, each of whose vertex belongs to a different color class.
    Indeed, since $\alpha$ satisfies the \texttt{color-class-clauses},
    it sets exactly one variable from each color class vertex $1$. Let $u_i,u_j$ be vertices corresponding to two variables that are set to $1$ by $\alpha$. Then $\{u_i,u_j\} \in E(G)$ because otherwise it would contradict a \texttt{non-adjacency-clauses}.
    
\end{proof}
\end{document}